\newtheorem{fact}[theorem]{Fact}
\newcolumntype{C}{>{\centering\arraybackslash}p{0.27\textwidth}}
\newcommand{\dims}{d}
\newcommand{\opnorm}[1]{{\left\|#1\right\|}_{\text{op}}}
\newcommand{\tracenorm}[1]{{\left\|#1\right\|}_{1}}
\newcommand{\hsnorm}[1]{{\left\|#1\right\|}_{\text{HS}}}
\newcommand{\barDelta}{{\bar{\Delta}}}
\newcommand{\ptb}{{z}}
\newcommand{\ptbDistr}{{\mathcal{D}_\ell}}
\newcommand{\supparen}[1]{^{(#1)}}
\newcommand{\cd}{{c}}
\newcommand{\cop}{\kappa}
\newcommand{\ismixed}{\texttt{YES}}
\newcommand{\notmixed}{\texttt{NO}}
\newcommand{\bfP}{\mathbf{P}}
\newcommand{\bfQ}{\mathbf{Q}}
\def\multiset#1#2{\ensuremath{\left(\kern-.3em\left(\genfrac{}{}{0pt}{}{#1}{#2}\right)\kern-.3em\right)}}
\newcommand{\qs}{\rho}
\newcommand{\qmm}{{\rho_{\text{mm}}}}
\newcommand{\qkn}{{\rho_0}}
\newcommand{\HH}{\mathbb{H}}
\newcommand{\Herm}[1]{{\HH_{#1}}}
\newcommand{\qbit}[1]{|#1\rangle}
\newcommand{\qadjoint}[1]{\langle#1|}
\newcommand{\qproj}[1]{\qbit{#1}\qadjoint{#1}}
\newcommand{\qoutprod}[2]{\qbit{#1}\qadjoint{#2}}
\newcommand{\qdotprod}[2]{\langle#1|#2\rangle}
\newcommand{\hdotprod}[2]{\left\langle#1,#2\right\rangle}
\newcommand{\matdotprod}[3]{\langle#1|#2|#3\rangle}
\newcommand{\eye}{\mathbb{I}}
\newcommand{\VecOp}{\text{vec}}
\newcommand{\Real}{\text{Re}}
\newcommand{\Img}{\text{Im}}
\newcommand{\bx}{\mathbf{x}}
\newcommand{\Luders}{\mathcal{H}}
\newcommand{\avgLuders}{{\bar{\Luders}}}
\newcommand{\Choi}{{\mathcal{C}}}
\newcommand{\avgChoi}{{\bar{\Choi}}}
\newcommand{\hbasis}{{\mathcal{V}}}
\DeclareMathOperator{\diag}{diag}
\DeclareMathOperator{\Span}{span}
\newcommand{\POVM}{\mathcal{M}}
\title{The role of shared randomness in quantum state certification with unentangled measurements}
\author{\begin{tabular}{C C }
  Yuhan Liu$^{\textcolor{red}{*}}$  & Jayadev Acharya\thanks{This research is supported by NSF-CCF-1815893, NSF-CCF-1846300 (CAREER), and a Google Research Scholar Grant.}  \\
 Cornell University & Cornell University \\ 
\small \texttt{yl2976@cornell.edu} &\small \texttt{acharya@cornell.edu} 
\end{tabular}
}
\begin{document}

\maketitle
\begin{abstract}
Given $n$ copies of an unknown quantum state $\rho\in\mathbb{C}^{d\times d}$, quantum state certification is the task of determining whether $\rho=\rho_0$ or $\|\rho-\rho_0\|_1>\varepsilon$, where $\rho_0$ is a known reference state.
We study quantum state certification using unentangled quantum measurements, namely measurements which operate only on one copy of $\rho$ at a time.
When there is a common source of shared randomness available and the unentangled measurements are chosen based on this randomness, prior work has shown that $\Theta(d^{3/2}/\varepsilon^2)$ copies are necessary and sufficient. This holds even when the measurements are allowed to be chosen adaptively.
We consider deterministic measurement schemes (as opposed to randomized) and demonstrate that ${\Theta}(d^2/\varepsilon^2)$ copies are necessary and sufficient for state certification. This shows a separation between algorithms with and without shared randomness. 

We develop a unified lower bound framework for both fixed and randomized measurements, under the same theoretical framework that relates the hardness of testing to the well-established L\"uders rule. More precisely, we obtain lower bounds for randomized and fixed schemes as a function of the eigenvalues of the L\"uders channel which characterizes one possible post-measurement state transformation.
    
\end{abstract}

\section{Introduction}
We study the problem of quantum state certification~\cite{ODonnellW15,wright2016learn, BadescuO019}, 
where we are given $\ns$ copies of an \textit{unknown} quantum state with density $\qs\in\C^{\dims\times \dims}$, {and complete description of a known state $\qkn$}. 
The goal is to use quantum measurements to test whether $\qs = \qkn$ or $\tracenorm{\qs-\qkn}>\eps$, where $\tracenorm{\cdot}$ is the trace norm.  
A special case of this problem is \emph{mixedness testing}, which is the case when $\qkn=\qmm \eqdef\eye_{\dims}/\dims$ is the maximally mixed state. 
Quantum certification is motivated by practical applications where one wants to verify whether the output state of a quantum algorithm is indeed the state we desire.

{A related problem is of \emph{closeness testing}, where we are given copies of two \emph{unknown} states $\qs$ and $\qkn$ and the goal is again to test whether $\qs = \qkn$ or $\tracenorm{\qs-\qkn}>\eps$. The motivation to study this problem is to test whether two quantum algorithms produce the same state.}

We are interested in determining how many copies of the unknown state(s) are needed to perform the task of testing. This task of understanding the copy complexity quantum state certification  was studied initiated in~\cite{ODonnellW15} and later in~\cite{BadescuO019}. They showed that when we are allowed to perform arbitrary entangled quantum measurements over the $\ns$ copies, then $\ns=\Theta(d/\eps^2)$ copies are necessary and sufficient for testing. However, entangled measurements are currently infeasible to implement in practice, even for moderate values of $\ns$ and $\dims$.  It is desirable to use unentangled measurements, where a quantum measurement is done on one copy of $\rho$ (and $\qkn$ if it is also unknown) at a time. Such \emph{unentangled measurements} (also referred to as \emph{incoherent} and \emph{independent} in previous literature) can be categorized into three different protocols:

\begin{enumerate}
\item \textbf{Fixed/Deterministic measurements.} The set of measurements(POVMs) to be performed are fixed \emph{ahead of time}. Once the copies of the quantum states are available, we use these fixed measurements for the task of testing. A key advantage of such protocols is that the same set of measurements can be used for multiple repetitions of the testing problem. Moreover, there is no latency since the measurements are not designed after the states are made available, which is a drawback of the following protocols.

\item \textbf{Randomized non-adaptive measurements.} In this setting, there is common randomness available, and the set of measurements at the different copies are all chosen simultaneously as a function of this common randomness.
Every time we want to test for a state, we need to instantiate the common randomness and select the set of measurements using a new instance of the common randomness. This is done after the copies of the state are made available.\footnote{If the set of measurements is finite, we can prepare all measurements beforehand and sample with classical randomness. However, this could still be difficult if the set is very large.}
\item \textbf{Randomized adaptive measurements.} In this setting common randomness is available across the measurements. Furthermore, the measurements are applied sequentially to each copy of $\qs$, and the measurement on the next copy of $\qs$ can depend on the outcome of previous measurements\footnote{In principle, one can use the first measurement outcome as a source of shared randomness for all other measurements, so adaptive measurements are essentially randomized.}. 
A primary drawback of this scheme is the latency and complications associated with designing measurements one after another.
\end{enumerate}

\subsection{Prior Works}
{\cite{ODonnellW15} initiated the study of copy complexity of the task of quantum state certification. They considered entangled measurements and showed that  $\ns=\Theta(d/\eps^2)$ copies are necessary and sufficient for testing. This is also the copy complexity of closeness testing~\cite{BadescuO019}.}

{Given the practical relevance of unentangled measurements, it has been considered in several prior works. For the task of quantum mixedness testing,~\cite{BubeckC020} showed that when randomized non-adaptive unentangled measurements are allowed, then $\ns=\Theta(d^{3/2}/\eps^2)$ copies are necessary and sufficient. \cite{ChenLO22instance} extended the results to the cases when $\qkn$ need not be the maximally mixed state, and also when it is unknown (closeness testing).}

The role of adaptivity was studied in~\cite{Chen0HL22}, who showed that even with adaptive measurements, the number of copies necessary is still $\ns=\Omega(d^{3/2}/\eps^2)$. In other words, adaptivity does not help improve the copy complexity of testing.
~\cite{Yu2023almost} achieved $\ns =\tilde{\Theta}(\dims^2/\eps^2)$ using randomly sampled Pauli measurements, which are more restrictive yet easier to implement. 
The drawback of all the algorithms proposed in these works is the necessity of randomization in the measurements.

\paragraph{Quantum tomography.}
In \emph{quantum tomography},  the goal is to estimate the unknown state $\qs$ to within $\eps$ in trace distance.~\cite{ODonnellW16,HaahHJWY17} established the optimal copy complexity for this task as $\Theta(\dims^2/\eps^2)$ with entangled measurements. For unentangled measurements, various works~\cite{kueng2017low,HaahHJWY17,chen2023does} have shown that for quantum tomography, $\Theta(\dims^3/\eps^2)$ are necessary and sufficient to estimate a full-rank $\rho$, even when adaptivity is allowed~\cite{chen2023does}.~\cite{guctua2020fast} showed that the bound is achievable up to log factors using \emph{fixed} structured POVMs (e.g. SIC-POVM~\cite{zauner1999grundzuge,renes2004symmetric}, maximal MUB~\cite{klappenecker2005mutually}).

\subsection{New results}

We consider state certification with fixed measurements, where the POVMs are fixed ahead of time, and can be used for multiple repetitions of the problem.

The naive solution is to apply the fixed unentangled measurements for quantum tomography~\cite{guctua2020fast} giving an upper bound of $\ns=\bigO{d^{3}/\eps^2}$. However, since tomography is strictly harder than testing, we might expect to do much better than $\dims^3$.
Indeed,~\cite{Yu21sample} designed a simple algorithm with fixed measurements that achieves $O(\dims^2/\eps^2)$ copy complexity. 
The lower bound, however, was left as an outstanding open problem. Without shared randomness, it is unknown whether $\ns=O(d^{3/2}/\eps^2)$ copies are sufficient to perform quantum state certification, or if we need more copies due to the lack of randomness. 

{We establish the copy complexity of quantum state certification with fixed unentangled measurements. Our main result, stated below, shows that indeed there is a cost in copy complexity that we have to pay for having schemes that are \emph{fixed and reusable}. Please see Section~\ref{sec:set-up} for the formal problem definition.

\begin{theorem}
\label{thm:main}
For unentangled POVMs without randomness, $\ns =\Theta(\dims^2/\eps^2)$ copies are necessary and sufficient to test whether $\qs=\qkn$ or $\tracenorm{\qs-\qkn}>\eps$ in the trace distance with probability at least 2/3. 
\end{theorem}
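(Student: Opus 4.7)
The matching upper bound $\ns = O(\dims^2/\eps^2)$ is already established by \cite{Yu21sample}, who analyzes an explicit fixed unentangled POVM scheme attaining this rate, so the task is to prove the lower bound $\ns = \Omega(\dims^2/\eps^2)$. I would reduce to the mixedness case $\qkn=\qmm=\eye/\dims$, fix an arbitrary deterministic scheme specified by the POVM sequence $\{M^{(t)}\}_{t=1}^{\ns}$, and use a Paninski-style randomized alternative $\qs_z = \qmm + (\eps/\dims)\Delta_z$, where $\Delta_z$ is a random traceless Hermitian with $\tracenorm{\Delta_z}\geq \dims$ almost surely, so that $\tracenorm{\qs_z-\qmm}\geq \eps$. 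By the Le Cam / Ingster--Suslina method, correctness of the scheme requires
\[
\chi^2\!\left(\EE_z\!\left[P_z^{\otimes\ns}\right]\,\Big\|\,P_0^{\otimes\ns}\right)=\Omega(1),
\]
where $P_z$ is the product outcome distribution of the fixed scheme applied to $\qs_z$, and $P_0$ is the corresponding distribution under $\qmm$.

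The central calculation is to upper bound this chi-squared by a quadratic form in the second-moment tensor of $\Delta_z$ controlled by the Lüders channels $\Luders_{M^{(t)}}(X)=\sum_j \sqrt{M^{(t)}_j}\,X\,\sqrt{M^{(t)}_j}$ of the individual measurements. A standard expansion of the product chi-squared produces a bound of the schematic form
\[
\chi^2 \;\lesssim\; \EE_{z,z'}\!\sum_{t=1}^{\ns}\hdotprod{\Delta_z}{\mathcal{S}^{(t)}\Delta_{z'}},
\]
where each $\mathcal{S}^{(t)}$ is a positive superoperator on traceless Hermitians whose spectrum is determined by $\Luders_{M^{(t)}}$. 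This is the unified Lüders-based framework promised in the abstract: for a randomized scheme, averaging the aggregated operator $\bar{\mathcal{S}}=\sum_t\mathcal{S}^{(t)}$ over the shared randomness forces it to be nearly isotropic on traceless Hermitians by symmetries of the ensemble, which recovers the $\dims^{3/2}/\eps^2$ bound of \cite{BubeckC020,Chen0HL22}.

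For a deterministic scheme no such symmetrization is available, so $\Delta_z$ may be chosen adversarially against the specific $\bar{\mathcal{S}}$ realized by the scheme. The POVM identity $\sum_j M^{(t)}_j=\eye$ controls the trace of $\bar{\mathcal{S}}$ on the $(\dims^2-1)$-dimensional traceless Hermitian space, and a pigeonhole argument produces an $\Omega(\dims^2)$-dimensional subspace $V$ on which every eigenvalue of $\bar{\mathcal{S}}$ is a factor $\Theta(\dims)$ smaller than its average. Supporting the distribution of $\Delta_z$ on $V$ converts this eigenvalue saving into a $\dims$-factor improvement over the randomized case, so the resulting chi-squared bound becomes $\chi^2 = O(\ns\eps^2/\dims^2)$, and $\chi^2=\Omega(1)$ forces $\ns=\Omega(\dims^2/\eps^2)$.

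The main obstacle I anticipate is constructing a distribution on $\Delta_z$ that is simultaneously (i) supported in the low-eigenvalue subspace $V$ of $\bar{\mathcal{S}}$, (ii) of trace norm at least $\dims$ so that the separation $\tracenorm{\qs_z-\qmm}\geq\eps$ holds, and (iii) compatible with $\qs_z\succeq 0$. Generic random directions in an $\Omega(\dims^2)$-dimensional subspace of Hermitians achieve trace norm only $\Theta(\sqrt{\dims})$ rather than $\dims$, so saturating the pigeonhole bound requires either a careful choice of orthonormal basis of $V$ (for instance via a Haar-random rotation within $V$ combined with a truncation/discretization argument) or a direct control of the cross terms $\EE_{z\neq z'}[\hdotprod{\Delta_z}{\bar{\mathcal{S}}\Delta_{z'}}]$ in the Ingster--Suslina expansion in the style of \cite{Chen0HL22}. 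Simultaneously matching the pigeonhole dimension $\Omega(\dims^2)$, the trace-norm normalization, and the second-moment cancellations is the technical heart of the lower bound.
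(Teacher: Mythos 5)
Your outline follows essentially the same route as the paper: reduce to mixedness testing, cite \cite{Yu21sample} for the upper bound, use a Le Cam/Ingster--Suslina second-moment argument, express the decoupled chi-squared in terms of the L\"uders channels of the individual POVMs, and exploit the trace constraint $\Tr[\avgLuders]=\dims$ together with a Markov/pigeonhole step to find an $\Omega(\dims^2)$-dimensional eigenspace of $\avgLuders$ with eigenvalues $O(1/\dims)$, on which the adversarial perturbation is supported. This is exactly the paper's max-min framework, and your observation that the same machinery recovers the $\dims^{3/2}/\eps^2$ randomized bound matches the paper's min-max corollary.

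The one step you leave genuinely open --- constructing a distribution on $\Delta_z$ that lives in the low-eigenvalue subspace, keeps $\sigma_z\succeq 0$, and is $\eps$-far in trace norm --- is where the paper's main new ingredient sits, and your pessimism about it is misplaced because of a normalization slip. The paper takes $\Delta_z=\frac{c\eps}{\sqrt{\dims\ell}}\sum_{i=1}^{\ell}z_iV_i$ with $\ell=\dims^2/2$, Rademacher signs $z_i$, and $V_i$ the low eigenvectors of $\avgLuders$; then $\hsnorm{\Delta_z}=c\eps/\sqrt{\dims}$ deterministically, and a new concentration theorem (\cref{thm:rand-mat-opnorm-concentration}, proved via an $\epsilon$-net and sub-Gaussian concentration for $\normtwo{Wx}$, valid for an \emph{arbitrary} orthonormal basis $\{V_i\}$, not just entrywise-independent ensembles) gives $\opnorm{\sum_iz_iV_i}=O(\sqrt{\dims})$ with probability $1-2e^{-\dims}$. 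H\"older's inequality $\tracenorm{A}\ge\hsnorm{A}^2/\opnorm{A}$ then yields $\tracenorm{\Delta_z}\ge\eps$ and $\opnorm{\Delta_z}\le 1/\dims$ simultaneously: the ``generic'' ratio $\tracenorm{\cdot}/\hsnorm{\cdot}=\Theta(\sqrt{\dims})$ that you flag as insufficient is in fact exactly what is needed once the HS norm is set to $\Theta(\eps/\sqrt{\dims})$ rather than $\Theta(1)$. The residual event where the bound fails is handled by the almost-$\eps$-perturbation relaxation of Le Cam (\cref{lem:le-cam-approx-eps-perturbation}) and a truncation of $\Delta_z$, and the cross terms in the expansion are controlled by a Rademacher-chaos MGF bound rather than only the first-order term you wrote. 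So your plan is the right one, but the ``technical heart'' you identify is not optional hedging between two speculative routes; it is a concrete random-matrix concentration statement that must be proved for the argument to close.
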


\cref{tab:results} places our work in the context of existing results for other types of measurements. There is a strict $\Theta(\sqrt{\dims})$-factor separation between fixed and randomized non-adaptive schemes. We note that the shared randomness source can be entirely independent of the quantum states, so it is in some sense surprising that a piece of irrelevant random information leads to substantial improvement in copy complexity. This demonstrates that shared randomness is a valuable and important resource in unentangled quantum state certification.
\begin{table}[H]
    \def\arraystretch{1.6}
        \centering
        \begin{tabular}{|c | c | c | c |}
        \hline
       \multicolumn{2}{|c|}{Measurement type}  &  Upper bound & Lower bound \\ \hline
            \multicolumn{2}{|c|}{Entangled} & $\frac{\dims}{\eps^2}$ \cite{ODonnellW15, BadescuO019}  & $\frac{\dims}{\eps^2}$ \cite{ODonnellW15} \\\hline
            \multirow{3}{*}{Unentangled} & Adaptive  &  $\frac{\dims^{3/2}}{\eps^2}$  \cite{BubeckC020, ChenLO22instance} & $\frac{\dims^{3/2}}{\eps^2}$\cite{Chen0HL22}\\ \cline{2-4}
            & Randomized &  $\frac{\dims^{3/2}}{\eps^2}$ \cite{BubeckC020,ChenLO22instance}  &  $\frac{\dims^{3/2}}{\eps^2}$ \cite{BubeckC020} \\ \cline{2-4}
            & Fixed &  $\frac{\dims^2}{\eps^2}$ \cite{Yu21sample} & $\frac{\dims^2}{\eps^2}$ \textbf{(This work)}\\\hline
        \end{tabular}
        \caption{Existing and new worst-case copy complexity results for quantum state certification.}
        \label{tab:results}
    \end{table}
    
We develop an information-theoretic framework that provides a unified lower bound for testing with unentangled non-adaptive measurements.  Using this framework, we obtain a simpler proof (in the sense that we do not need Weingarten calculus) for the lower bound of $\ns=\Omega(d^{3/2}/\eps^2)$ established in~\cite{BubeckC020} for randomized measurements.

\subsection{Our techniques}

Our main contribution is a unified lower bound framework for quantum state certification that works for both randomized and fixed non-adaptive schemes. Before we introduce the technical contributions, we provide a high-level explanation of why shared randomness is crucial in unentangled quantum testing and motivate with a simple example.

\paragraph{The disadvantage of fixed measurements through a simple example.}
\label{sec:fixed-disadvantage}

In randomized schemes, given the copies of the state, we then choose the measurements randomly. However, for fixed measurements, the measurement scheme is fixed and the state is then chosen. Thus, without randomness, nature would have the opportunity to \textit{adversarially} design a quantum state that fools the pre-defined set of measurements. When shared randomness is available, we can avoid the bad effect of adversarial choice of quantum states. In principle, this qualitative gap is like the difference between randomized algorithms and deterministic algorithms.

We use a simple example to demonstrate this idea. Suppose we choose each measurement $\POVM_i$ simply to be the same canonical basis measurement, i.e. $\POVM_i=\{\qproj{x}\}_{x=0}^{\dims-1}$. Then nature can set $\rho$ to be the ``+'' state where
\begin{equation}
    \rho = \qproj{\phi},\quad \qbit{\phi} = \frac{1}{\sqrt{\dims}}\sum_{x=0}^{\dims-1}\qbit{x}.
    \label{equ:plus-state}
\end{equation}

Note that the trace distance $\frac{1}{2}\tracenorm{\rho-\qmm}=1-1/\dims\simeq 1$. When the underlying state is $\rho$, all measurement outcomes $x_i$ would be independent samples from the uniform distribution over $\{0, \ldots, d-1\}$. However, if the state is the maximally mixed state $\qmm$, the distribution of each measurement outcome would also be the uniform distribution over $\{0, \ldots, d-1\}$. Thus, even though the trace distance between $\rho$ and $\qmm$ is large, the measurement scheme is completely fooled.

On the other hand, with shared randomness, one can (theoretically) sample a basis uniformly from the Haar measure as in \cite{BubeckC020} to easily avoid this issue. No fixed $\rho$ would be able to completely fool the randomized basis measurement sampled uniformly. In fact with high probability, the randomly sampled basis is good in the sense that the outcome distribution would be far enough when the two states $\rho$ and $\qmm$ are far (see~\cite[Lemma 6.3]{ChenLO22instance}).

\paragraph{A novel lower bound construction.} 

The design of hard instances has to account for the difference illustrated above when proving lower bounds for randomized and fixed measurements. In particular, for randomized measurements, the lower bound construction can be \emph{measurement independent}. However, for fixed measurements, since the states can be chosen adversarially, the lower bound construction needs to be \emph{measurement-dependent}.

Many prior works on testing and tomography~\cite{ODonnellW15,ODonnellW16,HaahHJWY17, BubeckC020, ChenCH021,Chen0HL22} use \textit{measurement-independent} distributions over states in $\mathbb{C}^{\dims\times\dims}$ to prove lower bounds. In particular, \cite{BubeckC020, Chen0HL22} show that testing within a specific class requires at least $\ns=\Omega(d^{3/2}/\eps^2)$ when working with \emph{randomized} and \emph{adaptive} unentangled measurements respectively. Unfortunately, for these measurement-independent constructions, there must exist fixed measurement schemes whose copy complexity is $\ns=O(d^{3/2}/\eps^2)$ due to standard derandomization arguments. To prove a stronger lower bound for fixed measurements, our task is different. We must show that for any fixed measurement scheme, we can design a hard instance of the testing problem that would require at least $\ns=\Omega(\dims^2/\eps^2)$. We note that the lower bound construction in~\cite{Yu2023almost} is measurement-dependent, but specifically tailored to Pauli measurements and not general enough for our purpose.

Our generic \textit{measurement-dependent} lower bound construction is a necessary and novel contribution that leads to tight lower bounds for state certification with fixed measurements. It takes the form
\[
\sigma_z=\qmm + \frac{\eps}{\sqrt{\dims}}\cdot\frac{\cd}{\dims}\sum_{i=1}^{\dims^2/2} z_iV_i,
\]
where $\{V_i\}_{i=1}^{\dims^2-1}$ are $\dims^2-1$ orthonormal trace-0 Hermitian matrices, $z=(z_1, \ldots, z_{\dims^2/2})$ are uniformly sampled from $\{-1, 1\}^{\dims^2/2}$, and $\cd$ is an absolute constant. 
In essence, we perform independent binary perturbations along different trace-0 directions. We show with appropriate choice of $c$, regardless of the choice of $\{V_i\}_{i=1}^{\dims^2-1}$, 
with high probability over the randomness of $z$, $\sigma_z$ is a valid quantum state and $\eps$-far in trace distance from $\qmm$.
The matrices $V_i$'s can be chosen \emph{dependent} on the fixed measurement scheme that we want to \emph{fool}.
In particular, the perturbations $V_1, \ldots, V_{\dims^2/2}$ can be chosen in directions about which the fixed measurement schemes provide the least information.
The matrices $V_i$'s can also be fixed, in which case the construction is measurement-independent and our framework naturally leads to the lower bound for randomized non-adaptive measurements in~\cite{BubeckC020}.

The binary perturbations in our construction are mathematically easier to handle. In prior works~\cite{BubeckC020, HaahHJWY17, ChenCH021} designed the hard cases using random unitary transformations around the maximally-mixed state, which requires difficult calculations using Weingarten calculus~\cite{weingarten1978asymptotic,collins2003moments}. In contrast, our arguments avoid the difficult representation-theoretic tools. \cite{Chen0HL22, chen2023does} used Gaussian orthogonal ensembles, which perturbs each matrix entry with independent Gaussian distributions. Binary perturbations share many statistical similarities with Gaussian since both are sub-gaussian distributions. However, the former is arguably simpler as the support is finite, and thus information-theoretic tools can be more easily applied.

We note that these constructions are all in spirit motivated by lower bounds in classical discrete distribution testing where the hard instances are constructed as perturbations around the uniform distribution~\cite{Paninski08}.

\paragraph{Unified lower bound framework.}
For both fixed and randomized schemes, we find that perhaps very coincidentally or very naturally, the ability of a measurement scheme to distinguish between quantum states is characterized by the eigenvalues of the L\"uders channel~\cite{debrota2019luders}\footnote{This was referred to as the expected density operator in~\cite[Section 3.3]{khatri2021principles} } $\Luders:\C^{\dims\times\dims}\mapsto\C^{\dims\times\dims}$ defined by \textit{the L\"uders rule}~\cite{luders1950zustandsanderung},
\[
\Luders(\rho)\eqdef \sum_{x}\sqrt{M_x}\rho\sqrt{M_x}.
\]
where $\{M_x\}_x$ is a rank-1 POVM and $\sqrt{M_x}$ is the unique p.s.d. square root of $M_x$.  This channel describes one possible post-measurement state of the POVM $\{M_x\}_x$ if the measurement outcome is not available. 
Indeed, if two quantum states $\rho$ and $\sigma$ yield the same post-measurement state, it is natural to believe that the two states cannot be distinguished by the measurement scheme. 
This is in spirit similar to the chi-squared contraction framework in~\cite{AcharyaCT19} which was used to derive a unified lower bound for information-constrained inference of classical distributions in the distributed setting.
Depending on whether shared randomness is available, the copy complexity depends on different norms of the $\Luders$ channel, as shown in~\cref{tab:result-luders}, where $\tracenorm{\Luders}$ and $\hsnorm{\Luders}$ are the trace and Hilbert-Schmidt/Frobenius norms} of $\Luders$. The precise statement is stated in Theorem~\ref{thm:maxmin-and-minmax-ub}. 

\begin{table}[H]
    \centering
    \def\arraystretch{2}
    \begin{tabular}{|c|c|c|}
    \hline
         &  Fixed & Randomized\\ \hline
        Lower bound & $\frac{\dims^2}{\eps^2}\cdot \frac{\dims}{\max_{\Luders}\tracenorm{\Luders}}$ & $\frac{\dims^2}{\eps^2\max_{\Luders}\hsnorm{\Luders}}$ \\ \hline
    \end{tabular}
    \caption{Copy complexity lower bound of non-adaptive state certification in terms of the L\"uders channel.}
    \label{tab:result-luders}
\end{table}

It is straightforward to prove that $\hsnorm{\Luders}^2\le\tracenorm{\Luders}\le \dims$, and thus we obtain tight copy complexity lower bounds for both fixed and randomized non-adaptive measurements.

Our lower bound results yield a very natural quantum interpretation. We believe that a similar characterization using L\"uders channel could be applied to adaptive measurements and other problems such as quantum tomography, similar to~\cite{AcharyaCT19,ACLST22iiuic} which developed a unified framework for distributed learning and testing of discrete distributions.

\subsection{Related works}
\paragraph{Learning information about quantum states.} The literature is vast and here we only discuss the most relevant ones. Our work falls into the line of quantum state certification~\cite{ODonnellW15,BubeckC020,Chen0HL22}. In addition to worst-case bounds that depend on $\dims$, \cite{ChenLO22instance,Chen0HL22} considered general quantum state certification where the copy complexity decreases when the reference state is approximately low rank. Other closeness measures such as fidelity and Bures $\chi^2$-divergence were also considered in~\cite{BadescuO019}. Many works have studied other related problems such as closeness testing~\cite{BadescuO019,Yu21sample, Yu2023almost} (test whether \textit{two unknown} states $\rho$ and $\sigma$ are equal or $\eps$ far), hypothesis testing~\cite{ogawa2000strong, brandao2020adversarial,regula2023postselected} (distinguish between two known states), and hypothesis selection~\cite{BadescuO21, fawzi:hal-04107265} (determine $\rho$ from a finite set of hypothesis sets).

Quantum tomography \cite{ODonnellW16, wright2016learn, ODonnellW17,HaahHJWY17,guctua2020fast,flamian2023tomography} aims to learn a full description of quantum states and can naturally be applied to state certification and all other problems discussed above, although likely with far more copies than needed. Shadow tomography~\cite{Aaronson20,huang2020predicting,BrandaoKLLSW19} considers a much simpler problem of learning the statistics of the state $\rho$ over a finite set of observables. Algorithms for shadow tomography can be applied to quantum hypothesis selection~\cite{BadescuO21, fawzi:hal-04107265}.

In addition to the four types of measurements discussed before, \textit{Pauli} measurements have also attracted significant interest~\cite{flamia2011direct, Liu2011universal,  cai2016optimal,Yu2023almost} due to ease in implementation despite being less powerful. Moreover, \cite{fawzi:hal-04107265} considered \textit{sequential} strategies which allow the number of measurements to depend on previous outcomes (e.g. one can choose to stop measuring the remaining copies if the outcomes so far yield a good estimate), which is parallel to the adaptivity of measurements.

\paragraph{Classical distribution testing.} Quantum state certification can be viewed as the quantum equivalent of testing identity of discrete distributions from samples. Here the task is to decide from samples whether a distribution is equal to a given known distribution. The problem has been well studied starting with the works of~\cite{BatuFFKRW01, Paninski08} which establish the sample complexity of this task when all the samples are available. This is similar to using entangled measurements in the quantum case. Recently there has been significant work on distributed testing of distributions, where instead of having all samples at the same place, they are distributed across users, and we obtain only limited information about each sample, e.g., a communication-constrained~\cite{barnes2019lower, ACT:19:IT2}, or privacy-preserving information~\cite{duchi2013local, ACFT:19:IT3, han2015minimax}. Thinking of each sample analogous to one copy, this distributed testing is in spirit similar to unentangled measurements, where we perform measurements on one copy at a time.~\cite{AcharyaCT19,ACLST22iiuic} derived a unified information-theoretic framework in terms of the channel constraints. In particular,~\cite{AcharyaCT19, ACT:19:IT2, ACFT:19:IT3} showed that there is a separation for distributed testing under communication and privacy constraints between the cases when common randomness was available versus not. Furthermore,~\cite{ACLST22iiuic} show that adaptivity does not help in these problems beyond common randomness. Our results are qualitatively similar to these classical distributed testing results. We show in this work that these ideas can be generalized to quantum state certification and a similar separation also holds. We refer the readers to~\cite{canonne2022topics} for a comprehensive survey of the above topics.

\paragraph{Outline.} The rest of the paper is organized as follows. In Section~\ref{sec:preliminaries}, we give the precise problem definition and provide some mathematical terminology and definitions. In Section~\ref{sec:lower} we introduce our unified lower bound framework for non-adaptive measurements. In Section~\ref{sec:upper-bound} we describe the algorithm that achieves the copy complexity upper bound for fixed measurements.

\section{Preliminaries}
\label{sec:preliminaries}
\subsection{Basics of quantum computing}
\paragraph{Hilbert space over complex vectors.} The space of $\dims$-dimensional complex vectors $\C^\dims$ forms a Hilbert space. We will frequently use the Dirac notation. First, $\qbit{j}\in \R^\dims, j=0, \ldots,\dims-1$ represents a vector with 0 everywhere except the index $j$. These $\dims$ vectors form the canonical basis for $\C^\dims$. 

We use $\qbit{\psi}\in\C^\dims$ to denote a vector. It can be written as a linear combination of the basis, $\qbit{\psi}=\sum_{j=0}^{\dims-1}\psi_j\qbit{j}$ where $\psi_j\in \C$. 
We use $\qadjoint{\psi}$ to denote the adjoint (conjugate transpose) of $\qbit{\psi}$, which is a row vector. The inner product between two vectors $\qbit{\psi}$ and $\qbit{\phi}$ is defined as $\qdotprod{\psi}{\phi}=\sum_{j=0}^{\dims-1}\bar{\psi}_j\phi_j$. 

We can define tensor product where $\qbit{i}\otimes\qbit{j}\in\C^{\dims^2}$ is a vector with 1 at index $i\cdot\dims+j$ and 0 everywhere else.  For general vectors, $\qbit{\psi}\otimes\qbit{\phi}\eqdef\sum_{i, j=0}^{\dims-1}\psi_i\phi_j\qbit{i}\otimes\qbit{j}$. We sometimes omit the $\otimes$ sign for convenience. 
\paragraph{Quantum states.} In a $\dims$-dimensional quantum system, the state $\rho$ is a $\dims\times \dims$ positive semi-definite Hermitian matrix with $\Tr[\rho]=1$. If the rank of $\rho$ is 1, then the state is called \emph{pure} state and $\rho=\qproj{\psi}$ for some unit-norm $\qbit{\psi}\in \C^{\dims}$. Otherwise, the state is a \emph{mixed state}. A special case is $\qmm\eqdef \eye_\dims/\dims$, which is the maximally mixed state.
\paragraph{Measurements.} All measurements can be formulated as a \emph{positive operator-valued measure} (POVM). Let $\mathcal{X}$ be a finite set of outcomes. Then a POVM $\POVM=\{M_x\}_{x\in \mathcal{X}}$, where $M_x$ is p.s.d. and $\sum_{x\in \mathcal{X}}M_x=\eye_\dims$. Let $X$ be the outcome when applying measurement $\POVM$ to $\rho$, then the probability of observing $x$ is given by the \emph{Born's rule},
\[
\probaOf{X=x}=\Tr[\rho M_x].
\]
We note that the outcome set $\mathcal{X}$ need not be finite, in which case POVMs and Born's rule can be generalized. However, finite POVMs are without loss of generality. In principle, all physically feasible measurements are finite. Moreover, our argument extends easily to infinite POVMs.
\subsection{Problem setup}
\label{sec:set-up}

We are given $\ns$ independent copies of an unknown $\dims-$dimensional quantum state $\rho\in\mathbb{C}^{\dims\times\dims}$. The goal is to design 
\begin{itemize}
    \item $\ns$ POVMs  $\mathcal{M}^n=(\POVM_1, \ldots, \POVM_\ns)$ that are applied to the $\ns$ copies of the state that produce the measurement outcomes $\bx = (x_1, \ldots, x_\ns)$, 
    \item a tester $T$ such that when $\qs=\qkn$ it outputs $\ismixed$ with probability at least 2/3 and it outputs $\notmixed$ with probability at least $2/3$ if $\qs$ is at least $\eps$ away from $\qkn$ in trace distance. In other words, we want
\[
\Pr_{\rho = \qkn}(T(\bx) = \ismixed) \ge \frac23,\ \  \text{and} \inf_{\rho:\tracenorm{\rho-\qkn}>\eps}\Pr(T(\bx) = \notmixed) \ge \frac 23.
\]
\end{itemize}
When $\qkn=\qmm\eqdef\eye_\dims/\dims$, the problem is called \emph{mixedness testing}. The smallest value of $\ns$ for which we can design such a tester for all $\qkn$ is the \emph{copy complexity} of quantum state certification.

We apply measurements for each copy individually. More precisely, for the $i$-th copy, we apply a POVM $\POVM_i=\{M_x^i\}_{x=1}^k$ where $M_x^i$ is p.s.d. and $\sum_{x}M_x^i=\eye_\dims$. Let $x_i$ be the measurement outcome after applying $\POVM_i$ on the $i$-th copy. When the quantum state is $\rho$, $x_i$ follows a discrete distribution $\p^i_{\rho}=[\p^i_{\rho}(1), \ldots, \p^i_{\rho}(k)]$ given by \textit{Born's rule},
\begin{equation*}
    \p^i_{\rho}(x) = \Tr[M_x^i\rho], \quad  x=1, \ldots, k.
\end{equation*}

 According to~\cite[Lemma 4.8]{ChenCH021}, general finite POVMs can be simulated using rank-1 POVMs if we only consider the measurement outcomes and disregard the post-measurement quantum state. Thus it is without loss of generality to only consider rank-1 POVMS, i.e., 
\begin{equation}
    M_x^i=\qproj{\psi_x^i}, \quad\sum_{x=1}^k\qproj{\psi_x^i}=\eye_d
    \label{equ:rank1-povm}
\end{equation}
Note that $\qbit{\psi_x^i}$ may not be normalized. 

We can mathematically think of the three kinds of unentangled measurement schemes as follows,

\begin{itemize}
    \item In \textit{fixed} measurement schemes, each $\POVM_i$ is fixed before receiving the quantum state $\rho$.
    \item In \textit{randomized non-adaptive schemes}, there is a common random seed $R\sim \mathcal{R}$ that is generated independently of $\rho$ and the measurements are then chosen as a function of $R$, namely,  $\POVM_i=\POVM_i(R)$. 
    \item For \textit{randomized adaptive schemes}, in addition to the common randomness, the $i$th measurement depends on the previous outcomes, namely, $\POVM_i=\POVM_i(x_1, \ldots, x_{i-1}, R)$.
\end{itemize}

Using fixed measurements, when the state is $\rho$, the $\ns$ outcomes $x_1, \ldots, x_n$ follow a product distribution 
\begin{equation}
\label{equ:outcome-distribution}
    \bfP_\rho \eqdef\bigotimes_{i=1}^n\p^i_{\rho}.
\end{equation}
For randomized non-adaptive measurements, the outcomes are independent conditioned on the random seed $R$, and thus we can write $\bfP_\rho(R)=\bigotimes_{i=1}^n\p^i_{\rho}(R)$. For adaptive measurements, the $\ns$ outcomes are in general not independent anymore. 

We note that without shared randomness, choosing fixed POVMs for each copy is sufficient and additional private randomness is not helpful: suppose that $\POVM_i$ depends on an independent randomness $R_i$, then it is equivalent to choosing a fixed POVM defined by $\{\int M_x^i(R_i) dR_i\}_{x=1}^k$. 

\subsection{Closeness measures of distributions}
Let $\p$ and $\q$ be two distributions over a finite discrete domain $\mathcal{X}$. The \emph{total variation distance} is defined as,
\[
\totalvardist{\p}{\q}\eqdef\sup_{S\subseteq\mathcal{X}}(\p(S)-\q(S))=\frac{1}{2}\sum_{x\in\mathcal{X}}|\p(x)-\q(x)|.
\]
The Kullback-Leibler (KL) divergence $\kldiv{\p}{\q}$ and chi-square divergence $\chisquare{\p}{\q}$ are defined as
\[
\kldiv{\p}{\q}\eqdef\sum_{x\in\mathcal{X}}\p(x)\log\frac{\p(x)}{\q(x)},\quad \chisquare{\p}{\q}\eqdef \sum_{x\in\mathcal{X}}\frac{(\p(x)-\q(x))^2}{\q(x)}.
\]
The three quantities can be related using Pinsker's inequality and concavity of logarithms respectively,
\[
2\totalvardist{\p}{\q}^2\le \kldiv{\p}{\q}\le \chisquare{\p}{\q}.
\]
We may also define $\ell_p$ distances between distributions, $
\norm{\p-\q}_p\eqdef\Paren{\sum_{x\in\mathcal{X}}{|\p(x)-\q(x)|^p}}^{1/p}.
$

\subsection{Operators and quantum channels}

\paragraph{Hilbert space over complex matrices.}
The space of complex matrices $\C^{\dims\times\dims}$ is a Hilbert space when equipped with the matrix inner product defined as $\hdotprod{A}{B}\eqdef\Tr[A^\dagger B]$, where $A, B\in \C^{\dims\times \dims}$. The subspace of all Hermitian matrices, denoted as $\Herm{\dims}$, is a \textit{real} Hilbert space (i.e. the associated field is $\R$) with the same matrix inner product. Any positive semi-definite Hermitian matrix $M$ has a unique p.s.d. square root $K$ such that $K^2=M$, and we denote $K=\sqrt{M}$.

A homomorphism can be defined between $\C^{\dims\times\dims}$ and $\C^{\dims^2}$ through vectorization. On the canonical basis $\{\qbit{j}\}_{j=0}^{\dims-1}$, we define $\VecOp(\qoutprod{i}{j})\eqdef \qbit{j}\otimes \qbit{i}$. Vectorization for general matrices is defined by linearity. Furthermore, the matrix inner product can be equivalently written as the inner product on $\C^{\dims^2}$, $\hdotprod{A}{B}=\VecOp(A)^\dagger\VecOp(B)$.

We can similarly define the tensor product of matrices. Let $A=[a_{ij}]\in \C^{m\times n}$ and $B\in \C^{k\times l}$, then 
$$
A\otimes B=\begin{bmatrix}
    a_{11}B & \ldots & a_{1n}B\\
    \vdots & \ddots & \vdots\\
    a_{m1}B &\ldots & a_{mn}B
\end{bmatrix}\in \C^{mk\times nl}.
$$

\paragraph{(Linear) superoperators and quantum channels.} One can define linear operators over $\C^{\dims\times \dims}$. Since each matrix itself can be viewed as an operator over $\C^\dims$, we refer to linear operators over $\C^{\dims\times \dims}$ as superoperators\footnote{Indeed an operator over $\C^{\dims\times \dims}$ i.e. superoperator need not be linear, but we only deal with linear superoperators in this work, so we drop the word ``linear'' for brevity.} to avoid confusion. Let $\mathcal{N}:\C^{\dims\times \dims}\mapsto \C^{\dims\times \dims}$ be a superoperator. There exists a unique adjoint superoperator $\mathcal{N}^\dagger$ such that for all $X, Y\in \C^{\dims\times \dims}$,
\[
\hdotprod{Y}{\mathcal{N}(X)}=\hdotprod{\mathcal{N}^\dagger(Y)}{X}.
\]

Similar to the trace of matrices, we define the trace of superoperators next.
\begin{definition}
    \label{def:trace-supop}
    Let $\mathcal{N}:\C^{\dims\times \dims}\mapsto \C^{\dims\times \dims}$ be a superoperator. Define its trace as $\Tr[\mathcal{N}]=\sum_{i, j=1}^{\dims}\hdotprod{\qbit{i}\qadjoint{j}}{\mathcal{N}(\qbit{i}\qadjoint{j})}$
\end{definition}

We consider a special superoperator associated with a rank-1 POVM $\POVM=\{M_x\}_{x=1}^k$ where $M_x=\qproj{\psi_x}$. When $\POVM$ acts on $\rho$ and we obtain an outcome $x$, then the equation below gives one possible form of the post-measurement state,
\[
\rho^x\eqdef\frac{K_x\rho K_x}{\Tr[M_x\rho]}, \quad K_x=\sqrt{M_x}=\frac{\qproj{\psi_x}}{\sqrt{\qdotprod{\psi_x}{\psi_x}}}.
\]
This is called the \textit{generalized L\"uder's rule}. We note that the post-measurement states after applying a POVM are undefined in general, and this rule gives just one possible state transformation after measurement.

If we lack the knowledge of measurement outcomes, we can view the underlying state as an expectation of all post-measurement states. This process can be formulated by a superoperator called \textit{L\"uders channel}.
\begin{definition}[L\"uders channel]
\label{def:expected-density}
    Given a POVM $\POVM=\{M_x\}_{x=1}^k$ where $\sum_xM_x=\eye_d$ and $K_x=\sqrt{M_x}$, the L\"uders channel $\Luders_{\POVM}:\C^{\dims\times \dims}\mapsto \C^{\dims\times \dims}$ is defined as
    \begin{equation}
        \Luders_{\POVM}(X) \eqdef \sum_{x=1}^kK_xX K_x = \sum_{x=1}^{k}\frac{\qproj{\psi_x}X\qproj{\psi_x}}{\qdotprod{\psi_x}{\psi_x}}.
        \label{equ:rank-1-luders-channel}
    \end{equation}
    The second equality applies to rank-1 POVM, i.e. $M_x=\qproj{\psi_x}$ for $x=1, \ldots, k$. 
\end{definition}

The L\"uders channel maps a state $\rho$ to an expected post-measurement state $\Luders_{\POVM}(\rho)$.~\cref{def:expected-density} gives its \emph{Kraus representation}. Due to the homomorphism between $\C^{\dims\times\dims}$ and $\C^{\dims^2}$, one can express superoperators using a $\dims^2\times\dims^2$ matrix which is a linear operator over $\C^{\dims^2}$. One such representation is called the\textit{ Choi representation}, e.g.~\cite[Eq (4.3.8)]{khatri2021principles}. For a rank-1 POVM $\POVM$, the Choi representation of $\Luders_{\POVM}$ is
    \begin{equation}
    \label{equ:choi-of-luders}
        \Choi_{\POVM}\eqdef \sum_{x=1}^{k}\frac{\qproj{\bar{\psi}_x}\otimes\qproj{\psi_x}}{\qdotprod{\psi_x}{\psi_x}}.
    \end{equation}
One can easily verify through homomorphism that $\VecOp(\Luders_{\POVM}(X))=\Choi_{\POVM}\VecOp(X)$.

\paragraph{Schatten norms for linear (super)operators} Let $\lambda_1, \ldots, \lambda_\dims\ge 0$ be the \emph{singular values} of a linear operator $A$\footnote{For Hermitian matrices, the singular values are simply the absolute values of the eigenvalues.}, then for $p\ge 1$, the \emph{Schatten $p$-norm} is defined as 
$
\|A\|_{S_p}\eqdef \Paren{\sum_{i=1}^\dims \lambda_i^p}^{1/p}
$, which can be defined for both matrices and superoperators. Some important special cases are trace norm $\tracenorm{A}\eqdef\|A\|_{S_1}$, Hilbert-Schmidt norm $\hsnorm{A}\eqdef\|A\|_{S_2}$, and operator norm $\opnorm{A}\eqdef\|A\|_{S_\infty}=\max_{i=1}^\dims\lambda_i.$. A standard fact is that $\tracenorm{A}=\Tr[\sqrt{A^\dagger A}]$ and $\hsnorm{A}=\sqrt{\hdotprod{A}{A}}$.

\section{Unified lower bound framework for non-adaptive schemes}
\label{sec:lower}
We first state the new lower bound for mixedness testing with fixed measurements in~\cref{thm:no-shared-lb}.
\begin{theorem}
\label{thm:no-shared-lb}
    For $0<\eps<1/200$ and $\dims\ge 16$, without shared randomness, at least $\ns ={\Omega}(\dims^2/\eps^2)$ copies are necessary to test whether $\qs=\qmm$ or $\tracenorm{\qs-\qmm}>\eps$ where $\qmm=\eye_{\dims}/\dims$ is the maximally mixed state.
\end{theorem}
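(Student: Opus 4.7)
The plan is to use Le Cam's two-point method via an Ingster--Suslina chi-square calculation, with a hard family $\{\sigma_z\}$ chosen \emph{adversarially} from the fixed measurement scheme. Given any rank-$1$ POVMs $\POVM_1,\dots,\POVM_\ns$ (which is WLOG by~\eqref{equ:rank1-povm}), I form the summed L\"uders superoperator $\mathcal{A}=\sum_{i=1}^\ns \Luders_{\POVM_i}$. Each $\Luders_{\POVM_i}$ is self-adjoint and positive semidefinite under $\hdotprod{\cdot}{\cdot}$ (by~\eqref{equ:rank-1-luders-channel}), is unital (so $\eye_\dims$ is an eigenvector of eigenvalue $\ns$), and satisfies $\Tr[\Luders_{\POVM_i}]=\dims$ (a direct Born's-rule computation). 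Thus on the trace-$0$ Hermitian subspace of dimension $\dims^2-1$, $\mathcal{A}$ has non-negative eigenvalues $\mu_1\le\dots\le\mu_{\dims^2-1}$ summing to at most $\ns(\dims-1)$. I would take $V_1,\dots,V_{\dims^2/2}$ to be an HS-orthonormal basis for the span of the $\dims^2/2$ smallest eigenvectors---the ``least informative'' directions---and set $\sigma_z = \qmm + \frac{\eps c}{\dims^{3/2}}\sum_{j=1}^{\dims^2/2} z_j V_j$ for $z$ uniform on $\{\pm 1\}^{\dims^2/2}$ and a small absolute constant $c$.

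Before running the information-theoretic step one must check that, with constant probability over $z$, $\sigma_z$ is a valid state and $\tracenorm{\sigma_z-\qmm}\ge\eps$. For validity, the identity $\sum_j V_j X V_j=\Tr[X]\eye_\dims$ holding for any \emph{full} HS-orthonormal Hermitian basis yields $\sum_j V_j^2\preceq \dims\,\eye_\dims$ for any orthonormal Hermitian subset; matrix Bernstein then gives $\opnorm{\sum_j z_jV_j}=\tilde{O}(\sqrt{\dims})$ with high probability, whence $\opnorm{\sigma_z-\qmm}\le 1/\dims$ and $\sigma_z\succeq 0$. For the trace-norm lower bound, $\hsnorm{\sum_j z_jV_j}^2=\dims^2/2$ holds deterministically, and a Paley--Zygmund argument (via Hanson--Wright applied to the squared trace norm) yields $\tracenorm{\sum_j z_jV_j}=\Omega(\dims^{3/2})$ with constant probability, so $\tracenorm{\sigma_z-\qmm}=\Omega(\eps c)>\eps$ for suitable $c$.

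With the construction in hand I would invoke Ingster--Suslina for the product distributions $\bfP_{\sigma_z}=\bigotimes_i \p^i_{\sigma_z}$. Using $\p^i_\sigma(x)-\p^i_{\qmm}(x)=\Tr[M^i_x(\sigma-\qmm)]$ and $\sum_x M^i_x=\eye_\dims$,
$$
1+\chi^2\bigl(\mathbb{E}_z\bfP_{\sigma_z}\,\|\,\bfP_{\qmm}\bigr) \;=\; \mathbb{E}_{z,z'}\prod_{i=1}^\ns\Bigl(1+\dims\,\hdotprod{\sigma_z-\qmm}{\Luders_{\POVM_i}(\sigma_{z'}-\qmm)}\Bigr).
$$
Since the $V_j$'s diagonalize $\mathcal{A}$, applying $1+x\le e^x$ inside the product collapses the exponent to $(\eps^2 c^2/\dims^2)\sum_j\mu_j z_j z'_j$; evaluating the independent Rademacher MGFs via $\cosh t\le e^{t^2/2}$ then gives $\chi^2+1\le \exp\bigl(\tfrac{\eps^4 c^4}{2\dims^4}\sum_{j\le \dims^2/2}\mu_j^2\bigr)$.

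The crux---what separates fixed from randomized---is the ``median eigenvalue'' bound $\sum_{j\le\dims^2/2}\mu_j^2\le 2\ns^2$: since the $\dims^2/2$ \emph{largest} eigenvalues already account for at most $\ns(\dims-1)$, sorting forces $\mu_{\dims^2/2}\le 2\ns/\dims$, and then $\sum_{j\le\dims^2/2}\mu_j^2\le \mu_{\dims^2/2}\cdot \sum_{j\le\dims^2/2}\mu_j \le (2\ns/\dims)\cdot \ns\dims = 2\ns^2$. Substituting yields $\chi^2=O(1)$ whenever $\ns\le c'\,\dims^2/\eps^2$ for small $c'$, and Le Cam's method concludes. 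I expect the main obstacle to be the validity step for \emph{arbitrary} eigenvector $V_j$'s: the matrix Bernstein estimate carries an unavoidable $\sqrt{\log\dims}$ factor that has to be absorbed into $c$ (consistent with the hypotheses $\eps<1/200$, $\dims\ge 16$ in the statement), and the trace-norm lower bound requires nontrivial concentration because the naive $\tracenorm{\cdot}\ge\hsnorm{\cdot}$ only gives $\Theta(\eps/\sqrt\dims)$. Note that replacing the median estimate by the crude $\sum\mu_j^2\le(\max_j\mu_j)\sum_j\mu_j\le \ns\cdot \ns\dims$ would degrade the argument to the randomized $\dims^{3/2}/\eps^2$ bound; invoking the median inequality is the single step that unlocks the extra $\sqrt{\dims}$.
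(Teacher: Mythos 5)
Your overall architecture is exactly the paper's: Le Cam plus an Ingster--Suslina chi-square computation, a measurement-dependent Rademacher perturbation supported on the eigenvectors of $\sum_i\Luders_{\POVM_i}$ with the $\dims^2/2$ smallest eigenvalues, the $\cosh$-MGF bound for the decoupled bilinear form, and the median-eigenvalue inequality $\sum_{j\le \dims^2/2}\mu_j^2\le\mu_{\dims^2/2}\sum_j\mu_j\le 2\ns^2$ as the step that buys the extra $\sqrt{\dims}$ over the randomized bound. All of that is correct and matches Lemma~\ref{lem:decoupled-chi-square}, Theorem~\ref{thm:chi-square-upper-bound}, Observation~\ref{obs:hs-norm-luders-projection}, and the computation in Section~\ref{sec:lower-proof} (where $\mathcal{A}=\ns\avgLuders$ and $\Tr[\avgLuders]=\dims$).

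The genuine gap is in the validity-and-farness step, and your own diagnosis of it is wrong: the $\sqrt{\log\dims}$ from matrix Bernstein \emph{cannot} be absorbed into the absolute constant $c$. If you shrink $c$ by $\sqrt{\log\dims}$ to force $\opnorm{\Delta_z}\le 1/\dims$, then $\hsnorm{\Delta_z}^2$ shrinks by $\log\dims$ and the H\"older bound $\tracenorm{\Delta_z}\ge\hsnorm{\Delta_z}^2/\opnorm{\Delta_z}$ only yields $\Omega(\eps/\sqrt{\log \dims})$ rather than $\ge\eps$; if instead you keep $c$ fixed you must restrict $\eps=O(1/\sqrt{\log\dims})$, which is not permitted by the hypothesis ``$\eps<1/200$, $\dims\ge 16$'' (those are $\dims$-independent). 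Either way you prove only $\ns=\Omega(\dims^2/(\eps^2\log\dims))$ or a weaker range of $\eps$. The paper closes this with a dedicated random-matrix estimate (Theorem~\ref{thm:rand-mat-opnorm-concentration}): for \emph{any} orthonormal subset of a Hermitian orthonormal basis, $\opnorm{\sum_j z_jV_j}=O(\sqrt{\dims})$ with probability $1-2e^{-\dims}$, \emph{with no log factor}. The proof exploits structure that matrix Bernstein throws away: for every fixed unit vector $x$, the matrix $V_x=[V_1x,\ldots,V_{\dims^2}x]$ is an exact isometry (Claim~\ref{claim:isometry}), so $\|Wx\|_2$ concentrates at $\sqrt{\dims}$ with sub-Gaussian tails, and a $1/2$-net union bound finishes. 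Once you have the log-free operator-norm bound, the trace-norm lower bound follows deterministically from $\tracenorm{W}\ge\hsnorm{W}^2/\opnorm{W}$ with $\hsnorm{W}^2=\ell$ exact; your proposed Paley--Zygmund/Hanson--Wright detour is both unnecessary and ill-posed, since $\tracenorm{W}^2$ is not a quadratic form in $z$. A second, smaller omission: the mixture in the chi-square must range over genuine density matrices for every $z$, not just with high probability, so you need the truncation $\barDelta_z=\Delta_z\min\{1,1/(\dims\opnorm{\Delta_z})\}$ and an argument (Claim~\ref{claim:projected-chi-square-ub}) that the truncated and untruncated exponents differ by at most $O(e^{-\dims})$ in expectation; this is routine but should be stated.
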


Since mixedness testing is a special case of state certification, this theorem provides a worst-case lower bound for the problem, both when $\qkn$ is known and unknown. 
Recall that $\ns = O(\dims^{3/2}/\eps^2)$ copies are sufficient using randomized schemes. Thus there is a strict separation between algorithms with and without shared randomness, and the gap is a factor of ${\Theta}(\sqrt{\dims})$. 

\cref{thm:no-shared-lb} is an immediate corollary of a unified theoretical framework that we establish for both randomized and fixed non-adaptive measurements. This section will provide a high-level description of the framework.

\subsection{Le Cam's method}

The central tool to prove testing lower bounds is Le Cam's method~\cite{LeCam73,yu1997assouad}. Recall in~\eqref{equ:outcome-distribution} that for a state $\rho$, the distribution of measurement outcomes $\bx=(x_1, \ldots, x_\ns)$ when applying measurements $\POVM=(\POVM_1, \ldots, \POVM_{\ns})$ is $\bfP_{\rho}$. Since we only have access to the measurement outcomes, to distinguish between $\rho$ and $\qmm$, the closeness of outcome distributions $\bfP_{\rho}$ and $\bfP_{\qmm}$ determine the difficulty of testing between the two states. 

For the mixedness testing problem defined in~\cref{sec:preliminaries}, we implicitly require that even if $\rho$ is chosen from any distribution $\mathcal{D}$ over the $\mathcal{P}_\eps\eqdef\{\rho:\tracenorm{\qs-\qmm}>\eps\}$ (the set of states at least $\eps$-far from $\qmm$), we should still be able to distinguish it from the case when the state is $\qmm$. In the former case, the outcome distribution is $\expectDistrOf{\rho\sim\mathcal{D}}{\bfP_{\rho}}$.
Thus, the distance between $\expectDistrOf{\rho\sim\mathcal{D}}{\bfP_{\rho}}$ and $\bfP_{\qmm}$ determines the hardness of mixedness testing. 
More precisely, to guarantee a testing accuracy of at least $2/3$, we need $\frac{2}{3}<\totalvardist{\bfP_{\qmm}}{\expectDistrOf{\sigma\sim\mathcal{D}}{\bfP_{\sigma}}}$.

We note that the distribution $\mathcal{D}$ need not be strictly supported on $\mathcal{P}_\eps$ and only needs to be an \emph{almost-$\eps$ perturbation},  which has constant probability mass over $\mathcal{P}_\eps$.

\begin{definition}[Almost-$\eps$ perturbation]
    A distribution $\mathcal{D}$ is an \emph{almost-$\eps$ perturbation} if $\probaDistrOf{\sigma\sim\mathcal{D}}{\sigma\in\mathcal{P}_\eps}>\frac{1}{2}.$ Denote the set of \emph{almost-$\eps$ perturbation} as $\Gamma_\eps$.
\end{definition}

We show in Lemma~\ref{lem:le-cam-approx-eps-perturbation} that a Le Cam-style lower bound applies to \emph{almost-$\eps$ perturbations}. The proof is in~\cref{app:lem:le-cam-approx-eps-perturbation}.

\begin{lemma}
\label{lem:le-cam-approx-eps-perturbation}
Let $\mathcal{D}$ be an almost-$\eps$ perturbation. Suppose nature either let $Y=0$ and set $\rho=\qmm$ with probability $1/2$, or set $Y=1$ and sample $\rho\sim\mathcal{D}$ with probability $1/2$. Then, if we use a mixedness tester with success probability at least $2/3$ to obtain a guess $\hat{Y}\in\{0, 1\}$, we have $\Pr[Y=\hat{Y}]\ge 1/2$. This implies
   \begin{equation}
        \frac{1}{2}\le \totalvardist{\expectDistrOf{\sigma}{\bfP_{\sigma}}}{\bfP_{\qmm}}\le \sqrt{\frac{1}{2}\chisquare{\expectDistrOf{\sigma}{\bfP_{\sigma}}}{\bfP_{\qmm}}}.
        \label{equ:lecam-total-var}
   \end{equation}
\end{lemma}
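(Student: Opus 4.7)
}
The plan is a standard two-point Le Cam argument, with the one non-routine twist being that the alternative distribution $\mathcal{D}$ is not strictly supported on $\mathcal{P}_\eps$, so the tester's guarantee applies only on an event of probability exceeding $1/2$. I will treat the problem as a binary hypothesis testing problem in which nature flips a fair coin $Y\in\{0,1\}$, generates $\rho=\qmm$ on $Y=0$ and $\rho\sim\mathcal{D}$ on $Y=1$, performs the measurements $\mathcal{M}^n$, and hands the outcomes $\bx$ to the tester $T$; I set $\hat{Y}=0$ if $T(\bx)=\ismixed$ and $\hat{Y}=1$ otherwise.

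The first step is to lower bound $\Pr[Y=\hat{Y}]$ using only the tester's guarantees. On the event $Y=0$, the tester is run on the reference state $\qmm$, so by the $\ismixed$-side guarantee $\Pr[\hat{Y}=0\mid Y=0]\ge 2/3$. On $Y=1$, I condition on the event $E=\{\rho\in\mathcal{P}_\eps\}$: by the definition of an almost-$\eps$ perturbation $\Pr[E\mid Y=1]>1/2$, and conditional on $\rho\in\mathcal{P}_\eps$ the $\notmixed$-side guarantee gives $\Pr[\hat{Y}=1\mid Y=1,\rho]\ge 2/3$, so $\Pr[\hat{Y}=1\mid Y=1]\ge (2/3)\cdot\Pr[E\mid Y=1]>1/3$. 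Averaging the two cases yields the claimed $\Pr[Y=\hat{Y}]\ge 1/2$ (in fact strict).

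The second step is purely classical. Since $\hat{Y}$ is a deterministic (or possibly randomized) function of $\bx$, whose conditional law given $Y=0$ is $\bfP_{\qmm}$ and given $Y=1$ is the mixture $\expectDistrOf{\sigma\sim\mathcal{D}}{\bfP_\sigma}$, the standard Le Cam inequality
\[
\Pr[Y=\hat{Y}]\le \frac{1}{2}\bigl(1+\totalvardist{\bfP_{\qmm}}{\expectDistrOf{\sigma}{\bfP_\sigma}}\bigr)
\]
together with the lower bound from the first step gives the left inequality of \eqref{equ:lecam-total-var}. The right inequality is the Cauchy--Schwarz/Pinsker-type bound $2\totalvardist{\p}{\q}^2\le\chisquare{\p}{\q}$ already recorded in Section~\ref{sec:preliminaries}, applied to $\p=\expectDistrOf{\sigma}{\bfP_\sigma}$ and $\q=\bfP_{\qmm}$.

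The only place where real care is required is tracking the constants in the first step: one must carry $\Pr[E\mid Y=1]$ through the averaging and verify that, together with the $2/3$ success probabilities on each side, it is enough to give the advertised lower bound on $\Pr[Y=\hat Y]$ (and hence on $\totalvardist{\bfP_{\qmm}}{\expectDistrOf{\sigma}{\bfP_\sigma}}$). Everything else---measurability of the decision rule, the Le Cam inequality, and the TV-to-$\chi^2$ conversion---is routine and can be written in a few lines.
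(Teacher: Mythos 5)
Your proposal follows the paper's own proof essentially line for line: the two-point reduction with a fair coin $Y$, conditioning on the event $E=\{\rho\in\mathcal{P}_\eps\}$ to handle the fact that $\mathcal{D}$ is only an almost-$\eps$ perturbation, the averaging that gives $\Pr[Y=\hat Y]\ge \frac12\bigl(\frac23+\frac13\bigr)=\frac12$, and then the TV-to-$\chi^2$ conversion. Up to that point everything is fine.

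The gap is in the very step you flag as "routine": deducing the left inequality of \eqref{equ:lecam-total-var} from $\Pr[Y=\hat Y]\ge 1/2$ via the Le Cam inequality $\Pr[Y=\hat Y]\le \frac12\bigl(1+\totalvardist{\bfP_{\qmm}}{\expectDistrOf{\sigma}{\bfP_\sigma}}\bigr)$. Substituting $\Pr[Y=\hat Y]\ge 1/2$ into this yields only $\totalvardist{\bfP_{\qmm}}{\expectDistrOf{\sigma}{\bfP_\sigma}}\ge 0$, not $\ge 1/2$; the implication as written is a non sequitur. (The paper's appendix proof makes the identical jump, so you have faithfully reproduced its weakness rather than introduced a new one.) To get a quantitative bound one should instead evaluate both output distributions on the acceptance region $A=\{T(\bx)=\ismixed\}$: with $p\eqdef\probaDistrOf{\sigma\sim\mathcal{D}}{\sigma\in\mathcal{P}_\eps}$ one gets $\bfP_{\qmm}(A)\ge 2/3$ and $\expectDistrOf{\sigma}{\bfP_\sigma}(A)\le (1-p)+p/3$, hence $\totalvardist{\bfP_{\qmm}}{\expectDistrOf{\sigma}{\bfP_\sigma}}\ge \frac{2p}{3}-\frac13$. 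This is positive for $p>1/2$ but can be arbitrarily small as $p\downarrow 1/2$, and it never exceeds $1/3$ even when $p=1$, so the constant $1/2$ in the lemma is not attainable by this argument. The lemma (and your write-up) should either assume $p$ bounded away from $1/2$ — which holds in the actual application, where \cref{prop:perturbation-trace-distance} gives $p\ge 1-2e^{-\dims}$ and hence a TV lower bound of $1/3-o(1)$ — or state the conclusion with the constant $\frac{2p}{3}-\frac13$ (any positive absolute constant suffices for the downstream $\Omega(\dims^2/\eps^2)$ bound). As written, your final inference does not establish the stated inequality.
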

Thus to prove copy complexity lower bounds, we need to design an almost-$\eps$ perturbation and then upper bound the chi-square divergence by some function of $\ns, \dims, \eps$.

\subsection{Min-max vs. max-min} Recall in~\cref{sec:fixed-disadvantage} we discussed that the main difference between randomized and fixed measurements is whether nature can choose the hard state adversarially. In this section, we formalize the discussion under a rigorous game theory framework. The testing problem can be viewed as a two-party game played between nature and the algorithm designer, where the algorithm designer tries to design the best algorithms that can distinguish between two states, while nature tries to find hard states to fool the algorithm. 

For a fixed measurement scheme $\POVM^\ns$, nature can choose a $\mathcal{D}\in \Gamma_\eps$ that minimizes the chi-square divergence in~\eqref{equ:lecam-total-var}. According to~\cref{lem:le-cam-approx-eps-perturbation}, if there exists a fixed $\POVM^{\ns}$ that achieves at least 2/3 probability in testing maximally mixed states, we must have
\begin{equation}
\label{equ:max-min-lb}
    \frac{1}{2}\le \max_{\mathcal{M}^n\text{ fixed}}\min_{\mathcal{D}\in \Gamma_\eps}\chisquare{\expectDistrOf{\sigma\sim\mathcal{D}}{\bfP_{\sigma}}}{\bfP_{\qmm}}.
\end{equation}
Thus a max-min game is played between the two parties and nature has an advantage to decide its best action based on the choice of the algorithm designer.

With shared randomness, in principle, a max-min game is still played, but instead, the maximization is over all \textit{distributions} of fixed (non-entangled) measurements.  Using a similar argument as~\cite[Lemma IV.8]{AcharyaCT19}, for the best distribution over all $\mathcal{M}^n$,  the expected accuracy over $R\sim\mathcal{R}$ is at least 1/2 for all $\mathcal{D}\in \Gamma_\eps$. Thus, for all $\mathcal{D}$, there must exist an instantiation $R(\mathcal{D})$ such that using the fixed measurement $\mathcal{M}^n(R(\mathcal{D}))$ the testing accuracy is at least 1/2. Therefore, 
\begin{equation}
\label{equ:min-max-lb}
    \frac{1}{2}\le \min_{\mathcal{D}\in \Gamma_\eps}\max_{\mathcal{M}^n\text{ fixed}}\chisquare{\expectDistrOf{\sigma\sim\mathcal{D}}{\bfP_{\sigma}}}{\bfP_{\qmm}},
\end{equation}
which intuitively says that a min-max game is played and the algorithm designer would have an advantage. 

Therefore, to obtain a sample/copy complexity lower bound for fixed measurements requires upper bounding~\eqref{equ:max-min-lb}, while for randomized schemes requires upper bounding~\eqref{equ:min-max-lb}. We can see that shared randomness is a ``game changer'' that changes a max-min game to a min-max game. Since min-max is no smaller than max-min, testing with shared randomness is easier than testing without it. 

The min-max and max-min arguments in this section are similar to \cite{AcharyaCT19} and we point to \cite[Lemma IV.8, IV.10]{AcharyaCT19} for additional reference.

\subsection{The L\"uders channel characterizes the hardness of testing}
\label{sec:avg-luders}
In the previous section, we give an abstract theoretical framework to prove tight lower bounds for fixed measurements. We now  make it concrete and apply it to mixedness testing. 

Our central contribution is to relate the hardness of testing (i.e., the min-max and max-min divergences) to the \textit{average L\"uders channel} defined by all the POVMs. 
Use the shorthand $\Luders_i\eqdef\Luders_{\POVM_i}$ where $\Luders_{\POVM_i}$ is from~\cref{def:expected-density}, the average  L\"uders channel is defined as
\begin{equation}
    \avgLuders\eqdef\frac{1}{\ns}\sum_{i=1}^{\ns}\Luders_i\;\text{(Kraus)}, \quad\avgChoi\eqdef\frac{1}{\ns}\sum_{i=1}^{\ns}\Choi_i\;\text{(Choi)}.
    \label{equ:average-luders-channel}
\end{equation}

We again use the example in~\cref{sec:fixed-disadvantage} to see why this superoperator is useful. Suppose $\rho$ is the ``+'' state defined in~\eqref{equ:plus-state}. If $\POVM_i=\{\qproj{x}\}_{x=0}^{\dims-1}$, then $\avgLuders(\cdot)=\sum_{x=0}^{\dims-1}\qproj{x}(\cdot)\qproj{x}$. It turns out that $\rho-\qmm$ exactly falls into the 0-eigenspace of $\avgLuders$,
\begin{align*}
    \avgLuders(\rho-\qmm)&=\sum_{x=0}^{d-1}\qproj{x}(\rho-\qmm)\qproj{x}\\
    &=\sum_{x=0}^{d-1}\qbit{x}\qdotprod{x}{\phi}\qdotprod{\phi}{x}\qadjoint{x}-\sum_{x=0}^{d-1}\qproj{x}\frac{\eye_d}{d}\qproj{x}\\
    &=\sum_{x=0}^{d-1}\qbit{x}\qadjoint{x}\frac{1}{d}-\frac{\eye_\dims}{\dims}=0.
\end{align*}

The third equality holds because $\qdotprod{x}{\phi}=\qdotprod{\phi}{x}=1/\sqrt{\dims}$. This serves as an intuitive example that the eigenvalues of $\avgLuders$ superoperator characterize the ability of the measurement scheme to distinguish between quantum states. If the difference $\rho-\qmm$ falls into the eigenspace of $\avgLuders$ with small eigenvalues, then we can expect that the two states are hard to distinguish. 

We can relate the eigenvalues of $\avgLuders$ to the max-min and min-max distances in~\eqref{equ:max-min-lb} and~\eqref{equ:min-max-lb}, thus proving copy complexity lower bounds for both fixed and randomized non-adaptive schemes. We first state some important properties of the eigenvalues and eigenvectors of $\avgLuders$ in~\cref{lem:supop-eigendecomposition}. These are standard results and we state their proofs in~\cref{app:lem:supop-eigendecomposition}.

\begin{restatable}{lemma}{eigendecomp}
\label{lem:supop-eigendecomposition}
    $\avgLuders$ has eigenvectors $\hbasis_{\avgLuders}=(V_1, \ldots, V_{\dims^2})$ with eigenvalues $0\le \lambda_1\le \ldots\le\lambda_{\dims^2}=1$ where
    \begin{enumerate}
        \item $V_i\in\C^{d\times d}$ is a Hermitian matrix with $\Tr[V_i]=0$ for $i\le \dims^2-1$,
        \item $V_{\dims^2}=\eye_\dims/\sqrt{\dims}$.
        \item $\hdotprod{V_i}{V_j}=\Tr[V_iV_j]=\delta_{ij}\eqdef\indic{i=j}$.
    \end{enumerate}
    Thus $\{V_i\}_{i=1}^{\dims^2}$ forms an orthonormal basis for $\C^{\dims\times\dims}$ and $\HH_\dims$, the space of Hermitian matrices. Furthermore, $\Tr[\avgLuders]=\sum_{i=1}^{\dims^2}\lambda_i=\dims$.
\end{restatable}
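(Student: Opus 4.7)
The plan is to invoke the spectral theorem for self-adjoint operators on the real Hilbert space $\Herm{\dims}$ of dimension $\dims^2$. First I would verify that $\avgLuders$ maps $\Herm{\dims}$ to $\Herm{\dims}$ and is both self-adjoint and positive semi-definite with respect to the HS inner product. Self-adjointness: since each Kraus factor $K_x = \sqrt{M_x}$ is Hermitian, cyclicity of trace gives
\[
\hdotprod{Y}{\Luders_{\POVM}(X)} = \sum_x \Tr[K_x Y K_x X] = \hdotprod{\Luders_{\POVM}(Y)}{X}
\]
for all $X,Y \in \Herm{\dims}$, and averaging over the $\ns$ POVMs preserves this. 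Positive semi-definiteness follows from the rank-1 form \eqref{equ:rank-1-luders-channel}: for Hermitian $X$,
\[
\hdotprod{X}{\Luders_{\POVM}(X)} = \sum_x \frac{\qdotprod{\psi_x}{X\psi_x}^2}{\qdotprod{\psi_x}{\psi_x}^2} \ge 0,
\]
using that $\qdotprod{\psi_x}{X\psi_x}$ is real. Averaging preserves PSD as well.

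Next I would identify $\eye_\dims/\sqrt{\dims}$ as a unit-norm eigenvector with eigenvalue $1$: from $\sum_x M_x = \eye_\dims$ one obtains $\Luders_{\POVM}(\eye_\dims) = \sum_x K_x \eye_\dims K_x = \sum_x M_x = \eye_\dims$, and $\hdotprod{\eye_\dims/\sqrt{\dims}}{\eye_\dims/\sqrt{\dims}} = 1$. Set $V_{\dims^2} \eqdef \eye_\dims/\sqrt{\dims}$. By the spectral theorem, $\Herm{\dims}$ admits an orthonormal eigenbasis of $\avgLuders$; trace-preservation (which is immediate from $\sum_x K_x^2 = \sum_x M_x = \eye_\dims$) makes the $(\dims^2-1)$-dimensional orthogonal complement of $\Span\{\eye_\dims\}$ invariant, so one may pick the remaining $V_1, \ldots, V_{\dims^2-1}$ inside this complement. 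Orthogonality with $\eye_\dims/\sqrt{\dims}$ is exactly $\Tr[V_i] = 0$, which gives the trace-zero claim; the fact that $\{V_i\}$ is simultaneously a $\C$-basis of $\C^{\dims\times\dims}$ follows from $A = (A+A^\dagger)/2 + i\cdot(A-A^\dagger)/(2i)$, which decomposes any complex matrix into two Hermitian parts.

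For the eigenvalue bound $0 \le \lambda_i \le 1$, non-negativity is immediate from PSD. For the upper bound, $\avgLuders$ is a convex combination of CPTP maps, hence itself CPTP, and therefore contractive in trace norm on trace-zero Hermitians: for $V$ with $\Tr[V] = 0$, the Jordan decomposition $V = V_+ - V_-$ satisfies $\Tr[V_+] = \Tr[V_-] = \tracenorm{V}/2$ and $\avgLuders(V_\pm) \succeq 0$, so
\[
\tracenorm{\avgLuders(V)} \le \tracenorm{\avgLuders(V_+)} + \tracenorm{\avgLuders(V_-)} = \Tr[V_+] + \Tr[V_-] = \tracenorm{V}.
\]
Applied to an eigenvector $V$ with $\avgLuders(V) = \lambda V$, this gives $|\lambda| \le 1$; trace-preservation also forces $\Tr[V] = 0$ whenever $\lambda \ne 1$, so every eigenvector other than $V_{\dims^2}$ is handled. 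Finally, for $\Tr[\avgLuders] = \dims$, linearity reduces to one rank-1 POVM; unfolding \cref{def:trace-supop} and using $\sum_i |\qdotprod{\psi_x}{i}|^2 = \qdotprod{\psi_x}{\psi_x}$,
\[
\Tr[\Luders_{\POVM}] = \sum_{i,j=1}^{\dims} \hdotprod{\qoutprod{i}{j}}{\Luders_{\POVM}(\qoutprod{i}{j})} = \sum_x \qdotprod{\psi_x}{\psi_x} = \Tr\Bigl[\sum_x M_x\Bigr] = \dims,
\]
and averaging retains this value; $\sum_i \lambda_i = \Tr[\avgLuders] = \dims$ then follows from the standard spectral identity for self-adjoint operators. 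The main obstacle is the bound $\lambda_i \le 1$; the Jordan-decomposition trace-norm argument above is the cleanest self-contained route, avoiding heavier machinery such as Russo--Dye.
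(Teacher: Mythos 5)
Your proof is correct, and most of it (self-adjointness via cyclicity of trace, positive semi-definiteness from the rank-one form, unitality giving the eigenvalue-$1$ eigenvector $\eye_\dims/\sqrt{\dims}$, invariance of the trace-zero complement, and the direct computation of $\Tr[\avgLuders]$) matches the paper's argument step for step. The one place where you genuinely diverge is the bound $\lambda_i\le 1$. The paper proves the stronger statement $\opnorm{\Luders_i}\le 1$ on all of $\C^{\dims\times\dims}$ by a self-contained chain of inequalities: Cauchy--Schwarz on each term $\Tr[X^\dagger\qproj{\psi_x^i}X\qproj{\psi_x^i}]$, then AM--GM, then the POVM completeness relation, yielding $\hdotprod{X}{\Luders_i(X)}\le\hdotprod{X}{X}$. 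You instead use that $\avgLuders$ is positive and trace-preserving, hence trace-norm contractive on trace-zero Hermitians via the Jordan decomposition $V=V_+-V_-$, and combine this with the observation that trace preservation forces $\Tr[V]=0$ for any eigenvector with $\lambda\ne 1$ (eigenvectors with $\lambda=1$ being trivial for the upper bound). Both arguments are valid; yours is the more conceptual and standard quantum-information route (and only needs positivity, not complete positivity), while the paper's is a bare-hands computation that additionally gives operator-norm contractivity on the whole matrix space, a fact it reuses later (e.g.\ $\opnorm{\avgChoi}\le 1$ in the proof of Theorem~\ref{thm:chi-square-upper-bound}). Two cosmetic remarks: in your PSD display the denominator should be $\qdotprod{\psi_x}{\psi_x}$ to the first power, not squared (harmless for the sign conclusion); and when you invoke ``the standard spectral identity'' for $\sum_i\lambda_i=\Tr[\avgLuders]$, it is worth noting explicitly that the $V_i$ are also a $\C$-basis of $\C^{\dims\times\dims}$, so the eigenvalues over the complex matrix space are exactly the $\lambda_i$ --- you do supply the needed decomposition of an arbitrary matrix into Hermitian parts, so this is a matter of phrasing rather than a gap.
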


Our main technical contribution in this paper is~\cref{thm:maxmin-and-minmax-ub}. It establishes upper bounds on the max-min and min-max chi-square divergence in terms of the eigenvalues of $\avgLuders$. Due to the correspondence between the max-min and min-max divergences to the fixed and randomized measurement schemes, we can directly apply the theorem to obtain copy complexity lower bounds for both fixed and randomized schemes as corollaries of this theorem.
\begin{theorem}
\label{thm:maxmin-and-minmax-ub}
    When $\ns = O(\dims^2/\eps^2)$, the max-min chi-square divergence can be bounded as
    \begin{equation}
    \label{equ:maxmin-ub}
        \max_{\mathcal{M}^n\text{ fixed}}\min_{\mathcal{D}\in \Gamma_\eps}\chisquare{\expectDistrOf{\sigma\sim\mathcal{D}}{\bfP_{\sigma}}}{\bfP_{\qmm}}=\bigO{\frac{\ns^2\eps^4}{\dims^4}\cdot\frac{\max_{\avgLuders}\tracenorm{\avgLuders}^2}{\dims^2}},
    \end{equation}
and the min-max chi-square divergence can be bounded as
    \begin{equation}
    \label{equ:minmax-ub}
        \min_{\mathcal{D}\in \Gamma_\eps}\max_{\mathcal{M}^n\text{ fixed}}\chisquare{\expectDistrOf{\sigma\sim\mathcal{D}}{\bfP_{\sigma}}}{\bfP_{\qmm}}=\bigO{\frac{\ns^2\eps^4}{\dims^4}\max_{\avgLuders}\hsnorm{\avgLuders}^2}.
    \end{equation}
\end{theorem}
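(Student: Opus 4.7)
The plan is to invoke Le Cam via~\cref{lem:le-cam-approx-eps-perturbation} with the hard family
\[
\sigma_z \;=\; \qmm \;+\; \frac{\eps \cd}{\dims\sqrt{\dims}} \sum_{i=1}^{\dims^2/2} z_i V_i, \qquad z \sim \{-1,+1\}^{\dims^2/2},
\]
for an appropriate choice of orthonormal traceless Hermitian matrices $V_1, \ldots, V_{\dims^2/2}$; the two halves of the theorem differ only in whether the $V_i$'s are fixed before or chosen after seeing $\avgLuders$. First I would verify $\sigma_z \in \Gamma_\eps$. Hermiticity and trace one are automatic, and the basis-independent identity $\sum_{i=1}^{\dims^2} V_i^2 = \dims\,\eye$ (valid for every orthonormal Hermitian basis of $\Herm{\dims}$) gives $\opnorm{\sum_{i \in S} V_i^2} \le \dims$ for any index subset, so non-commutative Khintchine yields $\opnorm{\sum_i z_i V_i} = O(\sqrt{\dims\log\dims})$ with probability at least $3/4$. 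Combined with the deterministic identity $\hsnorm{\sum_i z_i V_i} = \dims/\sqrt{2}$ (orthonormality) and the inequality $\tracenorm{X} \ge \hsnorm{X}^2/\opnorm{X}$, this delivers $\opnorm{\sigma_z - \qmm} \ll 1/\dims$ (hence PSD) together with $\tracenorm{\sigma_z - \qmm} \gtrsim \eps$ for a suitably chosen absolute constant $\cd$.

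Next I would unfold the chi-square. Using independence of outcomes across the $\ns$ copies and writing $A_z \eqdef \sigma_z - \qmm$, a direct computation that uses the rank-1 structure $\p^i_{\qmm}(x) = \qdotprod{\psi_x^i}{\psi_x^i}/\dims$ from~\eqref{equ:rank1-povm} together with $\Tr[A_z] = 0$ gives, for each copy $i$,
\[
\sum_x \frac{\p^i_{\sigma_z}(x)\,\p^i_{\sigma_{z'}}(x)}{\p^i_{\qmm}(x)} \;=\; 1 + \dims\,\hdotprod{A_{z'}}{\Luders_i(A_z)}.
\]
Multiplying across copies, applying $1+t \le e^t$, and expanding $A_z,A_{z'}$ in the $V_i$ basis produces
\[
1 + \chisquare{\expectDistrOf{\sigma\sim\mathcal{D}}{\bfP_\sigma}}{\bfP_{\qmm}} \;\le\; \EE_{z, z'}\!\left[\exp\!\left(\beta\, z^\top B\, z'\right)\right], \ \ \beta \eqdef \tfrac{\ns\eps^2 \cd^2}{\dims^2}, \ \ B_{ij} \eqdef \hdotprod{V_j}{\avgLuders(V_i)}.
\]
The matrix $B$ is real, symmetric, and positive semi-definite (since $\avgLuders$ is self-adjoint and PSD as a linear map on $\Herm{\dims}$), so $z^\top B z'$ is a decoupled Rademacher chaos of order two. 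Bounding $\EE_z[\exp(\beta z^\top B z')] \le \exp(\beta^2 \|B z'\|_2^2/2)$ via $\cosh t \le e^{t^2/2}$ and then applying Hanson-Wright to the quadratic form $(z')^\top B^2 z'$ yields, in the regime $\ns = O(\dims^2/\eps^2)$ where the resulting exponent is $O(1)$, the clean bound $\chisquare{\expectDistrOf{\sigma\sim\mathcal{D}}{\bfP_\sigma}}{\bfP_{\qmm}} = O(\beta^2 \|B\|_F^2)$.

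The final step is to control $\|B\|_F^2$ by the appropriate norm of $\avgLuders$. For~\eqref{equ:minmax-ub}, I pick the $V_i$'s as any \emph{fixed} orthonormal traceless basis; extending it to a full orthonormal basis of $\Herm{\dims}$ shows $\|B\|_F^2$ is a partial sum of $\sum_{i,j=1}^{\dims^2} |\hdotprod{V_j}{\avgLuders(V_i)}|^2 = \hsnorm{\avgLuders}^2$, so $\|B\|_F^2 \le \hsnorm{\avgLuders}^2$ for every $\avgLuders$. For~\eqref{equ:maxmin-ub}, I instead choose the $V_i$'s \emph{adaptively} to $\avgLuders$, as the eigenvectors corresponding to the $\dims^2/2$ smallest eigenvalues of $\avgLuders$ restricted to the traceless subspace; then $B$ is diagonal with entries $\lambda_1 \le \cdots \le \lambda_{\dims^2/2}$, and since the top $\dims^2/2$ eigenvalues must carry at least $(\dims^2/2)\,\lambda_{\dims^2/2}$ of the total trace $\tracenorm{\avgLuders}$, one has $\lambda_{\dims^2/2} \le 2\tracenorm{\avgLuders}/\dims^2$, hence $\|B\|_F^2 \le \lambda_{\dims^2/2} \cdot \sum_i \lambda_i \le 2\tracenorm{\avgLuders}^2/\dims^2$. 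The step I anticipate as the main obstacle is the sub-exponential moment control of the decoupled Rademacher chaos $z^\top B z'$: making the two-step Hanson-Wright argument quantitatively tight enough to preserve the clean dependence on $\hsnorm{\avgLuders}^2$ and $\tracenorm{\avgLuders}^2/\dims^2$, without losing extra logarithmic factors that would force a stronger regime than $\ns = O(\dims^2/\eps^2)$, is the technical heart of the argument.
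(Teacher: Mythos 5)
Your overall strategy is the same as the paper's: the same Rademacher perturbation $\sigma_z=\qmm+\Delta_z$ along an orthonormal traceless basis (fixed for the min-max bound, chosen as the low-eigenvalue eigenbasis of $\avgLuders$ for the max-min bound), the same reduction of the decoupled chi-square to the quadratic form $z^\top B z'$ with $B_{ij}=\hdotprod{V_j}{\avgLuders(V_i)}$, the same MGF control of the chaos, and the same eigenvalue bookkeeping ($\|B\|_F^2\le\hsnorm{\avgLuders}^2$ in general, and $\|B\|_F^2\le 2\tracenorm{\avgLuders}^2/\dims^2$ via $\lambda_{\dims^2/2}\le 2\Tr[\avgLuders]/\dims^2$). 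Your worry about the chaos step is misplaced: the two-step $\cosh t\le e^{t^2/2}$ plus Hanson--Wright argument (or, as the paper does, a black-box MGF bound for decoupled Rademacher chaos, \cite[Claim IV.17]{AcharyaCT19}) gives $\exp\{O(\beta^2\|B\|_F^2)\}$ with no logarithmic loss once $\beta\opnorm{B}=O(1)$, which holds here since $\opnorm{B}\le\opnorm{\avgLuders}\le 1$.

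The genuine gap is in the step you treat as routine: verifying $\sigma_z\in\Gamma_\eps$. Non-commutative Khintchine gives only $\opnorm{\sum_i z_iV_i}=O(\sqrt{\dims\log\dims})$, so to force $\tracenorm{\Delta_z}\ge\hsnorm{\Delta_z}^2/\opnorm{\Delta_z}\ge\eps$ you must inflate the constant $\cd$ to order $\sqrt{\log\dims}$. That constant enters $\beta$ quadratically, so your final bound carries an extra $(\log\dims)^2$, and the resulting copy-complexity lower bound degrades to $\Omega(\dims^2/(\eps^2\log\dims))$ --- the theorem as stated, with no logarithms, is not recovered. The paper removes the log in \cref{thm:rand-mat-opnorm-concentration}: for a \emph{fixed} unit vector $x$, the map $z\mapsto(\sum_iz_iV_i)x$ is $V_x\Pi_\ell z$ where $V_x=[V_1x,\dots,V_{\dims^2}x]$ is an exact isometry ($V_xV_x^\dagger=\eye_\dims$, \cref{claim:isometry}); hence $\normtwo{Wx}$ concentrates at $\sqrt{\dims}$ with failure probability $e^{-\Omega(\dims)}$, and a $1/2$-net of size $e^{O(\dims)}$ gives $\opnorm{W}=O(\sqrt{\dims})$ outright. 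Your identity $\sum_iV_i^2=\dims\,\eye$ is the weak-variance shadow of this structure, but NC Khintchine cannot exploit it. Separately, you should clip the perturbation as in \cref{def:perturbation} (replacing $\Delta_z$ by $\barDelta_z$ with $\opnorm{\barDelta_z}\le1/\dims$): on the exponentially rare event $\opnorm{\Delta_z}>1/\dims$ your $\sigma_z$ is not a density matrix and $\bfP_{\sigma_z}$ is undefined, so one needs the additive $e^{-\Omega(\dims)}$ correction of \cref{claim:projected-chi-square-ub} to pass from the clipped to the unclipped quadratic form.
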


From~\cref{lem:supop-eigendecomposition}, note that $\tracenorm{\avgLuders}=\dims$. Moreover, $\hsnorm{\avgLuders}^2=\sum_{i=1}^{\dims^2}\lambda_i^2\le (\max_{i}\lambda_i) \sum_{i=1}^{\dims^2}\lambda_i\le \dims$ (since $\lambda_i\le 1$). 
Thus,
\[
\frac{\max_{\avgLuders}\tracenorm{\avgLuders}^2}{\dims^2}=1,\quad\max_{\avgLuders}\hsnorm{\avgLuders}^2\le \dims.
\]
Combining~\eqref{equ:max-min-lb} and~\eqref{equ:maxmin-ub}, we conclude that for fixed measurements $\ns=\Omega(\dims^2/\eps^2)$ and prove~\cref{thm:no-shared-lb}. Combining~\eqref{equ:min-max-lb} and~\eqref{equ:minmax-ub}, we recover the $n=\Omega(\dims^{3/2}/\eps^2)$ lower bound for randomized non-adaptive schemes, which was shown in~\cite{BubeckC020}.

\begin{remark}
    One can show that $\avgLuders$ is the L\"uders channel of a POVM $\POVM\eqdef\{\frac{1}{\ns}M_x^i\}_{x\in[k], i\in[\ns]}$ which is the ensemble of all measurements. One can define $\POVM^\dagger \POVM$ where we slightly abused the notation and treated $\POVM:\Herm{\dims}\mapsto \R^k$ as a linear mapping from quantum states to probability vectors. $\avgLuders$ and $\POVM^\dagger \POVM$ are similar but slightly different superoperators\footnote{The differ by a scalar factor if $\qbit{\psi_x^i}$ have equal norms, but can be very different otherwise.}.~\cite{guctua2020fast} used $\POVM^\dagger\POVM$ to derive \textit{upper bounds} for quantum tomography for three specific types of measurements. Our result is orthogonal to their work in that we prove lower bounds for general rank-1 measurements.
\end{remark}

\section{Proof of~\cref{thm:maxmin-and-minmax-ub}}
We now prove~\cref{thm:maxmin-and-minmax-ub}, unifying the lower bounds for randomized and fixed non-adaptive measurements. We start with a generic construction of hard case quantum states which allows the lower bound instance to be measurement-dependent. Then we formally relate the chi-square divergence in~\eqref{equ:lecam-total-var} to the L\"uders channel. With these ingredients, we finally show that the min-max and max-min chi-square divergences are characterized by different eigenvalues of the L\"uders channel, thus completing the proof.

\subsection{Hard instance construction}
We now describe a generic construction of a distribution over density matrices that will serve as hard case for both fixed and randomized measurements. {We will take a finite set of trace-0 orthonormal Hermitian matrices. Then, we take a linear combination of these matrices with coefficients chosen at random to be $\pm 1$ (appropriately normalized). When we add maximally mixed state $\qmm$ to this distribution's output we obtain a perturbed distribution around $\qmm$.}
 \begin{definition}
 \label{def:perturbation}
     Let $\frac{\dims^2}{2}\le \ell\le \dims^2-1$ and $\hbasis=(V_1, \ldots, V_{\dims^2})$ be an orthonormal basis of $\Herm{\dims}$ with $V_{\dims^2}=\eye_\dims/\sqrt{\dims}.$ Define $\ptbDistr(\hbasis)$ as follows. Let  $z=(z_1, \ldots, z_\ell)\in\{-1, 1\}^\ell$ be uniformly drawn from the $\{-1, 1\}^\ell$ hypercube. Let $\cd$ be a universal constant, then define
     \begin{equation}
         \Delta_z = \frac{c\eps}{\sqrt{\dims}}\cdot\frac{1}{\sqrt{\ell}}\Paren{\sum_{i=1}^\ell z_iV_i}, \quad \barDelta_z= \Delta_z\min\left\{1, \frac{1}{\dims \opnorm{\Delta_z}}\right\}.
         \label{equ:delta_z}
     \end{equation}
     $\barDelta_z$ normalizes $\Delta_z$ so that its maximum absolute eigenvalue is at most $1/\dims$. Let
     \begin{equation}
         \sigma_z=\qmm + \barDelta_z,\quad 
         \label{equ:sigma_z}
     \end{equation}
     This defines a distribution over states (induced by the randomness in $z$) which we denote by $\ptbDistr(\hbasis)$.
 \end{definition}

Note that after normalizing the perturbations, $\sigma_z$ is a valid density matrix. 
However, the trace distance $\|\sigma_z-\qmm\|_1$ may not be greater than $\eps$. 
Nevertheless, we can show that the probability of this bad event is negligible. The central claim is ~\cref{thm:rand-mat-opnorm-concentration} which states that the operator norm of a random matrix with independently perturbed orthogonal components is $O(\sqrt{\dims})$ with high probability. The proof is in~\cref{app:prop:perturbation-trace-distance}.
\begin{restatable}{theorem}{randmatopnorm}
\label{thm:rand-mat-opnorm-concentration}
    Let $V_1, \ldots, V_{\dims^2}\in\C^{\dims\times \dims}$ be an orthonormal basis of $\C^{\dims\times \dims}$ and $\ptb_1, \ldots, \ptb_{\dims^2}\in\{-1, 1\}$ be independent symmetric Bernoulli random variables. Let $W=\sum_{i=1}^{\ell}\ptb_iV_i$ where $\ell\le \dims^2$. For all $\alpha>0$, there exists $\cop_\alpha$, {which is increasing in $\alpha$} such that
    \[
    \probaOf{\opnorm{W}>\cop_\alpha\sqrt{\dims}}\le 2\exp\{-\alpha\dims\}.
    \]
\end{restatable}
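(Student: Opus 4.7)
The plan is to bound $\opnorm{W}$ via an $\epsilon$-net argument on the complex unit sphere, reducing the matrix deviation to a scalar Rademacher sum whose coefficients have $\ell_2$ norm at most $1$ by Parseval's identity (this is where the orthonormality of $\{V_i\}$ enters). Since the $V_i$ need not be Hermitian, I would work with the bilinear form $y^\dagger W x$ rather than trying to symmetrize through a Hermitian dilation and matrix-Rademacher inequalities (which, moreover, would carry a spurious $\log d$ that prevents the sharp $\exp(-\alpha d)$ tail).

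Fix, say, $\epsilon = 1/4$ and let $\mathcal{N} \subset S^{2d-1}$ be an $\epsilon$-net of the complex unit sphere in $\C^d$. A standard volume argument gives $|\mathcal{N}| \le (1 + 2/\epsilon)^{2d} = e^{Bd}$ for an absolute constant $B$, and a routine approximation argument gives
\[
\opnorm{W} \;\le\; \frac{1}{1-2\epsilon}\max_{x,y\in \mathcal{N}} |y^\dagger W x|.
\]
For any fixed unit vectors $x,y\in\C^d$, I would write
\[
y^\dagger W x \;=\; \sum_{i=1}^\ell \ptb_i\, a_i, \qquad a_i := y^\dagger V_i x = \hdotprod{yx^\dagger}{V_i},
\]
and then invoke Bessel's inequality, which since $\{V_i\}_{i=1}^{\dims^2}$ is an orthonormal basis of $\C^{\dims\times \dims}$ yields
\[
\sum_{i=1}^\ell |a_i|^2 \;\le\; \sum_{i=1}^{\dims^2} |\hdotprod{yx^\dagger}{V_i}|^2 \;=\; \hsnorm{yx^\dagger}^2 \;=\; \|x\|^2\|y\|^2 \;=\; 1.
\]
This is the key structural input: it converts the orthonormality of the $V_i$ into a uniform variance bound on the scalar Rademacher sum.

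With this in hand, splitting $a_i$ into real and imaginary parts and applying Hoeffding's inequality to each gives $\Pr[|y^\dagger Wx|>t]\le 4\exp(-t^2/4)$ for every $t>0$. A union bound over $\mathcal{N}\times \mathcal{N}$ then yields
\[
\Pr\!\left[\opnorm{W}> c\sqrt{\dims}\right] \;\le\; 4\exp\!\Bigl(2Bd - c^2(1-2\epsilon)^2 d/4\Bigr).
\]
Choosing $\cop_\alpha$ large enough that $\cop_\alpha^2(1-2\epsilon)^2/4 - 2B \ge \alpha + \log 2$ (which gives $\cop_\alpha = \Theta(\sqrt{\alpha + 1})$, monotonically increasing in $\alpha$) produces the claimed bound $2\exp(-\alpha d)$. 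The main obstacle is really a bookkeeping one: one must make sure the complex coefficients are handled correctly (the constants above come from splitting into real and imaginary parts) and that $\cop_\alpha$ is chosen so that the net's volumetric growth $e^{O(d)}$ is strictly beaten by the Gaussian tail $e^{-\Theta(c^2)d}$; any approach via matrix Rademacher / Bernstein-type inequalities fails exactly because of the extra $\log d$ factor they carry.
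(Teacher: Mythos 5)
Your proposal is correct, and the core structural input is the same as the paper's: the orthonormality of $\{V_i\}$ forces the relevant quadratic quantity to be $O(1)$ per unit vector, and an $\epsilon$-net plus a union bound converts pointwise sub-Gaussian concentration at scale $\sqrt{\dims}$ into the operator-norm bound, with $\cop_\alpha$ chosen so the Gaussian tail beats the $e^{O(\dims)}$ net size. Where you differ is in how that input is packaged and which concentration tool you invoke. The paper nets only over $x$, writes $Wx = V_x\Pi_\ell\ptb$ with $V_x=[V_1x,\ldots,V_{\dims^2}x]$, proves $V_xV_x^\dagger=\eye_\dims$ (their Claim on isometry), and then cites a vector-valued concentration theorem for $\normtwo{B\ptb}$ around $\hsnorm{B}$ (Vershynin, Thm.~6.3.2, a Bernstein/Hanson--Wright-type result), finishing with the one-sided net bound $\probaOf{\opnorm{W}>\lambda}\le\sum_{y\in\Sigma}\probaOf{\normtwo{Wy}>\lambda/2}$. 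You instead net over both $x$ and $y$ and reduce the bilinear form $y^\dagger Wx=\sum_i\ptb_i\,\hdotprod{yx^\dagger}{V_i}$ to a scalar Rademacher sum controlled by Bessel's inequality $\sum_i|y^\dagger V_ix|^2\le 1$ — which is exactly the bilinear-form restatement of the paper's isometry claim $y^\dagger V_xV_x^\dagger y=\normtwo{y}^2$ — and then apply plain Hoeffding to the real and imaginary parts. Your route is more elementary (no external sub-Gaussian-vector concentration theorem needed, and the isometry computation is replaced by a one-line Bessel bound) at the cost of a squared net size, which only affects the absolute constant $\cop_\alpha$; the paper's route yields slightly cleaner explicit constants ($\kappa_1\le 10$), which it uses downstream in fixing $\cd$. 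Your remark that $W$ is not Hermitian, so one must use either the two-vector bilinear form or the vector norm $\normtwo{Wx}$ rather than the quadratic form $x^\dagger Wx$, is correct and correctly handled.
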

\begin{remark}
    Standard random matrix theory (e.g. \cite{tao2023topics}[Corollary 2.3.5]) states that if each entry of $W$ is independent and uniform from $\{-1, 1\}$, i.e. $W=\sum_{i,j}z_{ij}E_{ij}$ where $E_{ij}$ is a matrix with 1 at position $(i, j)$ and 0 everywhere else, then $\opnorm{W}=O(\sqrt{\dims})$ with high probability.~\cref{thm:rand-mat-opnorm-concentration} generalizes this argument to arbitrary basis $\{V_i\}_{i=1}^{\dims^2}$. {This could be of independent interest.}
\end{remark}

An immediate corollary of~\cref{thm:rand-mat-opnorm-concentration} is that with appropriately chosen constant $c$ in~\cref{def:perturbation}, $\ptbDistr(\hbasis)$ is an almost-$\eps$ perturbation and~\cref{lem:le-cam-approx-eps-perturbation} can be applied to $\ptbDistr(\hbasis)$. 
\begin{corollary}
\label{prop:perturbation-trace-distance}
    Let $\dims ^2/2\le \ell\le \dims^2-1$. Let  $z$ be drawn from a uniform distribution over $\{-1,1\}^{\ell}$ , and $\Delta_z, \sigma_z$ are as defined in~\cref{def:perturbation}. Then, there exists a universal constant $\cd\le 10\sqrt{2}$, such that for $\eps<\frac{1}{\cd^2}$, with probability at least $1-2\exp(-\dims)$, $\opnorm{\Delta_z}\le 1/\dims$ and $\|\Delta_z\|_1\ge \eps$.   
\end{corollary}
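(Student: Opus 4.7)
The plan is to combine the random-matrix concentration bound in Theorem~\ref{thm:rand-mat-opnorm-concentration} with an exact computation of $\hsnorm{\Delta_z}$, interpolating via the elementary Schatten inequality $\tracenorm{A}\ge \hsnorm{A}^2/\opnorm{A}$. The core observation is that the operator norm is \emph{random} and hard to control, the trace norm is \emph{what we want} but awkward to lower bound directly, and the Hilbert--Schmidt norm is \emph{deterministic} and trivial to compute; the Schatten interpolation converts the good random upper bound on $\opnorm{\cdot}$ and the exact value of $\hsnorm{\cdot}$ into a high-probability lower bound on $\tracenorm{\cdot}$.

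First I would compute $\hsnorm{\Delta_z}$ exactly. Since $\{V_i\}$ is orthonormal ($\Tr[V_iV_j]=\delta_{ij}$ by Lemma~\ref{lem:supop-eigendecomposition}) and $z_i^2=1$,
\[
\hsnorm{\Delta_z}^2 \;=\; \frac{\cd^2\eps^2}{\dims\,\ell}\sum_{i,j=1}^{\ell} z_iz_j\Tr[V_iV_j] \;=\; \frac{\cd^2\eps^2}{\dims},
\]
which is a deterministic constant. Next, set $W=\sum_{i=1}^{\ell}z_iV_i$ and apply Theorem~\ref{thm:rand-mat-opnorm-concentration} with $\alpha=1$ to obtain a universal $\cop_1$ (which I will take to satisfy $\cop_1\le 10$) such that $\opnorm{W}\le \cop_1\sqrt{\dims}$ with probability at least $1-2\exp(-\dims)$. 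On that event, using $\ell\ge \dims^2/2$,
\[
\opnorm{\Delta_z} \;=\; \frac{\cd\eps}{\sqrt{\dims\,\ell}}\,\opnorm{W} \;\le\; \frac{\cd\,\cop_1\,\eps}{\sqrt{\ell}} \;\le\; \frac{\sqrt{2}\,\cd\,\cop_1\,\eps}{\dims}.
\]

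Now I would fix $\cd := \sqrt{2}\cop_1 \le 10\sqrt{2}$. Under the hypothesis $\eps< 1/\cd^2$, the display above becomes $\opnorm{\Delta_z}\le \cd^2\eps/\dims\le 1/\dims$, which gives the first claim (and in particular ensures $\barDelta_z=\Delta_z$, so the perturbation is not clipped in~\eqref{equ:delta_z}). For the trace-norm lower bound, apply $\tracenorm{A}\ge \hsnorm{A}^2/\opnorm{A}$ (which follows from $\sum_i\sigma_i^2\le(\max_i\sigma_i)\sum_i\sigma_i$ on the singular values of $A$) with $A=\Delta_z$:
\[
\tracenorm{\Delta_z} \;\ge\; \frac{\hsnorm{\Delta_z}^2}{\opnorm{\Delta_z}} \;\ge\; \frac{\cd^2\eps^2/\dims}{\sqrt{2}\,\cd\,\cop_1\,\eps/\dims} \;=\; \frac{\cd\,\eps}{\sqrt{2}\,\cop_1} \;=\; \eps,
\]
establishing the second claim on the same event.

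The only nontrivial ingredient is the operator-norm concentration in Theorem~\ref{thm:rand-mat-opnorm-concentration}, which is proved separately; everything else is a single-line Schatten-norm interpolation and the algebraically trivial computation of $\hsnorm{\Delta_z}$. I expect the main conceptual point to be the choice of $\cd$: it must be large enough that $\eps<1/\cd^2$ implies $\opnorm{\Delta_z}\le 1/\dims$, yet the lower bound $\tracenorm{\Delta_z}\ge \eps$ comes out \emph{exactly}, with no slack; this forces the relation $\cd = \sqrt{2}\cop_1$ and, via the explicit bound on $\cop_1$ from Theorem~\ref{thm:rand-mat-opnorm-concentration}, yields $\cd\le 10\sqrt{2}$.
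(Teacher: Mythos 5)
Your proof is correct and follows essentially the same route as the paper's: both compute $\hsnorm{\Delta_z}=\cd\eps/\sqrt{\dims}$ exactly, invoke Theorem~\ref{thm:rand-mat-opnorm-concentration} with $\alpha=1$ to bound $\opnorm{\Delta_z}$, and use the H\"older/Schatten inequality $\tracenorm{A}\ge\hsnorm{A}^2/\opnorm{A}$ with the same choice $\cd=\sqrt{2}\cop_1\le 10\sqrt{2}$. The only difference is cosmetic bookkeeping of the constants in the final step.
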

\begin{proof}
By H\"older's inequality, we have that for all matrices $A$,
\[
\opnorm{A}\tracenorm{A}\ge \hsnorm{A}^2.
\]
Note that $\Delta_z=\frac{\cd\eps}{\sqrt{\dims\ell}}W$ and $\hsnorm{\Delta_z}=\frac{c\eps}{\sqrt{\dims}}$. Thus setting $\alpha=1$ and $\cop=\cop_1$ in~\cref{thm:rand-mat-opnorm-concentration}, with probability at least $1-2\exp(-\dims)$,
\[
\opnorm{\Delta_z}\le \frac{\cd\eps}{\sqrt{\dims\ell}}\cdot \kappa \sqrt{\dims}=\frac{c\kappa\eps}{\sqrt{\ell}}.
\]
 This implies \newja{that}
\[
\tracenorm{\Delta_z}\ge \hsnorm{\Delta_z}^2/\opnorm{\Delta_z}\ge \frac{c\eps}{\kappa}\cdot\frac{\sqrt{\ell}}{\dims}.
\]
In the proof of~\cref{thm:rand-mat-opnorm-concentration} in~\cref{app:prop:perturbation-trace-distance}, we can show that $\kappa=\kappa_1\le 10$. Thus choosing $\cd=\sqrt{2}\kappa\le 10\sqrt{2}$, we guarantee that $\tracenorm{\Delta_z}>\eps$ due to $\ell\ge \dims^2/2$. As long as $\eps\le \frac{1}{200}$, we have $\opnorm{\Delta_z}\le 1/\dims$ and thus $\sigma_z=\qmm + \Delta_z$ is a valid density matrix. This completes the proof of~\cref{prop:perturbation-trace-distance}.
\end{proof}

Different bounds for min-max and max-min divergences in~\cref{thm:maxmin-and-minmax-ub} are due to whether or not nature can choose $\hbasis$ dependent on $\avgLuders$, which in turn depends on the measurements $\POVM^\ns$. For randomized schemes, we need to upper bound the min-max divergence, and we can simply choose a fixed $\hbasis$ that is uniformly bad for all $\POVM^\ns$. For fixed measurements however, under the max-min framework, nature could choose the hard distribution depending on $\POVM^\ns$. Specifically, with $\hbasis=\hbasis_{\avgLuders}$ and $\ell$ small, $\sigma_z-\qmm$ completely lies in an eigenspace of $\avgLuders$ with the $\ell$ smallest eigenvalues, thus generalizing the intuition from the toy example in~\cref{sec:fixed-disadvantage}.

\subsection{Relating chi-squared divergence to L\"uders channel}
We would like to bound the chi-squared distance between the outputs of the measurements in the cases when the input is the maximally mixed state, versus the case when it is chosen from an $\eps$ perturbation. In particular, we would like to bound~\eqref{equ:lecam-total-var}. We will bound this in terms of the average L\"uders channel $\avgLuders$ in~\cref{lem:decoupled-chi-square}. 
This will be our main technical lemma, which will formalize the intuition established in~\cref{sec:avg-luders} that for fixed measurements we should try to choose $\Delta_\sigma$ from a subspace with small eigenvalues to {obtain strong} lower bounds. Due to its importance, we provide the complete proof.
\begin{lemma}
\label{lem:decoupled-chi-square}
Let $\sigma, \sigma'$ be independently drawn from a distribution $\mathcal{D}$, and $\POVM_i$ be rank-1 POVM as in~\eqref{equ:rank1-povm} for $i=1, \ldots, n$. Define $\Delta_\sigma = \sigma -\qmm$. Then
    \begin{align}
    \label{equ:chi-square-quantum}
        \chisquare{\expectDistrOf{\sigma\sim\mathcal{D}}{\bfP_{\sigma}}}{\bfP_{\qmm}}&\le \expectDistrOf{\sigma, \sigma'\sim \mathcal{D}}{\exp\left\{\ns\dims \hdotprod{\Delta_{\sigma}}{\avgLuders(\Delta_{\sigma'})}\right\}}-1\\
        &=\expectDistrOf{\sigma, \sigma'\sim \mathcal{D}}{\exp\left\{\ns\dims \cdot\VecOp(\Delta_\sigma)^\dagger\avgChoi\VecOp(\Delta_{\sigma'}) \right\}}-1\nonumber
    \end{align}
    where $\avgLuders$ is the average L\"uders channel defined in~\cref{equ:average-luders-channel} and $\avgChoi$ is its Choi representation.
\end{lemma}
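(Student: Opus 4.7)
My plan is to apply the standard Ingster–Suslina method (a.k.a.\ the ``$\chi^2$–$H^2$ tensorization trick'') to the product outcome distribution in \eqref{equ:outcome-distribution}, and then recognize the resulting per–copy correlation quantities as inner products involving the L\"uders channel. Concretely, for two independent copies $\sigma, \sigma'\sim\mathcal{D}$ I would start from the identity
\begin{equation*}
\chisquare{\expectDistrOf{\sigma}{\bfP_\sigma}}{\bfP_{\qmm}}
=\expectDistrOf{\sigma,\sigma'}{\sum_{\bx}\frac{\bfP_\sigma(\bx)\bfP_{\sigma'}(\bx)}{\bfP_{\qmm}(\bx)}}-1,
\end{equation*}
then use the product structure $\bfP_\rho=\bigotimes_i \p^i_\rho$ to factor the inner sum into $\prod_{i=1}^{\ns}\bigl(1+H_i(\sigma,\sigma')\bigr)$, where
\begin{equation*}
H_i(\sigma,\sigma')\eqdef \sum_x \frac{(\p^i_\sigma(x)-\p^i_{\qmm}(x))(\p^i_{\sigma'}(x)-\p^i_{\qmm}(x))}{\p^i_{\qmm}(x)}.
\end{equation*}
A standard expansion confirms this factorization (the cross–terms $\sum_x(\p^i_\sigma-\p^i_{\qmm})$ vanish). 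Then $1+x\le e^x$ yields
$\prod_i(1+H_i)\le\exp\bigl(\sum_i H_i\bigr)$, and taking expectations gives the exponential form on the right of \eqref{equ:chi-square-quantum}. This step is entirely classical and mechanical.

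The heart of the proof is identifying $\sum_i H_i$ with the quantum object $\ns\dims\hdotprod{\Delta_\sigma}{\avgLuders(\Delta_{\sigma'})}$. Using Born's rule and $\qmm=\eye_\dims/\dims$, I compute $\p^i_{\qmm}(x)=\qdotprod{\psi^i_x}{\psi^i_x}/\dims$ and $\p^i_\sigma(x)-\p^i_{\qmm}(x)=\matdotprod{\psi^i_x}{\Delta_\sigma}{\psi^i_x}$ for the rank-1 POVM $M^i_x=\qproj{\psi^i_x}$. Substituting,
\begin{equation*}
H_i(\sigma,\sigma')=\dims\sum_x\frac{\matdotprod{\psi^i_x}{\Delta_\sigma}{\psi^i_x}\matdotprod{\psi^i_x}{\Delta_{\sigma'}}{\psi^i_x}}{\qdotprod{\psi^i_x}{\psi^i_x}}.
\end{equation*}
By the cyclic property of trace, each summand equals $\Tr\!\bigl[\Delta_\sigma\,\qproj{\psi^i_x}\Delta_{\sigma'}\qproj{\psi^i_x}/\qdotprod{\psi^i_x}{\psi^i_x}\bigr]$, and summing over $x$ recognises the bracket as $\Delta_\sigma\,\Luders_i(\Delta_{\sigma'})$ by the Kraus form \eqref{equ:rank-1-luders-channel}. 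Hence $H_i(\sigma,\sigma')=\dims\,\hdotprod{\Delta_\sigma}{\Luders_i(\Delta_{\sigma'})}$ (using that $\Delta_\sigma$ is Hermitian so the Frobenius inner product equals $\Tr[\Delta_\sigma\,\Luders_i(\Delta_{\sigma'})]$), and summing over $i=1,\ldots,\ns$ and pulling out the $1/\ns$ from $\avgLuders$ gives $\sum_i H_i=\ns\dims\,\hdotprod{\Delta_\sigma}{\avgLuders(\Delta_{\sigma'})}$, as required.

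The Choi-form equality in the lemma is then immediate: by $\VecOp(\avgLuders(X))=\avgChoi\VecOp(X)$ and the homomorphism $\hdotprod{A}{B}=\VecOp(A)^\dagger\VecOp(B)$ stated in the Preliminaries, the exponent rewrites as $\ns\dims\cdot\VecOp(\Delta_\sigma)^\dagger\avgChoi\VecOp(\Delta_{\sigma'})$. The main obstacle is essentially bookkeeping: carefully matching the three $\qbit{\psi^i_x}$'s and the normalisation $1/\qdotprod{\psi^i_x}{\psi^i_x}$ arising from $\p^i_{\qmm}(x)$ with the Kraus operators $K_x=\qproj{\psi^i_x}/\sqrt{\qdotprod{\psi^i_x}{\psi^i_x}}$ appearing in the L\"uders channel. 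This is exactly where the \emph{rank-1} assumption on the POVMs (which is WLOG by~\cite{ChenCH021}) is used; without it the identification $H_i=\dims\hdotprod{\Delta_\sigma}{\Luders_i(\Delta_{\sigma'})}$ would fail.
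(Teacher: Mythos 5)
Your proposal is correct and follows essentially the same route as the paper: the paper also invokes the decoupled chi-square expansion (via \cite[Lemma III.8]{AcharyaCT19}, which is exactly your Ingster--Suslina tensorization into $\prod_i(1+H_i)$), computes $H_i(\sigma,\sigma')=\dims\hdotprod{\Delta_\sigma}{\Luders_i(\Delta_{\sigma'})}$ by Born's rule and cyclicity of trace, applies $1+x\le e^x$, and passes to the Choi form via the vectorization homomorphism. The only cosmetic difference is that you derive the tensorization identity directly rather than citing it as a black-box lemma.
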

\begin{proof} 
    The proof uses ideas from the decoupled chi-square fluctuations introduced in \cite{AcharyaCT19}.  We can directly bound the chi-square distance using the following lemma which is from~\cite{Pollard:2003}.
    \begin{lemma}[{\cite{Pollard:2003},\cite[Lemma III.8]{AcharyaCT19}}]
    \label{lem:chi-square-expansion}
        Let $\bfP=\p\supparen{1}\otimes \cdots\otimes \p\supparen{n}$ be a fixed product distribution and $\bfQ_\theta=\q_{\theta}\supparen{1}\otimes \cdots \otimes \q_{\theta}\supparen{n}$ be parameterized by a random variable $\theta$. Then
        \[
        \chisquare{\expectDistrOf{\theta}{\bfQ_{\theta}}}{\bfP}=\expectDistrOf{\theta, \theta'}{\prod_{i=1}^n(1+H_i(\theta, \theta'))}-1,
        \]
        where $\theta'$ is an independent copy of $\theta$ and 
        \[
        H_i(\theta, \theta')\eqdef \expectDistrOf{x\sim \p\supparen{i}}{\delta_\theta\supparen{i}(x)\delta_{\theta'}\supparen{i}(x)},\quad \delta_{\theta}\supparen{i}(x)\eqdef\frac{\q_\theta\supparen{i}(x)-\p\supparen{i}(x)}{\p\supparen{i}(x)}.
        \]
    \end{lemma}

In our problem, $\bfP$ will be $\bfP_{\qmm}$, the distribution over the output of measurements across the $n$ copies when the underlying state is  maximally mixed, and $\expectDistrOf{\theta}{\bfQ_{\theta}}$ will be $\bfP_{\sigma}$, the mixture distribution over the output of measurements when the underlying state is  parameterized by a random density matrix $\sigma$ induced by the perturbation. These are defined in~\eqref{equ:outcome-distribution}.

We first compute the necessary quantities by appropriate substitution. Recall that $\p_{\rho}^i(\cdot)$ is the output distribution of the measurement on the $i$th copy. 
    \[
    \delta_{\sigma}^i(x)=\frac{\p_{\sigma}^i(x)-\p_{\qmm}^i(x)}{\p_{\qmm}^i(x)}, \;x\in [k].
    \]
    We now evaluate $H_i(\sigma, \sigma')$ by expanding the probabilities using Born's rule.
    \begin{align*}
        H_i(\sigma, \sigma') &=\expectDistrOf{x\sim\p_{\qmm}^i}{\frac{(\p_{\sigma}^i(x)-\p_{\qmm}^i(x))(\p_{\sigma'}^i(x)-\p_{\qmm}^i(x))}{(\p_{\qmm}^i(x))^2}}\\
        &=\sum_{x}\frac{(\p_{\sigma}^i(x)-\p_{\qmm}^i(x))(\p_{\sigma'}^i(x)-\p_{\qmm}^i(x))}{\p_{\qmm}^i(x)}\\
        &=\sum_{x}\frac{\matdotprod{\psi_x^i}{\Delta_{\sigma}}{\psi_x^i}\matdotprod{\psi_x^i}{\Delta_{\sigma'}}{\psi_x^i}}{\qdotprod{\psi_x^i}{\psi_x^i}/\dims}.
    \end{align*}
    This expression can now be related to the L\"uders channel. Adding trace to the numerator does not change the value, and from this we can apply cyclicity and linearity of trace,
    \begin{align*}
        H_i(\sigma, \sigma')&=\dims \Paren{\sum_{x}\frac{\Tr[{\Delta_{\sigma}}\qbit{\psi_x^i}\matdotprod{\psi_x^i}{\Delta_{\sigma'}}{\psi_x^i}\qadjoint{\psi_x^i}]}{\qdotprod{\psi_x^i}{\psi_x^i}}}\\
        &=\dims\cdot \Tr\left[\sum_{x}\frac{{\Delta_{\sigma}}\qbit{\psi_x^i}\matdotprod{\psi_x^i}{\Delta_{\sigma'}}{\psi_x^i}\qadjoint{\psi_x^i}}{\qdotprod{\psi_x^i}{\psi_x^i}}\right]\\
        &=\dims\cdot \Tr\left[{\Delta_{\sigma}}\sum_{x}\frac{\qbit{\psi_x^i}\matdotprod{\psi_x^i}{\Delta_{\sigma'}}{\psi_x^i}\qadjoint{\psi_x^i}}{\qdotprod{\psi_x^i}{\psi_x^i}}\right]\\
        &=\dims\cdot \Tr\left[\Delta_{\sigma}\Luders_i(\Delta_{\sigma'})\right]
        =\dims \hdotprod{\Delta_{\sigma}}{\Luders_i(\Delta_{\sigma'})}\in \R,
    \end{align*}
where the last step uses the fact that $\Delta_{\sigma}$ is Hermitian.

Then, using~\cref{lem:chi-square-expansion}, and the fact that $1+x\le \exp(x)$, we obtain
    \begin{align*}
        \chisquare{\expectDistrOf{\sigma\sim\mathcal{D}}{\bfP_{\sigma}}}{\bfP_{\qmm}}&=\expectDistrOf{\sigma, \sigma'}{\prod_{i=1}^n(1+H_i(\sigma, \sigma'))}-1\\
        &\le \expectDistrOf{\sigma,\sigma'}{\exp\left\{\sum_{i=1}^nH_i(\sigma, \sigma')\right\}}-1\\
        &=\expectDistrOf{\sigma,\sigma'}{\exp\left\{\dims\sum_{i=1}^\ns\hdotprod{\Delta_{\sigma}}{\Luders_i(\Delta_{\sigma})}\right\}}-1.
    \end{align*}
    By linearity of the Hibert-Schmidt inner product and definition of $\avgLuders$,
    \begin{align*}
    \chisquare{\expectDistrOf{\sigma\sim\mathcal{D}}{\bfP_{\sigma}}}{\bfP_{\qmm}}
       &\le \expectDistrOf{\sigma,\sigma'}{\exp\left\{\ns\dims\hdotprod{\Delta_{\sigma'}}{\frac{1}{\ns}\sum_{i=1}^{\ns}\Luders_i(\Delta_\sigma)}\right\}}-1\\
        &=\expectDistrOf{\sigma, \sigma'}{\exp\{\ns\dims\hdotprod{\Delta_{\sigma}}{\bar{\mathcal{H}}(\Delta_{\sigma'})}\}}-1.
    \end{align*}
    Using homomorphism $\VecOp(\Luders_{\POVM}(X))=\Choi_{\POVM}\VecOp(X)$, we have $\hdotprod{\Delta_{\sigma}}{\bar{\mathcal{H}}(\Delta_{\sigma'})}=\VecOp(\Delta_{\sigma})\avgChoi\VecOp(\Delta_{\sigma'})$, completing the proof.
\end{proof}

\paragraph{Explaining the example in~\cref{sec:fixed-disadvantage}.} We now use~\cref{lem:decoupled-chi-square} to explain why choosing a fixed basis measurement $\{\qproj{x}\}_{x=0}^{\dims-1}$ for all copies as in~\cref{sec:fixed-disadvantage} would fail. Since there are only $\dims$ rank-1 projectors, the rank of $\avgChoi$ is $\dims$, but $\avgChoi$ has a dimension of $\dims^2\times\dims^2$ and thus there are a total of $\dims^2-\dims$ eigenvectors with 0 eigenvalues. From~\cref{prop:perturbation-trace-distance}, we know that there must exist a trace-0 $\Delta$ in the 0-eigenspace such that $\sigma=\qmm+\Delta\in\mathcal{P}_{\eps}$. For this particular $\sigma$ the upper bound in~\eqref{equ:chi-square-quantum} is 0, and thus it is impossible to distinguish $\qmm$ and $\sigma$. This is consistent with the discussion in~\cref{sec:fixed-disadvantage}.

We can make a more general argument that to avoid the catastrophic failure similar to the dummy example in~\cref{sec:fixed-disadvantage}, $\avgChoi$ has to be nearly full-rank: $\text{rank}(\avgChoi)\ge (1-o(1))\dims^2$. Thus $(1-o(1))\dims^2$ linearly independent rank-1 projectors are needed in all the POVMs. Indeed if otherwise, the dimension of the 0-eigenspace of $\avgLuders$ is $\Omega(\dims^2)$, we can again invoke~\cref{prop:perturbation-trace-distance} (perhaps with some different constants) to find a \emph{single fixed} $\sigma$ that completely fools the measurement scheme.

\subsection{Bounding the max-min and min-max divergences}
\label{sec:lower-proof}

Let $\hbasis=(V_1, \ldots, V_{\dims^2}=\eye_\dims/\sqrt{\dims})$ be an orthonormal basis of $\Herm{\dims}$. We now upper bound the expression~\eqref{equ:chi-square-quantum} in~\cref{lem:decoupled-chi-square} when $\mathcal{D}=\ptbDistr(\hbasis)$, defined in~\cref{def:perturbation}. The result is in~\cref{thm:chi-square-upper-bound}. 
The central claim is that the chi-squared divergence is related to the Hilbert-Schmidt norm of the projection of $\avgLuders$ onto the subspace defined by $V_1, \ldots, V_{\ell}$.

\begin{restatable}{theorem}{chisqub}
\label{thm:chi-square-upper-bound}
    Let $\frac{\dims^2}{2}\le \ell\le \dims^2-1$, $\hbasis=(V_1, \ldots, V_{\dims^2}=\eye_\dims/\sqrt{\dims})$ be an orthonormal basis of $\Herm{\dims}$, $ V\eqdef[\VecOp(V_1), \ldots, \VecOp(V_\ell)]$ and $\sigma_z, \sigma_{z'}\sim \ptbDistr(\hbasis)$ defined in~\cref{def:perturbation}. Then for $\ns<\frac{\dims^2}{6\cd^2\eps^2}$,
    \begin{equation}
        \expectDistrOf{\sigma_z,\sigma_{z'}}{\exp\left\{\ns\dims \hdotprod{\barDelta_{z'}}{\avgLuders(\barDelta_z)}\right\}}-1\le \exp\left\{\frac{\cd^2\ns^2\eps^4}{\ell^2}\hsnorm{V^\dagger \avgChoi V}^2\right\}-1+\frac{4}{e^\dims}.
        \label{equ:chi-square-final-ub}
    \end{equation} 
\end{restatable}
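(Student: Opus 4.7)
The plan is to split the expectation according to whether the truncation in~\eqref{equ:delta_z} is active, peel the truncation off by an elementary MGF inequality, and then reduce to a bilinear Rademacher chaos controlled by the Hilbert--Schmidt norm of $M:=V^\dagger\avgChoi V$. Writing $\barDelta_z=s_z\Delta_z$ with $s_z=\min\{1,1/(\dims\opnorm{\Delta_z})\}\in[0,1]$, and using $\VecOp(\Delta_z)=\tfrac{\cd\eps}{\sqrt{\dims\ell}}Vz$ together with the identity $\hdotprod{A}{\avgLuders(B)}=\VecOp(A)^\dagger\avgChoi\VecOp(B)$, the exponent factorizes as
\[
\ns\dims\,\hdotprod{\barDelta_{z'}}{\avgLuders(\barDelta_z)}\;=\;s_zs_{z'}\cdot\alpha\,z^\top M z',\qquad \alpha=\frac{\cd^2\ns\eps^2}{\ell}.
\]
Because every $V_i$ is Hermitian and the L\"uders channel $\avgLuders$ is self-adjoint and Hermiticity-preserving (its Kraus operators $K_x=\sqrt{M_x}$ are Hermitian), $M$ is real, symmetric, and positive semi-definite, with $\opnorm{M}\le\opnorm{\avgChoi}\le 1$ by~\cref{lem:supop-eigendecomposition}.

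Next, applying the elementary inequality $e^{\theta y}\le 1+e^y$ (valid for $\theta\in[0,1]$ and any $y\in\R$) with $\theta=s_zs_{z'}$ gives $\EE[e^{\ns\dims\hdotprod{\barDelta_{z'}}{\avgLuders(\barDelta_z)}}]\le\EE[e^{\alpha z^\top Mz'}]+\Pr(E^c)$, where $E=\{s_z=s_{z'}=1\}$. Then~\cref{thm:rand-mat-opnorm-concentration}, applied to each of the Rademacher vectors $z,z'$ for a bounded choice of its parameter, combined with the inclusion $\{\opnorm{\Delta_z}>1/\dims\}\subseteq\{\opnorm{\sum_iz_iV_i}>\cop\sqrt{\dims}\}$ that follows from $\eps<1/\cd^2$ and $\ell\ge\dims^2/2$, bounds $\Pr(E^c)\le 4/e^{\dims}$ by a union bound, matching the additive slack in the target.

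All that remains is the chaos bound $\EE[e^{\alpha z^\top Mz'}]\le\exp(C\alpha^2\hsnorm{M}^2)$. Conditioning on $z'$ and applying the Rademacher sub-Gaussian MGF gives $\EE_z[e^{\alpha(Mz')^\top z}]\le e^{\alpha^2\|Mz'\|^2/2}$, reducing the task to controlling the quadratic-form MGF $\EE_{z'}[e^{(\alpha^2/2)\,z'^\top M^2 z'}]$, for which I invoke the standard Hanson--Wright-type bound $\EE[e^{\lambda(z'^\top Nz'-\Tr N)}]\le e^{C_0\lambda^2\hsnorm{N}^2}$ valid in the regime $|\lambda|\opnorm{N}\le c_0$. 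With $N=M^2$ and $\lambda=\alpha^2/2$, the identities $\Tr(M^2)=\hsnorm{M}^2$ and $\hsnorm{M^2}^2\le\opnorm{M}^2\hsnorm{M}^2\le\hsnorm{M}^2$, together with the fact that $\ns<\dims^2/(6\cd^2\eps^2)$ and $\ell\ge\dims^2/2$ force $\alpha\le 1/3$ (so the Hanson--Wright range is respected and $\opnorm{M}\le 1$ tames $\hsnorm{M^2}$), make the quadratic error lower-order and deliver the advertised factor $\exp\!\bigl(\tfrac{\cd^2\ns^2\eps^4}{\ell^2}\hsnorm{M}^2\bigr)$ after absorbing a universal constant into $\cd$. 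The delicate step is ensuring that the truncation slack costs only $4/e^{\dims}$ without destroying the bilinear structure; the inequality $e^{\theta y}\le 1+e^y$ resolves this cleanly, and the rest is routine chaos control.
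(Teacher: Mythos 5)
Your proposal is correct, and its overall architecture matches the paper's: remove the truncation $\barDelta_z\mapsto\Delta_z$ at an additive cost of $4/e^{\dims}$ using the concentration of $\opnorm{\sum_i z_iV_i}$, then bound the resulting bilinear Rademacher chaos $\expectation{e^{\alpha z^\top Mz'}}$ with $M=V^\dagger\avgChoi V$ real, symmetric, p.s.d.\ and $\opnorm{M}\le 1$. The two places where you differ are both legitimate substitutes. For the truncation step, the paper writes $\barDelta_z=a_z\Delta_z$ and conditions on the event $\{f(z,z')<0\text{ and }a_za_{z'}<1\}$, using $a_za_{z'}f\le f$ on the complement; your split on $\{s_z=s_{z'}=1\}$ combined with $e^{\theta y}\le 1+e^{y}$ for $\theta\in[0,1]$ on the complement yields the identical bound $\expectation{e^{\alpha z^\top Mz'}}+\Pr(E^c)$ with $\Pr(E^c)\le 4e^{-\dims}$, and is arguably cleaner since it avoids the case analysis on the sign of the exponent. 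For the chaos step, the paper invokes a ready-made decoupled-chaos MGF lemma (\cite[Claim IV.17]{AcharyaCT19}), which gives $\log\expectation{e^{\lambda z^\top Mz'}}\le\frac{\lambda^2}{2}\hsnorm{M}^2/(1-4\lambda^2\opnorm{M}^2)$ directly for p.s.d.\ $M$; you instead condition on $z'$, apply Hoeffding's lemma coordinatewise to get $e^{\alpha^2\|Mz'\|_2^2/2}$, and control $\expectDistrOf{z'}{e^{(\alpha^2/2)z'^\top M^2z'}}$ by Hanson--Wright, using $\Tr[M^2]=\hsnorm{M}^2$ and $\hsnorm{M^2}^2\le\opnorm{M}^2\hsnorm{M}^2$. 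This is self-contained and needs only $M$ real symmetric with $\opnorm{M}\le 1$ and $\alpha\le 1/3$ (guaranteed by $\ns<\dims^2/(6\cd^2\eps^2)$ and $\ell\ge\dims^2/2$), at the price of an extra universal constant in the exponent. On that constant: your final exponent is $\approx\frac{\cd^4\ns^2\eps^4}{2\ell^2}\hsnorm{M}^2$ rather than the stated $\frac{\cd^2\ns^2\eps^4}{\ell^2}\hsnorm{M}^2$; note that the paper's own proof lands on $\cd^4$ as well (and you cannot literally ``absorb'' the constant into $\cd$, which is already fixed by~\cref{prop:perturbation-trace-distance}), but since the theorem is only used inside $O(\cdot)$ bounds downstream this discrepancy is immaterial in both treatments.
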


We now bound $\hsnorm{V^\dagger \avgChoi V}^2$, which depends on how the basis $\hbasis$ is chosen.
\begin{restatable}{observation}{hsnnorm}
\label{obs:hs-norm-luders-projection}
    For all orthonormal basis $\hbasis$, we have $\hsnorm{V^\dagger \avgChoi V}\le \hsnorm{\avgLuders}$. However when $\hbasis=\hbasis_{\avgLuders}$ in~\cref{lem:supop-eigendecomposition}, for all $\frac{\dims^2}{2}\le\ell\le \dims^2-1$, $\hsnorm{V^\dagger \avgChoi V}=\hsnorm{\avgLuders_\ell}\eqdef\sqrt{\sum_{i=1}^{\ell}\lambda_i^2}$ and $0\le \lambda_1\le\ldots\le\lambda_{\dims^2}=1$ are the eigenvalues of $\avgLuders$.
\end{restatable}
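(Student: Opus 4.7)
The plan is to push everything through the Choi representation and exploit the self-adjointness of $\avgLuders$. First, I would set up the basic identifications: since $\VecOp$ is an isometry ($\hdotprod{A}{B}=\VecOp(A)^\dagger\VecOp(B)$), the orthonormality of $\{V_i\}_{i=1}^{\dims^2}$ in $\Herm{\dims}$ translates into orthonormality of $\{\VecOp(V_i)\}_{i=1}^{\dims^2}$ as vectors in $\C^{\dims^2}$. In particular, the $\dims^2\times\ell$ matrix $V=[\VecOp(V_1),\ldots,\VecOp(V_\ell)]$ satisfies $V^\dagger V=I_\ell$, and $P\eqdef VV^\dagger$ is a rank-$\ell$ orthogonal projection on $\C^{\dims^2}$. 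Moreover, since $\VecOp(\avgLuders(X))=\avgChoi\,\VecOp(X)$ and $\VecOp$ preserves inner products, the singular values of the superoperator $\avgLuders$ coincide with those of the matrix $\avgChoi$, giving $\hsnorm{\avgLuders}=\hsnorm{\avgChoi}$.

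For the general upper bound, I would rewrite
\[
\hsnorm{V^\dagger \avgChoi V}^2 \;=\; \Tr\!\left[V^\dagger \avgChoi^\dagger V V^\dagger \avgChoi V\right] \;=\; \Tr\!\left[P\,\avgChoi^\dagger P \avgChoi\right] \;=\; \hsnorm{P\,\avgChoi\,P}^2,
\]
and then invoke the elementary fact that orthogonal projections are contractions in the Hilbert-Schmidt norm: decomposing $\avgChoi = P\avgChoi + (I-P)\avgChoi$ and applying Pythagoras gives $\hsnorm{P\avgChoi}^2 \le \hsnorm{\avgChoi}^2$, and the same argument on the right gives $\hsnorm{P\avgChoi P}^2 \le \hsnorm{P\avgChoi}^2 \le \hsnorm{\avgChoi}^2$. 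Combined with $\hsnorm{\avgChoi}=\hsnorm{\avgLuders}$, this yields the first part.

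For the equality when $\hbasis=\hbasis_{\avgLuders}$, the key observation is that $\avgLuders$ is self-adjoint as a superoperator: each $\Luders_i$ has Hermitian Kraus operators $K_x=\sqrt{M_x}$, so a direct computation from $\hdotprod{Y}{\Luders_i(X)}=\sum_x\Tr[YK_xXK_x]$ and cyclicity gives $\Luders_i^\dagger=\Luders_i$, and this property is preserved by averaging. Equivalently, each $\Choi_i$, being a sum of rank-one projectors of the form $\qproj{\bar\psi_x}\otimes\qproj{\psi_x}$, is Hermitian (indeed positive semidefinite), and so is $\avgChoi$. Therefore the eigenrelations $\avgLuders(V_j)=\lambda_j V_j$ from~\cref{lem:supop-eigendecomposition} vectorize to $\avgChoi\,\VecOp(V_j)=\lambda_j\VecOp(V_j)$, giving an orthonormal eigenbasis of $\avgChoi$ with real eigenvalues. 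Consequently the $(i,j)$ entry of $V^\dagger \avgChoi V$ equals $\VecOp(V_i)^\dagger \lambda_j \VecOp(V_j)=\lambda_j\delta_{ij}$ for $1\le i,j\le\ell$, so $V^\dagger \avgChoi V=\diag(\lambda_1,\ldots,\lambda_\ell)$ and $\hsnorm{V^\dagger \avgChoi V}=\sqrt{\sum_{i=1}^\ell \lambda_i^2}=\hsnorm{\avgLuders_\ell}$.

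The argument is essentially linear-algebraic bookkeeping, and there is no real obstacle: the only point requiring a small check is the self-adjointness of $\avgLuders$, which hinges on the Kraus operators being the \emph{positive} square roots of the POVM effects; everything else follows immediately from $\VecOp$ being an isometry and from the standard contractive behavior of projections in Hilbert-Schmidt norm.
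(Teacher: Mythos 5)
Your proposal is correct and follows essentially the same route as the paper: the upper bound comes from $V$ being an isometry (the paper applies $\hsnorm{AB}\le\opnorm{A}\hsnorm{B}$ where you use the equivalent projection/Pythagoras contraction argument), and the equality case comes from diagonalizing $\avgChoi$ in the vectorized eigenbasis, giving $(V^\dagger\avgChoi V)_{ij}=\lambda_j\delta_{ij}$ exactly as in the paper's computation of $H_{ij}$. Your extra verification of the self-adjointness of $\avgLuders$ is the content of the paper's Fact~\ref{fact:expect-op-property} and is handled correctly.
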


The proof of ~\cref{thm:chi-square-upper-bound} is in~\cref{app:thm:chi-square-upper-bound} and the proof of~\cref{obs:hs-norm-luders-projection} is in~\cref{app:obs:hs-norm-luders-projection}. We can now prove Theorem~\ref{thm:maxmin-and-minmax-ub}. It is more straightforward to prove the min-max upper bound~\eqref{equ:min-max-lb} by setting $\hbasis$ as an arbitrary fixed basis that satisfies~\cref{def:perturbation}. For example, one can choose the generalized Gell-Mann basis, 
\begin{align*}
            \sigma_{0,0}&\eqdef\frac{\eye_\dims}{\sqrt{\dims}},\\
            \sigma_{k,l}\supparen{+}&\eqdef \frac{1}{\sqrt{2}}(\qoutprod{k}{l}+\qoutprod{l}{k}), &0\le k<l\le \dims-1,\\
            \sigma_{k,l}\supparen{i}&\eqdef \frac{1}{\sqrt{2}}(-i\qoutprod{k}{l}+i\qoutprod{l}{k}), &0\le k<l\le \dims-1,\\
            \sigma_{k,k}&\eqdef \frac{k}{k+1}\Paren{-k\qproj{k}+\sum_{j=0}^{k-1}\qproj{j}},&1\le k\le \dims-1.
        \end{align*}
We can relabel them as $V_1, \ldots, V_{\dims^2}$ where $V_{\dims^2}=\sigma_{0,0}$. This is a natural extension of Pauli matrices for $\dims=2$. It can be easily verified that these $\dims^2$ matrices indeed form an orthonormal basis over $\Herm{\dims}$. Using~\cref{lem:decoupled-chi-square} and~\cref{thm:chi-square-upper-bound}, setting $\ell=\dims^2-1$,
\begin{align*}
    \min_{\mathcal{D}\in \Gamma_\eps}\max_{\mathcal{M}^n\text{ fixed}}\chisquare{\expectDistrOf{\sigma\sim\mathcal{D}}{\bfP_{\sigma}}}{\bfP_{\qmm}}&\le \max_{\mathcal{M}^n\text{ fixed}}\chisquare{\expectDistrOf{\sigma\sim\ptbDistr(\hbasis)}{\bfP_{\sigma}}}{\bfP_{\qmm}}\\
    &\le \bigO{\frac{\ns^2\eps^4}{\dims^4}\max_{\avgLuders}\hsnorm{\avgLuders}^2}.
\end{align*}

When upper bounding the max-min divergence~\eqref{equ:maxmin-ub}, we would have the freedom to choose a basis $\hbasis$ that depends on $\avgLuders$, 
which is determined by the measurement $\POVM^\ns$. More precisely, we can set $\hbasis=\hbasis_{\avgLuders}$ and $\ell=\dims^2/2$, 
and the perturbations $\ptbDistr(\hbasis_{\avgLuders})$ would be along directions that are least sensitive for the measurement scheme, which leads to the extra $\dims$ factor in the chi-square divergence upper bound,
\begin{align*}
    \max_{\mathcal{M}^n\text{ fixed}}\min_{\mathcal{D}\in \Gamma_\eps}\chisquare{\expectDistrOf{\sigma\sim\mathcal{D}}{\bfP_{\sigma}}}{\bfP_{\qmm}}&\le \max_{\mathcal{M}^n\text{ fixed}}\chisquare{\expectDistrOf{\sigma\sim\ptbDistr(\hbasis_{\avgLuders})}{\bfP_{\sigma}}}{\bfP_{\qmm}}\\
    &\le \bigO{\frac{\ns^2\eps^4}{\dims^4}\max_{\avgLuders}\hsnorm{\avgLuders_\ell}^2}.
\end{align*}
The square-sum of the smallest eigenvalues can be bounded in terms of $\Tr[\avgLuders]$,
\begin{align*}
    \hsnorm{\avgLuders_\ell}^2=\sum_{i=1}^{\ell}\lambda_i^2\le \ell\lambda_\ell^2\le \ell\Paren{\frac{\Tr[\avgLuders]}{\dims^2-\ell}}^2=\frac{2\tracenorm{\avgLuders}^2}{\dims^2}=2.
\end{align*}
The second inequality is because all eigenvalues are sorted in increasing order, and thus $\lambda_\ell$ is no greater than the average of $\lambda_{\ell+1}, \ldots, \lambda_{\dims^2}$, which is at most $\Tr[\avgLuders]/(\dims^2-\ell)$. The proof is complete.


\section{Upper bound for fixed measurements}
\label{sec:upper-bound}

The algorithm we present is similar to an algorithm proposed in~\cite[Algorithm 4]{Yu21sample}\footnote{We came across the result after writing a draft of the paper. However, given the similarity of the algorithms, it should be attributed to~\cite{Yu21sample}.}. They specifically work with maximal mutually unbiased bases~\cite{klappenecker2005mutually}, and we work with quantum $2$-designs, which are generalizations of the former. For completeness, we present the algorithm and its copy complexity guarantee.

The algorithm is based on quantum 2-designs, a finite set of vectors that preserves the second moment of the Haar measure and yields a rank-1 POVM with appropriate scaling. The same measurement is applied to \emph{all} copies. Since it preserves the statistics of the Haar measure, one can show that when $\rho$ and $\qkn$ are far, then the outcome distribution on each copy is also far in $\ell_2$ distance. From this, we apply classical closeness testing to the outcomes. As long as the 2-design has size at most $O(\dims^2)$, then we can achieve the desired $O(\dims^2/\eps^2)$ copy complexity. For $\dims$ that are prime powers, such 2-design exists due to maximal mutually unbiased bases~\cite{klappenecker2005mutually}. This is already general enough since the system dimension $\dims$ is $2^N$ for quantum computers implemented in qubits. Moreover, the algorithm can be easily generalized to the problem of \emph{closedness} testing, where the goal is to test whether two \emph{unknown} states $\rho$ and $\sigma$ are close in trace distance given $\ns$ copies from each.

\subsection{Preliminaries}
\paragraph{Quantum $t$-designs.} At a high level, for an integer $t>0$, $t$-design is a finite set of unit vectors such that the average of any polynomial $f$ of degree at most $t$ is the same as the expectation of $f$ over the Haar measure.
\begin{definition}[Quantum $t$-design]
\label{def:spherical-t-design}
    Let $t$ be a positive integer, we say that a finite set of normalized vectors $\{\qbit{\psi_x}\}_{x=1}^k$ in $\C^\dims$ and a discrete distribution $q=(q_1, \ldots, q_k)$ over $[k]$ a \textit{quantum $t$-design} if
    \[
    \sum_{x=1}^k q_x\qproj{\psi_x}^{\otimes t}=\int \qproj{\psi}^{\otimes t} d \mu(\psi),
    \]
    where $\mu$ is the Haar measure on the unit sphere in $\C^\dims$. If $q_x=1/k$, then the $t$-design is \textit{proper} and we may omit the distribution $q$ when describing proper $t$-designs.
\end{definition}
By taking the partial trace on both sides, we can easily see that a $t$-design is naturally a $t'$-design for all $t'\le t$. Moreover, when $t=1$, the right-hand side is $\eye_\dims/\dims$ and thus $\{\dims q_x\qproj{\psi_x}\}$ is a POVM. An important example of spherical 2-design is based on mutually unbiased bases (MUB) (see \cite{durt2010mutually} for a survey).



\begin{theorem}[\cite{klappenecker2005mutually}]
    Let $\dims$ be a prime power, then there exists a maximal MUB, i.e. $\dims+1$ orthonormal bases $\{\qbit{\psi_{x}^l}\}_{x=1}^{\dims}, l=1, \ldots, \dims+1$ such that the collection of all vectors $\{\qbit{\psi_{x}^l}\}_{x, l}$ is a proper $2$-design.
    \label{thm:mub-2-design}
\end{theorem}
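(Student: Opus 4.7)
The plan has two parts: (i) explicitly construct $d+1$ orthonormal bases of $\C^d$ that are pairwise mutually unbiased, and (ii) verify that the union of all $d(d+1)$ unit vectors forms a proper $2$-design.

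For the construction when $d = p^n$ with $p$ an odd prime, I would use the Wootters--Fields bases over the finite field $\F_{p^n}$. Fix a nontrivial additive character $\chi(a) \eqdef \exp(2\pi i \,\mathrm{tr}(a)/p)$, where $\mathrm{tr}\colon \F_{p^n} \to \F_p$ is the field trace. Take the $0$-th basis to be the computational basis $\{\qbit{x} : x \in \F_{p^n}\}$, and for each $l \in \F_{p^n}$ define
\[
\qbit{\psi_x^l} \eqdef \frac{1}{\sqrt{d}} \sum_{y \in \F_{p^n}} \chi(l y^2 + x y)\, \qbit{y}, \qquad x \in \F_{p^n},
\]
yielding $d+1$ bases in total. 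Orthonormality within each basis follows from character orthogonality $\sum_y \chi((x-x') y) = d\,\delta_{x,x'}$, and mutual unbiasedness reduces to the quadratic Gauss-sum estimate $|\sum_z \chi((l-l') z^2 + (x-x') z)| = \sqrt{d}$ for $l \neq l'$, which holds after completing the square since $2(l-l')$ is invertible in $\F_{p^n}$. The even-prime case $d = 2^n$ requires Klappenecker--R\"otteler's modification via symplectic pairings rather than squares.

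For the $2$-design property, I first recall the Haar identity $\int \qproj{\psi}^{\otimes 2} d\mu(\psi) = (I + S)/(d(d+1))$ on $\C^d \otimes \C^d$, a consequence of Schur--Weyl duality: the integral commutes with all $U \otimes U$ and so is a linear combination of $I$ and $S$, with the coefficients pinned down by matching $\Tr[\cdot]$ and $\Tr[S\,\cdot\,]$. Since $k = d(d+1)$, the proper $2$-design condition becomes
\[
\sum_{l=0}^{d} \sum_{x \in \F_{p^n}} \qproj{\psi_x^l}^{\otimes 2} = I + S,
\]
and because both sides are supported on the symmetric subspace $\mathrm{Sym}^2(\C^d) = \mathrm{span}\{\qbit{\phi}^{\otimes 2}\}$, polarization reduces this to the scalar identity $\sum_{l,x} |\qdotprod{\phi}{\psi_x^l}|^4 = 2$ for every unit vector $\qbit{\phi}$. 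To prove this cleanly, introduce the traceless Hermitian operators $E_{l,x} \eqdef \qproj{\psi_x^l} - I/d$. The MUB property directly yields $\Tr[E_{l,x} E_{l',y}] = \delta_{l l'}(\delta_{x y} - 1/d)$; together with the relation $\sum_x E_{l,x} = 0$ for each $l$, this shows that the $(d+1)(d-1) = d^2 - 1$ operators $\{E_{l,x}\}$ span the traceless Hermitian subspace orthogonally. Hence any unit $\qbit{\phi}$ admits a unique expansion $\qproj{\phi} = I/d + \sum_{l,x} a_{l,x} E_{l,x}$ with $\sum_x a_{l,x} = 0$ for each $l$, and $\Tr[(\qproj{\phi} - I/d)^2] = (d-1)/d$ forces $\sum_{l,x} a_{l,x}^2 = (d-1)/d$. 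Setting $p_x^l \eqdef |\qdotprod{\phi}{\psi_x^l}|^2 = 1/d + a_{l,x}$ then yields
\[
\sum_{l,x} (p_x^l)^2 = \frac{d(d+1)}{d^2} + \frac{2}{d}\sum_{l,x} a_{l,x} + \sum_{l,x} a_{l,x}^2 = \frac{d+1}{d} + 0 + \frac{d-1}{d} = 2,
\]
the cross term vanishing due to $\sum_x a_{l,x} = 0$. The main obstacle is thus not the $2$-design verification (which reduces to this clean linear-algebra computation once the MUB property is in hand) but the MUB construction itself: producing $d+1$ bases for an arbitrary prime power $d$ genuinely exploits the field structure of $\F_{p^n}$ and the nontriviality of quadratic Gauss sums, and whether $d+1$ MUBs exist for non-prime-powers (e.g., $d=6$) remains a famous open problem.
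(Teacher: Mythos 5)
The paper does not actually prove this statement: it is imported as a black box from \cite{klappenecker2005mutually} and used only to instantiate the $2$-design in Algorithm~\ref{alg:no-shared}, so there is no in-paper argument to compare against. Your reconstruction is essentially the standard proof from that reference. The part you carry out in full --- that a complete set of $\dims+1$ MUBs forms a proper $2$-design --- is correct: the Haar identity $\int\qproj{\psi}^{\otimes 2}d\mu=(\eye+S)/(\dims(\dims+1))$, the polarization step on the symmetric subspace reducing the operator identity to $\sum_{l,x}|\qdotprod{\phi}{\psi_x^l}|^4=2$, the orthogonal decomposition of the traceless Hermitian space into $\dims+1$ mutually orthogonal $(\dims-1)$-dimensional blocks spanned by $\{\qproj{\psi_x^l}-\eye_\dims/\dims\}_x$, and the resulting frame-potential computation all check out. (One wording slip: there are $\dims(\dims+1)$ operators $E_{l,x}$, not $(\dims+1)(\dims-1)$; the latter is the dimension of their span, which is what the argument actually needs.) The genuinely incomplete piece is the existence of $\dims+1$ MUBs when $\dims=2^n$: the Wootters--Fields quadratic-phase construction you write down breaks in characteristic $2$ because completing the square requires inverting $2$, and the Galois-ring/symplectic replacement is a nontrivial separate construction that you only name. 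Since the theorem as stated covers all prime powers, a self-contained proof would have to supply that case; as a justification for the citation, however, your write-up is accurate, and the $2$-design verification is a clean, complete argument that could stand on its own.
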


\paragraph{Classical distribution testing} We will use the classical closeness testing algorithm for discrete distributions as a sub-routine. Given two distributions $\p$ and $\q$ and $\ns$ samples from each, the goal is to test whether $\p=\q$ or $\normtwo{\p-\q}\ge \eps$. The sample complexity guarantee is given by the following theorem.
\begin{theorem}[{\cite[Lemma 2.3]{DiakonikolasK16},\cite[Proposition 3.1]{ChanDVV14}}]
     \label{thm:distr-testing}
     Let $\p,\q$ be unknown distributions over $k$ such that $\min\{\normtwo{p}, \normtwo{q}\}\le b$. There exists an algorithm TestClosenessL2($\bx,\bx',\eps$) that distinguishes whether $\p=\q$ or $\normtwo{\p-\q}>\eps$, where $\bx$ and $\bx'$ are $O(b/\eps^2)$ samples from $\p$ and $\q$ respectively.
\end{theorem}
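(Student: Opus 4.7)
The plan is to construct a tester from a quadratic statistic that unbiasedly estimates $\normtwo{\p-\q}^2$, following the Chan--Diakonikolas--Valiant--Valiant framework~\cite{ChanDVV14} underlying both cited references. The algorithm TestClosenessL2 will compute a single scalar statistic $Z$ from the two empirical histograms and output $\p=\q$ iff $Z$ lies below a threshold (say $m^2\eps^2/2$).

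First I would apply the standard Poissonization trick: replace the deterministic sample size $m$ by an independent $\text{Poi}(m)$ draw from each distribution. This costs only constant factors in the eventual sample complexity, and it crucially turns the coordinate counts into \emph{independent} Poisson variables $X_i \sim \text{Poi}(mp_i)$ and $Y_i \sim \text{Poi}(mq_i)$ across $i \in [k]$. Then I would use the statistic
\begin{equation*}
Z \eqdef \sum_{i=1}^k \bigl[(X_i - Y_i)^2 - X_i - Y_i\bigr].
\end{equation*}
A direct moment calculation using $\EE[X_i^2] = mp_i + m^2p_i^2$ and independence gives $\EE[Z] = m^2\normtwo{\p-\q}^2$, so $\EE[Z]=0$ in the null and $\EE[Z]\ge m^2\eps^2$ in the alternative. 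By Chebyshev, correct classification with constant probability reduces to showing $\Var[Z] \lesssim m^4\eps^4$.

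The main obstacle is bounding $\Var[Z]$ tightly enough in terms of $b$ so that $m = O(b/\eps^2)$ samples suffice. By the Poissonization-induced independence the variance decomposes as $\Var[Z] = \sum_i \Var[(X_i-Y_i)^2 - X_i - Y_i]$, and expanding the relevant fourth-order Poisson moments yields ``null''-type terms of order $m^2(p_i^2+q_i^2)$ together with cross terms of order $m^3(p_i-q_i)^2(p_i+q_i)$. Summed, this gives a naive bound involving $\normtwo{\p}^2+\normtwo{\q}^2$; the subtle step is to re-express it so that only $\min(\normtwo{\p},\normtwo{\q}) \le b$ appears. This can be done by the triangle inequality $\normtwo{\q}^2 \le 2(\normtwo{\p}^2 + \normtwo{\p-\q}^2)$ (and symmetrically for $\p$), with the extra $\normtwo{\p-\q}^2$ contribution either absorbed into the signal $m^2\normtwo{\p-\q}^2$ or handled by noting that very large $\normtwo{\p-\q}$ makes the testing problem strictly easier. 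Combined, $\Var[Z] = O(m^2 b^2 + m^3 b^2 \eps^2)$, and plugging in $m = \Theta(b/\eps^2)$ yields $\Var[Z] = O(m^4 \eps^4)$, giving the $2/3$ success probability via Chebyshev at threshold $m^2\eps^2/2$.
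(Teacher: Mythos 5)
The paper does not actually prove this statement: it is imported as a black-box primitive from \cite{ChanDVV14} and \cite{DiakonikolasK16}, and Algorithm~\ref{alg:no-shared} merely invokes it. Your sketch is therefore not an alternative to anything in the paper but a reconstruction of the cited references' argument --- and it is the right reconstruction. Poissonization, the unbiased statistic $Z=\sum_i[(X_i-Y_i)^2-X_i-Y_i]$ with $\EE[Z]=m^2\normtwo{\p-\q}^2$, the per-coordinate variance of order $m^2(\p(i)+\q(i))^2+m^3(\p(i)-\q(i))^2(\p(i)+\q(i))$, and a Chebyshev threshold test are exactly how \cite{ChanDVV14} proceeds, and the step you flag as subtle (getting the bound to depend on $\min\{\normtwo{\p},\normtwo{\q}\}$ rather than the maximum) is precisely the refinement contributed by \cite{DiakonikolasK16}; your proposed resolution via $\normtwo{\p}\le\normtwo{\q}+\normtwo{\p-\q}$ is the one used there. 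The only place where the gesture needs to be made precise is the ``absorb into the signal'' step: substituting the triangle inequality leaves variance contributions of order $m^2\normtwo{\p-\q}^2=\EE[Z]$ and $m^3\normtwo{\p-\q}^3$, so Chebyshev against $\EE[Z]^2=m^4\normtwo{\p-\q}^4$ requires $m\normtwo{\p-\q}=\Omega(1)$, i.e.\ $m=\Omega(1/\eps)$. This is implied by $m=\Theta(b/\eps^2)$ only when $b=\Omega(\eps)$, an implicit (and standard) assumption in the quoted statement; it holds comfortably in the paper's application, where $b=\Theta(1/\sqrt{k})$ and the effective accuracy parameter is $\eps/\sqrt{k(\dims+1)}$. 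With that caveat spelled out, your plan goes through.
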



\subsection{Algorithm}
The algorithm applies a proper 2-design for all copies, with suitable coefficients to make the projection matrices a POVM. 2-designs preserve the statistics of the Haar measure up to order 2, and therefore should be a good choice for fixed measurements.
\begin{algorithm}
\caption{State certification/closedness testing without shared randomness}
\label{alg:no-shared}
    \begin{algorithmic}
        \State \textbf{Input}: $\ns$ copies of unknown state $\rho$. If $\qkn$ is unknown, $\ns$ copies of $\qkn$ as well.
        \State \textbf{Output} YES if $\rho=\qkn$, NO if $\tracenorm{\rho-\qkn}>\eps$. 
        \State Let $\{\qbit{\psi_x}\}_{x=1}^k$ be a proper 2-design and $\POVM = \{\frac{\dims}{k}\qproj{\psi_x}\}_{x=1}^k$
        \State Apply the measurement $\POVM$ for all copies of $\rho$ and obtain outcomes $\bx=(x_1, \ldots, x_\ns)$.
        \State Obtain $\ns$ samples $\bx'=(x_1', \ldots, x_\ns')$ from $\p_{\qkn}$. If $\qkn$ is known, then $\bx'$ is obtained by measuring each copy with $\POVM$. Else, $\bx'$ is sampled using classical randomness.
        \State \Return TestClosenessL2$(\bx, \bx', \eps/\sqrt{k(\dims+1)})$.
    \end{algorithmic}
\end{algorithm}
\begin{theorem}
\label{thm:upper}
    Let $k$ be the size of the proper 2-design used in~\cref{alg:no-shared}. With $\ns=\bigO{\dims\sqrt{k}/\eps^2}$ copies from each unknown state, Algorithm~\ref{alg:no-shared} can test whether $\rho=\qkn$ or $\tracenorm{\rho-\qkn}>\eps$ with probability at least $2/3$. 
\end{theorem}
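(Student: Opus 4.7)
The plan is to reduce the quantum certification problem to classical $\ell_2$ closeness testing on the outcome distributions, using the 2-design property to control both the $\ell_2$ norm of each outcome distribution and its distance under the measurement.

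First I would verify that $\POVM$ is a valid POVM (from the 1-design property, which is implied by the 2-design), so each copy produces a measurement outcome according to the distribution $\p_\rho(x) = \frac{\dims}{k}\qdotprod{\psi_x}{\rho\psi_x}$. The main computation uses the defining 2-design identity
\[
\sum_{x=1}^k \frac{1}{k}\qproj{\psi_x}^{\otimes 2} = \int \qproj{\psi}^{\otimes 2} d\mu(\psi) = \frac{\eye + F}{\dims(\dims+1)},
\]
where $F$ is the swap operator on $\C^\dims\otimes\C^\dims$. Applying this with $\rho^{\otimes 2}$ and using $\Tr[F \rho^{\otimes 2}]=\Tr[\rho^2]\le 1$ gives the closed-form bound $\norm{\p_\rho}_2^2 = \frac{\dims(1+\Tr[\rho^2])}{k(\dims+1)} \le \frac{2}{k}$, so I may take $b=\sqrt{2/k}$ in \cref{thm:distr-testing}.

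Next, applying the same identity with $(\rho-\qkn)^{\otimes 2}$ and noting that $\Tr[\rho-\qkn]=0$ yields the clean identity
\[
\norm{\p_\rho-\p_{\qkn}}_2^2 = \frac{\dims}{k(\dims+1)}\Tr[(\rho-\qkn)^2] = \frac{\dims}{k(\dims+1)}\hsnorm{\rho-\qkn}^2.
\]
Since $\rho-\qkn$ is a Hermitian $\dims\times\dims$ matrix of rank at most $\dims$, Cauchy--Schwarz on its singular values gives $\hsnorm{\rho-\qkn}\ge \tracenorm{\rho-\qkn}/\sqrt{\dims}$. In the far case $\tracenorm{\rho-\qkn}>\eps$, this produces $\norm{\p_\rho-\p_{\qkn}}_2 > \frac{\eps}{\sqrt{k(\dims+1)}}$, which matches the threshold supplied to \texttt{TestClosenessL2}. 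Conversely, when $\rho=\qkn$ the two distributions are identical (whether $\qkn$ is sampled from via measurement or via classical randomness on the known distribution $\p_{\qkn}$).

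Finally I would invoke \cref{thm:distr-testing} with $b=\sqrt{2/k}$ and $\eps' = \eps/\sqrt{k(\dims+1)}$. Its sample complexity is
\[
O\!\left(\frac{b}{\eps'^2}\right) = O\!\left(\frac{\sqrt{1/k}\cdot k(\dims+1)}{\eps^2}\right) = O\!\left(\frac{\dims\sqrt{k}}{\eps^2}\right),
\]
which is the claimed copy complexity, and the $2/3$ success probability follows from the classical tester's guarantee. The only non-routine step is the 2-design moment computation, but this is a one-line application of the defining identity and presents no real obstacle; the cleanliness of the proof rests entirely on the exact closed form provided by \cref{def:spherical-t-design}.
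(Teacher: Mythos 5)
Your proposal is correct and follows essentially the same route as the paper: reduce to classical $\ell_2$ closeness testing, bound $\normtwo{\p_\rho}$ and $\normtwo{\p_\rho-\p_{\qkn}}$ via the second-moment identity of the 2-design, and invoke \cref{thm:distr-testing} with the same parameters. The only cosmetic difference is that you write the 2-design identity explicitly via the swap operator $\frac{\eye+F}{\dims(\dims+1)}$ rather than citing the Haar-moment lemma (\cref{lem:haar-square-expectation}), which is mathematically identical.
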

\begin{proof}
    Let $\Delta=\rho-\qkn$ and $\p_{\rho}$ be the distribution of a single measurement outcome for $\POVM$. When $\tracenorm{\rho-\qkn}=\tracenorm{\Delta}\ge \eps$, we have $\hsnorm{\Delta}=\sqrt{\Tr[\Delta^2]}\ge \eps/\sqrt{\dims}$.
    
    We can compute the $\normtwo{\p_{\rho}}$ and $\normtwo{\p_{\rho}-\p_{\qkn}}$ in terms of $\Delta$.
    \[
    \normtwo{\p_{\rho}}^2=\frac{\dims^2}{k^2}\sum_{x=1}^k\matdotprod{\psi_x}{\rho}{\psi_x},\quad \normtwo{\p_{\rho}-\p_{\qmm}}^2=\frac{\dims^2}{k^2}\sum_{x=1}^k\matdotprod{\psi_x}{\Delta}{\psi_x}^2.
    \]
   Note that $\{\qbit{\psi_x}\}_{x=1}^k$ is a proper 2-design, and thus by definition for all Hermitian matrices $M$,
   \begin{align*}
   \frac{1}{k}\sum_{x=1}^k\matdotprod{\psi_x}{M}{\psi_x}^2&=\frac{1}{k}\sum_{x}{\Tr[\qproj{\psi_x}M]^2}=\Tr\left[\frac{1}{k}\sum_{x}\qproj{\psi_x}^{\otimes2}M^{\otimes 2}\right]\\
         &=\Tr\left[\expectDistrOf{\psi\sim\mu}{\qproj{\psi}^{\otimes 2}]M^{\otimes 2}}\right]=\expectDistrOf{\psi\sim\mu}{\Tr[\qproj{\psi}^{\otimes 2}M^{\otimes 2}]}\\
         &=\expectDistrOf{\psi\sim\mu}{\Tr[\qproj{\psi}M]^2}\\
         &=\expectDistrOf{\psi\sim\mu}{\matdotprod{\psi}{M}{\psi}^2},
   \end{align*}
    where $\mu$ is the Haar measure. The expectation can be computed using Weingarten calculus~\cite{collins2003moments,collins2006integration}.
    \begin{lemma}
    \label{lem:haar-square-expectation}
        For any Hermitian $M\in \C^{\dims\times\dims}$ and $\qbit{\psi}\sim \mu$ the Haar measure, we have,
        \[
        \expectDistrOf{\psi\sim\mu}{\matdotprod{\psi}{M}{\psi}^2}=\frac{1}{\dims(\dims+1)}(\Tr[M]^2+\Tr[M^2]).
        \]
    \end{lemma}
    The proof can be found in~\cite[Lemma 6.4]{ChenLO22instance}. 
    Since $\Tr[\rho^2]\le \Tr[\rho]=1$ and $\Tr[\Delta]=0$, from this lemma we conclude that  
    \[
    \normtwo{\p_{\rho}}^2\le\frac{2\dims}{k(\dims+1)},\quad \normtwo{\p_{\rho}-\p_{\qmm}}^2=\frac{
    \dims^2\Tr[\Delta^2]}{k\dims(\dims+1)}\ge\frac{\eps^2}{k(\dims+1)}.
    \]
    Therefore, we can apply~\cref{thm:distr-testing} with domain size $k$, $b\leftarrow \sqrt{\frac{2\dims}{k(\dims+1)}}$, and $\eps\leftarrow \frac{\eps}{\sqrt{k(\dims+1)}}$. The number of samples $\ns$ required is
    \[
    \ns=\bigO{\sqrt{\frac{2\dims}{k(\dims+1)}}\cdot\frac{k(\dims+1)}{\eps^2}}=\bigO{\frac{\sqrt{k\dims(\dims+1)}}{\eps}}.
    \]
\end{proof}

The upper bound part of~\cref{thm:main} is an immediate corollary of the above theorem.
\begin{corollary}
    If the size of the proper 2-design in~\cref{alg:no-shared} is $k=O(\dims^2)$, then $\ns=O(\dims^2/\eps^2)$ copies are sufficient for~\cref{alg:no-shared}. Specifically, when $\dims$ is a prime power, such 2-design exists due to maximal MUB which satisfies $k=\dims(\dims+1)$.
\end{corollary}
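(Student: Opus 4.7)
The statement is essentially a direct corollary of~\cref{thm:upper}, so my plan is very short. First, I would substitute $k=O(\dims^2)$ directly into the copy complexity bound $\ns=\bigO{\dims\sqrt{k}/\eps^2}$ established in~\cref{thm:upper}, obtaining $\ns = \bigO{\dims\cdot \sqrt{\dims^2}/\eps^2}=\bigO{\dims^2/\eps^2}$. This verifies the first part of the corollary: whenever a proper 2-design of size $O(\dims^2)$ exists in the ambient dimension, \cref{alg:no-shared} achieves the claimed copy complexity. No further probabilistic or information-theoretic argument is needed, since the correctness guarantee of the tester is already packaged inside~\cref{thm:upper}.

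For the second part (existence of such a 2-design), the plan is to invoke~\cref{thm:mub-2-design}, which states that when $\dims$ is a prime power, there exist $\dims+1$ mutually unbiased bases whose union $\{\qbit{\psi_x^l}\}_{x\in[\dims],l\in[\dims+1]}$ is a proper 2-design of total size $k=\dims(\dims+1)$. Plugging this $k$ into the substitution above yields $\ns=\bigO{\dims\sqrt{\dims(\dims+1)}/\eps^2}=\bigO{\dims^2/\eps^2}$, which completes the corollary.

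There is no real obstacle here: the heavy lifting (design of the algorithm, the 2-design identity via Weingarten/Haar calculus in~\cref{lem:haar-square-expectation}, and the reduction to classical $\ell_2$ closeness testing) has already been carried out in the proof of~\cref{thm:upper}. The only thing the corollary does is observe that a proper 2-design of asymptotically optimal size $\Theta(\dims^2)$ is actually attainable in the regimes of interest (prime power dimension, hence any $\dims=2^N$ relevant for qubit systems), which is precisely the content of~\cref{thm:mub-2-design}. Accordingly, I would keep the write-up to a couple of sentences, cross-referencing~\cref{thm:upper} and~\cref{thm:mub-2-design} rather than repeating their proofs.
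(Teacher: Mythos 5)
Your proposal is correct and matches the paper's treatment exactly: the paper states this as an immediate consequence of \cref{thm:upper} (substituting $k=O(\dims^2)$ into $\ns=O(\dims\sqrt{k}/\eps^2)$) together with the existence of a maximal MUB of size $k=\dims(\dims+1)$ from \cref{thm:mub-2-design} when $\dims$ is a prime power. Nothing further is needed.
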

This result suggests that the optimal copy complexity of $O(\dims^2/\eps^2)$ can be generalized to dimensions $\dims$ other than prime powers. For example, SIC-POVM~\cite{zauner1999grundzuge,renes2004symmetric} is a minimal 2-design with $k=\dims^2$ and is known to exist for $\dims=2$ to $28$ and as high as $\dims=1299$~\cite{debrota2020informationally}. It has been conjectured in~\cite{zauner1999grundzuge} that SIC-POVMs exist for all $\dims$. If the conjecture is proved, then~\cref{alg:no-shared} naturally generalizes to all $\dims$.

\section*{Acknowledgement} The authors thank Mark Wilde for helpful discussions.

\bibliography{refs}
\bibliographystyle{alpha}

\appendix
\section{Missing proofs in the lower bound}
\label{app:lower-bound-proof}
\subsection{Proof of~\cref{lem:le-cam-approx-eps-perturbation}}
\label{app:lem:le-cam-approx-eps-perturbation}
\begin{proof}
    Recall that $Y=0$ and $\rho=\qmm$ with probability 1/2 and $Y=1$ and $\rho\sim \mathcal{D}$ with probability $1/2$. In the former case when the state is $\qmm$ and $Y=0$, then the tester outputs the correct answer with probability at least $2/3$, 
    \[
    \Pr[\hat{Y}=0|Y=0]\ge 2/3.
    \]
    
    When $\rho\sim\mathcal{D}$, note that by the definition of almost-$\eps$ perturbations, the probability that $\|\sigma_z-\qmm\|_1>\eps$ is at least $1/2$. Denote this event as $E$, then $\Pr[E|Y=1]\ge 1/2$ . We can lower bound the success probability as
    \[
    \Pr[\hat{Y}=1|Y=1]\ge \Pr[Y=1|E, Y=1)]\Pr[E|Y=1]\ge \frac{2}{3}\cdot \frac{1}{2}=\frac{1}{3}.
    \]
    Combining the two parts,
    \[
    \Pr[Y=\hat{Y}]=\frac{1}{2}\Pr[\hat{Y}=0|Y=0]+\frac{1}{2}\Pr[\hat{Y}=1|Y=1]\ge \frac{1}{2}\Paren{\frac{2}{3}+\frac{1}{3}}=\frac{1}{2}.
    \]
    By standard argument on the distinguishability of two distributions, $\totalvardist{\expectDistrOf{z}{\bfP_{\sigma_z}}}{\bfP_{\qmm}}\ge 1/2$. Finally, the inequality follows by Pinsker's inequality and the relation between KL and chi-square divergences.

    \begin{equation*}
        \totalvardist{\expectDistrOf{\sigma}{\bfP_{\sigma}}}{\bfP_{\qmm}}\le \sqrt{\frac{1}{2}\kldiv{\expectDistrOf{\sigma\sim\mathcal{D}}{\bfP_{\sigma}}}{\bfP_{\qmm}}}\le\sqrt{\frac{1}{2}\chisquare{\expectDistrOf{\sigma\sim\mathcal{D}}{\bfP_{\sigma}}}{\bfP_{\qmm}}}.\qedhere
    \end{equation*}
\end{proof}

\subsection{Proof of Lemma~\ref{lem:supop-eigendecomposition}}

Let us recall~\cref{lem:supop-eigendecomposition}.
\eigendecomp*
The proof is broken into two parts. In~\ref{sec:part-one} we state some properties of superoperators, and in~\ref{sec:part-two} we provide a proof of the lemma.

\subsubsection{Important properties of $\Luders_{\POVM}$}
\label{sec:part-one}
 We start with some useful definitions. 

\begin{definition} Let $\mathcal{N}:\C^{\dims\times \dims}\mapsto \C^{\dims\times \dims}$ be a superoperator.
    \begin{enumerate}
        \item $\mathcal{N}$ is called \textit{Hermitian} if $\mathcal{N}=\mathcal{N}^\dagger$.
        \item $\mathcal{N}$ is \textit{Hermiticity preserving} if for all Hermitian $X\in \Herm{\dims}$, $\mathcal{N}(X)$ is also Hermitian.
        \item $\mathcal{N}$ is \textit{trace-preserving} if for all $X\in \C^{\dims\times\dims}$, $\Tr[X]=\Tr[\mathcal{N}(X)]$.
        \item $\mathcal{N}$ is \textit{unital} if $\mathcal{N}(\eye_\dims)=\eye_\dims$.
    \end{enumerate}
    \label{def:supop-properties}
\end{definition}
\label{app:lem:supop-eigendecomposition}

We have the following fact about the L\"uders channel.
\begin{fact}
\label{fact:expect-op-property}
$\Luders_{\POVM}$ is a superoperator over $\C^{\dims\times \dims}$ that satisfies all properties in~\cref{def:supop-properties}.
\end{fact}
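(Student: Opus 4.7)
The proof plan is to verify each of the four properties in Definition~\ref{def:supop-properties} by direct computation, exploiting three basic facts: (i) each $K_x = \sqrt{M_x}$ is Hermitian, (ii) $\sum_x M_x = \sum_x K_x^2 = \eye_\dims$ by the POVM condition, and (iii) the trace is cyclic. All four properties will follow from these facts with essentially one-line calculations.

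First, I will check that $\Luders_{\POVM}$ is Hermiticity preserving. For any $X \in \Herm{\dims}$, compute $(\sum_x K_x X K_x)^\dagger = \sum_x K_x^\dagger X^\dagger K_x^\dagger = \sum_x K_x X K_x$ using that $K_x$ and $X$ are Hermitian. Next, for the trace-preserving property, using cyclicity of trace and $\sum_x K_x^2 = \eye_\dims$:
\[
\Tr[\Luders_{\POVM}(X)] = \sum_x \Tr[K_x X K_x] = \sum_x \Tr[K_x^2 X] = \Tr\!\left[\eye_\dims \cdot X\right] = \Tr[X].
\]
For unitality, substitute $X = \eye_\dims$ to get $\Luders_{\POVM}(\eye_\dims) = \sum_x K_x^2 = \sum_x M_x = \eye_\dims$.

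Finally, to show $\Luders_{\POVM}$ is Hermitian (i.e.\ self-adjoint as a superoperator), I will use the definition of the adjoint with respect to the Hilbert-Schmidt inner product. For any $X, Y \in \C^{\dims\times\dims}$:
\[
\hdotprod{Y}{\Luders_{\POVM}(X)} = \Tr\!\left[Y^\dagger \sum_x K_x X K_x\right] = \sum_x \Tr[K_x Y^\dagger K_x X] = \hdotprod{\Luders_{\POVM}(Y)}{X},
\]
where the middle equality uses cyclicity of trace, and the last equality uses that $K_x$ is Hermitian so $(K_x Y K_x)^\dagger = K_x Y^\dagger K_x$. By uniqueness of the adjoint, $\Luders_{\POVM}^\dagger = \Luders_{\POVM}$.

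There is no real obstacle here — this is a standard verification whose only ingredients are Hermiticity of the square roots $K_x$, the resolution of identity $\sum_x K_x^2 = \eye_\dims$, and cyclicity of trace. The fact is stated mainly so it can be invoked later (e.g.\ in the spectral decomposition lemma~\cref{lem:supop-eigendecomposition}, where Hermiticity of $\avgLuders$ is what produces a real orthonormal eigenbasis, and unitality plus trace-preservation pin down the top eigenvalue/eigenvector as $\lambda_{\dims^2}=1$ with $V_{\dims^2}=\eye_\dims/\sqrt{\dims}$).
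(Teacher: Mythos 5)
Your proof is correct and follows essentially the same route as the paper's: a direct verification of all four properties using Hermiticity of $K_x=\sqrt{M_x}$, cyclicity of the trace, and the resolution of identity $\sum_x M_x=\eye_\dims$. The only (cosmetic) difference is that you work with the general Kraus form $\sum_x K_xXK_x$ while the paper writes everything in the rank-1 form $\qproj{\psi_x}/\sqrt{\qdotprod{\psi_x}{\psi_x}}$; your version is marginally cleaner and covers arbitrary POVMs without change.
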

\begin{proof}
    The proof follows from Definition~\ref{def:supop-properties}, Definition~\ref{def:expected-density}, and the definition of POVMs. Nevertheless, we provide the proof for completeness.
    \begin{enumerate}
        \item Hermitian: 
        \begin{align*}
            \hdotprod{Y}{\Luders_{\POVM}(X)}&=\Tr\left[Y^\dagger\sum_{x}\frac{\qproj{\psi_x}X\qproj{\psi_x}}{\qdotprod{\psi_x}{\psi_x}}\right]=\sum_{x}\Tr\left[\frac{Y^\dagger\qproj{\psi_x}X\qproj{\psi_x}}{\qdotprod{\psi_x}{\psi_x}}\right]\\
            &=\sum_{x}\Tr\left[\frac{\qproj{\psi_x}Y^\dagger\qproj{\psi_x}X}{\qdotprod{\psi_x}{\psi_x}}\right]=\Tr\left[\sum_{x}\frac{\qproj{\psi_x}Y^\dagger\qproj{\psi_x}}{\qdotprod{\psi_x}{\psi_x}}X\right]\\
            &=\hdotprod{\Luders_{\POVM}(Y)}{X}
        \end{align*}
        \item Hermiticity preserving: let $X$ be Hermitian, then
        \begin{align*}
            \Luders_{\POVM}(X)^\dagger = \sum_{x}\frac{\qproj{\psi_x}X^\dagger\qproj{\psi_x}}{\qdotprod{\psi_x}{\psi_x}}=\sum_{x}\frac{\qproj{\psi_x}X\qproj{\psi_x}}{\qdotprod{\psi_x}{\psi_x}}=\Luders_{\POVM}(X)
        \end{align*}
        \item Trace preserving: 
        \begin{align*}
            \Tr\left[\Luders_{\POVM}(X)\right]&=\Tr\left[\sum_{x}\frac{\qproj{\psi_x}X\qproj{\psi_x}}{\qdotprod{\psi_x}{\psi_x}}\right]\\
            &=\sum_{x}\Tr\left[\frac{\qproj{\psi_x}X\qproj{\psi_x}}{\qdotprod{\psi_x}{\psi_x}}\right]\\
            &=\sum_{x}\matdotprod{\psi_x}{X}{\psi_x}=\sum_{x}\Tr[\qproj{\psi_x}[X]]\\
            &=\Tr\left[\sum_{x}\qproj{\psi_x}X\right]=\Tr[X].
        \end{align*}
        \item Unital: 
        \begin{align*}
            \Luders_{\POVM}(\eye_\dims)&=\sum_{x}\frac{\qproj{\psi_x}\eye_\dims\qproj{\psi_x}}{\qdotprod{\psi_x}{\psi_x}}=\sum_{x}\qproj{\psi_x}=\eye_\dims.\qedhere
        \end{align*}
    \end{enumerate}
\end{proof}

\subsubsection{Proof of the lemma}
\label{sec:part-two}
\paragraph{Hermitian eigenvectors.}

  By linearity, $\avgLuders$ satisfies all properties in Fact~\ref{fact:expect-op-property}. Since $\avgLuders$ is Hermiticity preserving, $\avgLuders$ is also a linear superoperator over the subspace of all Hermitian matrices $\Herm{\dims}$.

    Since $\avgLuders$ is a Hermitian operator on $\Herm{\dims}$, the eigenvectors of $\avgLuders$ form an orthonormal basis $\{V_i\}_{i=1}^{\dims^2}$ of $\Herm{\dims}$. 
    Note that $\eye_\dims$ is an eigenvector of $\bar{\mathcal{H}}$ with eigenvalue 1 since
    \[
    \bar{\mathcal{H}}\left(\eye_d\right)=\frac{1}{\ns}\Paren{\sum_{i=1}^{\ns}\Luders_i(\eye_d)}=\frac{1}{\ns}\Paren{\sum_{i=1}^{\ns}\eye_d=\eye_d}.
    \]
    We then set $V_{\dims^2}=\eye_\dims/\sqrt{\dims}$. Thus, all other eigenvectors $V_1, \ldots, V_{\dims^2-1}$ must lie in the space orthogonal to $\Span\{\eye_d\}$, which is exactly the space of trace-0 Hermitian matrices since
    \[
    \hdotprod{A}{\eye_\dims}=0\iff \Tr[A^\dagger\eye_d]=\Tr[A^\dagger]=0=\Tr[A].
    \]

\paragraph{Non-negative eigenvalues.}

    To show that all eigenvalues are non-negative, we just need to show that $\avgLuders$ is positive semi-definite, i.e. for all matrix $X\in C^{\dims\times \dims}$,
    \[
    \hdotprod{X}{\avgLuders(X)}\ge 0. 
    \]
    Due to linearity, we just need to prove that each $\mathcal{H}_i$ as defined in~\ref{equ:rank-1-luders-channel} is p.s.d.,
    \begin{align}
        \hdotprod{X}{{\Luders_i}(X)}&=\Tr\left[X^\dagger\sum_{x=1}^{k}\frac{\qproj{\psi_x^i}X\qproj{\psi_x^i}}{\qdotprod{\psi_x^i}{\psi_x^i}}\right]\nonumber\\
        &=\sum_{x=1}^{k}\frac{\Tr[X^\dagger\qproj{\psi_x^i}X\qproj{\psi_x^i}]}{\qdotprod{\psi_x^i}{\psi_x^i}}\label{equ:psd-step}\\
        &=\sum_{x=1}^{k}\frac{\matdotprod{\psi_x^i}{X^\dagger}{\psi_x^i}\matdotprod{\psi_x^i}{X}{\psi_x^i}}{\qdotprod{\psi_x^i}{\psi_x^i}}\nonumber\\
        &=\sum_{x=1}^{k}\frac{|\matdotprod{\psi_x^i}{X}{\psi_x^i}|^2}{\qdotprod{\psi_x^i}{\psi_x^i}}\ge 0.\nonumber
    \end{align}
    The last line is due to 
    \[
    \overline{\matdotprod{\psi_x^i}{X}{\psi_x^i}}=\matdotprod{\psi_x^i}{X}{\psi_x^i}^\dagger =\matdotprod{\psi_x^i}{X^\dagger}{\psi_x^i}.
    \]

\paragraph{Upper bound on eigenvalues.}

    Finally, we show that all eigenvalues are at most 1. This is equivalent to $\opnorm{\avgLuders}\le 1$. By the convexity of norms, it suffices to prove that $\opnorm{\Luders_i}\le 1$. Starting from~\eqref{equ:psd-step},
    \begin{align*}
        \hdotprod{X}{{\Luders_i}(X)}
        &=\sum_{x=1}^{k}\frac{\Tr[X^\dagger\qproj{\psi_x^i}X\qproj{\psi_x^i}]}{\qdotprod{\psi_x^i}{\psi_x^i}}\\
        &\le \sum_{x=1}^{k}\frac{\sqrt{\Tr[X^\dagger\qbit{\psi_x^i}\qdotprod{\psi_x^i}{\psi_x^i}\qadjoint{\psi_x^i}X]\Tr[\qproj{\psi_x^i}X^\dagger X\qproj{\psi_x^i}]}}{\qdotprod{\psi_x^i}{\psi_x^i}}&\text{Cauchy-Schwarz}\\
        &=\sum_{x=1}^k\sqrt{\Tr[X^\dagger X\qproj{\psi_i^x}]\Tr[X X^\dagger\qproj{\psi_i^x}]}&\text{Cyclicity of trace}\\
        &\le \sum_{x=1}^k\Tr\left[\frac{X^\dagger X+XX^\dagger}{2}\qproj{\psi_i^x}\right]&\text{AM-GM}\\
        &=\Tr\left[\frac{X^\dagger X+XX^\dagger}{2}\sum_{x=1}^k\qproj{\psi_i^x}\right]\\
        &=\Tr\left[\frac{X^\dagger X+XX^\dagger}{2}\right]& \text{POVM}\\
        &=\Tr[X^\dagger X]\\
        &=\hdotprod{X}{X}.
    \end{align*}
    
\paragraph{Trace.}

Again due to linearity, we only need to prove that $\Tr[{\Luders_{l}}]=\dims$ for each $l=1, \ldots, n$.
    \begin{align*}
        \Tr[{\Luders_{l}}]&=\sum_{i, j=1}^{\dims}\Tr\left[\sum_{x=1}^{k}\frac{\qbit{j}\qdotprod{i}{\psi_x^l}\qdotprod{\psi_x^l}{i}\qdotprod{j}{\psi_x^l}\qadjoint{\psi_x^l}}{\qdotprod{\psi_x^l}{\psi_x^l}}\right]\\
        &=\sum_{i, j=1}^{\dims}\sum_{x=1}^{k}\frac{\qdotprod{\psi_x^l}{j}\qdotprod{i}{\psi_x^l}\qdotprod{\psi_x^l}{i}\qdotprod{j}{\psi_x^l}}{\qdotprod{\psi_x^l}{\psi_x^l}}\\
        &=\sum_{x=1}^k\frac{1}{\qdotprod{\psi_x^l}{\psi_x^l}}\sum_{i=1}^{\dims}|\qdotprod{i}{\psi_x^l}|^2\sum_{j=1}^{\dims}|\qdotprod{j}{\psi_x^l}|^2\\
        &=\sum_{x=1}^k\frac{\qdotprod{\psi_x^l}{\psi_x^l}^2}{\qdotprod{\psi_x^l}{\psi_x^l}}\\
        &=\sum_{x=1}^k\qdotprod{\psi_x^l}{\psi_x^l}=\dims.
    \end{align*}
    The final equality is due to $\sum_{x}\qproj{\psi_x^l}=\eye_d$ and thus $\Tr[\sum_{x}\qproj{\psi_x^l}]=\sum_{x=1}^k\qdotprod{\psi_x^l}{\psi_x^l}=\Tr[\eye_d]=\dims$

\subsection{Proof of~\cref{thm:rand-mat-opnorm-concentration}}
\label{app:prop:perturbation-trace-distance}

We first restate the theorem.
\randmatopnorm*
\begin{proof}
    We first prove that for any fixed unit vector $x\in \C^{\dims}$, the norm of $Wx$ is at most $O(\sqrt{\dims})$ with high probability. Then we use an $\epsilon$-net argument to show that the probability is also high for \textit{all} unit vectors. We start with the following lemma.
    \begin{lemma}
    \label{lem:fixed-vec-opnorm-concentration}
        Let $\{\ptb_i\}_{i=1}^{\dims^2}, \{V_i\}_{i=1}^{\dims^2}$ and $W$ be defined in~\cref{thm:rand-mat-opnorm-concentration}. Then there exists a universal constant $c'$ for any fixed unit vector $x$ and all $s>0$,
        \[
        \probaOf{\normtwo{Wx}\ge (1+s)\sqrt{\dims}}\le 2\exp\{-c's^2\dims\}.
        \]
    \end{lemma}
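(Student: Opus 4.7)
The plan is to first control $\EE\|Wx\|_2$ and then apply a Talagrand-type concentration inequality for convex Euclidean-Lipschitz functions of Rademacher variables.

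\textbf{Bounding the mean.} By independence and symmetry of the $\ptb_i$,
\[
\EE\|Wx\|_2^2 = \sum_{i=1}^\ell \|V_i x\|_2^2 \le \sum_{i=1}^{\dims^2}\|V_i x\|_2^2.
\]
Using the identity $\qdotprod{e_k}{V_i x} = \hdotprod{e_k x^\dagger}{V_i}$ (which follows from cyclicity of trace), for any orthonormal basis $\{e_k\}_{k=1}^{\dims}$ of $\C^{\dims}$, Parseval's identity in the Hilbert-Schmidt orthonormal family $\{V_i\}_{i=1}^{\dims^2}$ gives
\[
\sum_{i=1}^{\dims^2}\|V_i x\|_2^2 = \sum_{k=1}^{\dims}\sum_{i=1}^{\dims^2}|\hdotprod{e_k x^\dagger}{V_i}|^2 = \sum_{k=1}^{\dims}\hsnorm{e_k x^\dagger}^2 = \dims.
\]
Hence $\EE\|Wx\|_2 \le \sqrt{\dims}$ by Jensen's inequality.

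\textbf{Concentration around the mean.} The function $f(\ptb) := \|Wx\|_2 = \|\sum_{i=1}^\ell \ptb_i V_i x\|_2$ is convex in $\ptb$ (as a norm of an affine function). Its gradient at any point with $Wx \ne 0$ satisfies
\[
\|\nabla f(\ptb)\|_2^2 = \sum_{j=1}^\ell |\qdotprod{u}{V_j x}|^2 \le \hsnorm{u x^\dagger}^2 = 1, \quad u := Wx/\|Wx\|_2,
\]
by exactly the same Parseval argument. So $f$ is $1$-Lipschitz in the Euclidean metric on $\{-1,1\}^\ell$. Talagrand's concentration inequality for convex, $1$-Lipschitz functions of independent Rademacher variables then yields $\Pr[f(\ptb) - \EE f(\ptb) \ge t] \le C \exp(-c t^2)$ for absolute constants $C, c > 0$. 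Choosing $t = s\sqrt{\dims}$ and combining with the mean bound,
\[
\Pr[\|Wx\|_2 \ge (1+s)\sqrt{\dims}] \le \Pr[f - \EE f \ge s\sqrt{\dims}] \le C \exp(-c s^2 \dims),
\]
which is the claimed inequality (after absorbing $C$ into the constants).

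\textbf{Main obstacle.} The central subtlety is selecting the right concentration inequality. A crude bounded-differences (McDiarmid) argument would use the coordinate-wise Lipschitz constants $c_j = 2\|V_j x\|_2$ satisfying $\sum_j c_j^2 \le 4\dims$, producing only $\exp(-\Omega(s^2))$ and completely losing the crucial factor of $\dims$ in the exponent. Exploiting convexity is what gains us an $\ell$-wise Euclidean Lipschitz constant of $O(1)$ rather than $O(\sqrt{\dims})$; convex Lipschitz functions on the Rademacher cube enjoy sub-Gaussian concentration at that finer scale, and this dimension gain in the exponent is the entire point of the lemma (and subsequently of the $\eps$-net step in the proof of~\cref{thm:rand-mat-opnorm-concentration}).
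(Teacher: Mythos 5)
Your proof is correct, and its linear-algebraic core is the same as the paper's: your two Parseval computations (for $\EE\|Wx\|_2^2$ and for the gradient bound) are exactly the statement that $V_x=[V_1x,\ldots,V_{\dims^2}x]$ satisfies $V_xV_x^\dagger=\eye_\dims$, which the paper isolates as a separate isometry claim. Where you genuinely diverge is the concentration step: the paper plugs $B=V_x\Pi_\ell$ into the sub-Gaussian concentration bound for $\normtwo{BX}$ (\cite[Theorem 6.3.2]{vershynin2018high}, essentially Hanson--Wright), whereas you use Talagrand's convex-distance inequality for convex $1$-Lipschitz functions on the Rademacher cube. Both are off-the-shelf and both correctly exploit the global Euclidean Lipschitz constant $\opnorm{V_x\Pi_\ell}\le 1$ rather than the coordinate-wise constants, which is, as you rightly emphasize, the whole point (McDiarmid would lose the factor of $\dims$ in the exponent). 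Your route is arguably more elementary for Rademacher variables and avoids invoking sub-Gaussian machinery; the paper's route gives the two-sided tail with prefactor $2$ and explicit constants for free. Two small things to tidy: Talagrand's inequality concentrates around the \emph{median}, so passing to the mean and reducing the prefactor from $C$ to $2$ both cost constant-factor adjustments to $c'$ (your ``absorbing $C$'' remark covers this, but it deserves a line, and the one-sided form you need makes it routine); and note that the paper later uses the \emph{explicit} value $\kappa_1\le 10$ in Corollary~\ref{prop:perturbation-trace-distance}, which your unspecified Talagrand constants would not directly supply, though this does not affect the correctness of the lemma as stated.
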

    \begin{proof}
        Let $\ptb = (\ptb_1, \ldots, \ptb_{\dims^2})\in \R^{\dims^2}$, and $\Pi_\ell\in \R^{\dims^2\times\dims^2}$ be a diagonal matrix with 1 in the first $\ell$ diagonal entries and 0 everywhere else. Then 
        \[
        Wx=\sum_{i=1}^{\ell}\ptb_iV_ix = V_x\Pi_\ell \ptb,
        \]
        where 
        \begin{equation*}
            V_x\eqdef[V_1x, \ldots, V_{\dims^2}x]\in\C^{\dims\times \dims^2}
        \end{equation*}
        which is an isometry, i.e. $V_xV_x^\dagger = \eye_\dims$, as stated in~\cref{claim:isometry} which will be proved at the end of this section. Therefore,
        \[
        \opnorm{V_x}=1, \quad \hsnorm{V_x}^2=\Tr[V_xV_x^\dagger]=\dims.
        \]From this, we can apply concentration for linear transforms of independent sub-Gaussian random variables.
        \begin{theorem}[{~\cite[Theorem 6.3.2]{vershynin2018high}}]
            Let $B\in \C^{m\times n}$ be a fixed $m\times n$ matrix and let $X=(X_1, \ldots, X_n)\in\R^n$ be a random vector with independent, mean zero, unit variance, and sub-Gaussian coordinates with Orlicz-2 norm $\|X_i\|_{\psi_2}\le K$. Then there exists a universal constant $C=\frac{3}{8}$ such that for all $t>0$,
            \[
            \probaOf{|\normtwo{BX}-\hsnorm{B}|>t}\le 2\exp\left\{-\frac{Ct^2}{K^4\opnorm{B}^2}\right\}.
            \]
        \end{theorem}
        \begin{remark}
            The original~\cite[Theorem 6.3.2]{vershynin2018high} was stated for real matrix $B$. However, it is straightforward to extend the argument to complex $B$ by considering $\tilde{B}=\begin{bmatrix}
                \Real(B)\\
                \Img(B)
            \end{bmatrix}$. Then $\opnorm{\tilde{B}}=\opnorm{B}, \hsnorm{\tilde{B}}=\hsnorm{B}$, and $\normtwo{BX}=\normtwo{\tilde{B}X}$. 
        \end{remark}
        Setting $B=V_x\Pi_\ell$, we observe that
        \[
        \opnorm{B}\le \opnorm{V_x}\opnorm{\Pi_\ell}=1, \quad  \hsnorm{B}\le \hsnorm{V_x}=\sqrt{\dims}.
        \]
        Thus, plugging $t=s\sqrt{\dims}$, and noting that $\|\ptb_i\|_{\psi_2}=1/\sqrt{\ln 2}=K$, we have
        \[
        \probaOf{\normtwo{Wx}> (1+s)\sqrt{\dims}}\le \probaOf{\normtwo{B\ptb}>s\sqrt{\dims}+\hsnorm{B}}\le 2\exp\left\{-C\dims(\ln2)^2s^2\right\}.
        \]   
        Setting $c'=C(\ln2)^2=\frac{3(\ln2)^2}{8}$ completes the proof.
    \end{proof}
    We can then proceed to use the $\epsilon$-net argument, which follows closely to \cite[Section 2.3]{tao2023topics}.

    \begin{lemma}[{\cite[Lemma 2.3.2]{tao2023topics}}]
        Let $\Sigma$ be a maximal $1/2$-net of the unitary sphere, i.e., a maximal set of points that are separated from each other by at least $1/2$. Then for any matrix $M\in\C^{\dims\times\dims}$ and $\lambda>0$,
        \[
         \probaOf{\opnorm{M}>\lambda}\le \sum_{y\in\Sigma}\probaOf{\normtwo{My}>\lambda/2}.
        \]
    \end{lemma}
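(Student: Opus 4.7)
The plan is to convert the continuous supremum defining $\opnorm{M}$ into a maximum over the finite set $\Sigma$ via the standard covering/approximation trick, and then close the argument with a union bound. First I would verify that any maximal $1/2$-separated set on the unit sphere is automatically a $1/2$-net: if some unit vector $x$ had distance strictly greater than $1/2$ from every element of $\Sigma$, then $\Sigma \cup \{x\}$ would remain $1/2$-separated, contradicting the maximality of $\Sigma$. Hence for every unit $x$ there exists $y \in \Sigma$ with $\normtwo{x-y} \le 1/2$.

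Next I would establish the deterministic inequality $\opnorm{M} \le 2 \max_{y \in \Sigma} \normtwo{My}$. For any unit vector $x$, pick $y \in \Sigma$ with $\normtwo{x-y} \le 1/2$ and apply the triangle inequality together with the definition of the operator norm:
\[
\normtwo{Mx} \;\le\; \normtwo{My} + \normtwo{M(x-y)} \;\le\; \max_{y \in \Sigma} \normtwo{My} + \tfrac{1}{2}\opnorm{M}.
\]
Taking the supremum over unit $x$ on the left (which equals $\opnorm{M}$) and rearranging yields $\opnorm{M} \le 2 \max_{y \in \Sigma} \normtwo{My}$.

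Finally, this deterministic bound gives the set inclusion
\[
\{\opnorm{M} > \lambda\} \;\subseteq\; \bigl\{\max_{y \in \Sigma} \normtwo{My} > \lambda/2\bigr\} \;=\; \bigcup_{y \in \Sigma} \{\normtwo{My} > \lambda/2\},
\]
and a union bound across the finite set $\Sigma$ delivers the stated probability inequality. There is essentially no obstacle here: the argument is a textbook $\epsilon$-net estimate, and the only subtle point is the maximality-implies-covering observation in the first paragraph, which is needed to justify that $\Sigma$ actually covers the unit sphere at scale $1/2$.
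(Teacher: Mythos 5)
Your proof is correct and is exactly the standard $\epsilon$-net argument (maximality implies covering, the deterministic bound $\opnorm{M}\le 2\max_{y\in\Sigma}\normtwo{My}$, then a union bound); the paper itself does not reprove this lemma but simply cites \cite[Lemma 2.3.2]{tao2023topics}, whose proof is the same argument you give.
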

    By standard volume packing argument, the size of $\Sigma$ is at most $\exp(O(\dims))$, 
    \begin{lemma}[{\cite[Lemma 2.3.4]{tao2023topics}}]
    \label{lem:packing-argument}
        Let $\epsilon\in(0, 1)$ and let $\Sigma$ be an $\epsilon$-net of the unit sphere. Then $|\Sigma|\le (C'/\epsilon)^\dims$ where $C'=3$.
    \end{lemma}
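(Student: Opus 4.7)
The plan is a classical volume (packing) argument in $\R^{\dims}$. First, I would reduce to the case where $\Sigma$ is an $\epsilon$-separated subset of the unit sphere, meaning any two distinct points of $\Sigma$ are at Euclidean distance at least $\epsilon$. This is exactly the situation in which the preceding Lemma (Tao 2.3.2) invokes this bound, where $\Sigma$ is taken to be a \emph{maximal} $\epsilon$-separated set (which is automatically an $\epsilon$-net of the sphere, since otherwise one could add another well-separated point, contradicting maximality).

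Second, for each $x \in \Sigma$ I would form the open ball $B(x,\epsilon/2)$ of radius $\epsilon/2$ around $x$. By the triangle inequality and the $\epsilon$-separation, these balls are pairwise disjoint. Since $\|x\|=1$, every point in $B(x,\epsilon/2)$ has norm at most $1+\epsilon/2$, so all such balls sit inside the larger ball $B(0,\,1+\epsilon/2)$.

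Third, I would compare $\dims$-dimensional Lebesgue volumes. Denoting the volume of the unit ball in $\R^{\dims}$ by $\omega_{\dims}$, disjointness and containment yield
$$ |\Sigma|\cdot \omega_{\dims}\,(\epsilon/2)^{\dims} \;\le\; \omega_{\dims}\,(1+\epsilon/2)^{\dims}, $$
which simplifies to
$$ |\Sigma| \;\le\; \bigl(1+2/\epsilon\bigr)^{\dims} \;\le\; (3/\epsilon)^{\dims} $$
for $\epsilon \in (0,1)$, so we may take $C'=3$. (If instead the sphere sits in $\C^{\dims} \cong \R^{2\dims}$, the same argument with real dimension $2\dims$ gives $|\Sigma|\le (3/\epsilon)^{2\dims}$, which is still $\exp(O(\dims))$ and suffices for the ambient use in the proof of~\cref{thm:rand-mat-opnorm-concentration}.)

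There is no real obstacle here, since the lemma is a textbook fact. The only subtle point to verify is the implicit convention about what "$\epsilon$-net" means: the bound cannot hold for arbitrary covering sets (the entire sphere is trivially an $\epsilon$-net), so one must read the statement as applying to a maximal $\epsilon$-separated set, which is the regime the rest of the proof relies on.
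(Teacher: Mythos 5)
Your proof is correct and is exactly the standard volume-packing argument behind the cited Lemma 2.3.4 of \cite{tao2023topics}, which the paper invokes without reproving: maximal $\epsilon$-separated set, disjoint balls of radius $\epsilon/2$ inside the ball of radius $1+\epsilon/2$, and a volume comparison giving $(1+2/\epsilon)^{\dims}\le(3/\epsilon)^{\dims}$. Your two side remarks --- that the bound only holds under the maximal-$\epsilon$-separated-set reading of ``$\epsilon$-net'' (the entire sphere being trivially a covering net), and that the unit sphere of $\C^{\dims}$ has real dimension $2\dims$ so the exponent should really be $2\dims$ --- are both accurate and harmless for the application, since $\exp(O(\dims))$ is all that the proof of \cref{thm:rand-mat-opnorm-concentration} requires.
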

    Thus with $c'$ defined in~\cref{lem:fixed-vec-opnorm-concentration} and $C'$ defined in~\cref{lem:packing-argument} we conclude that
    \[
    \probaOf{\opnorm{W}>2(1+s)\sqrt{\dims}}\le 2(2C')^\dims \exp\{-c's^2\dims\}=2\exp\left\{-(c's^2-\ln(2C'))\dims\right\}.
    \]
    Thus choosing $s$ sufficiently large, we can guarantee that the tail probability decays exponentially in $\dims$. Specifically, let $\alpha>0$ and $s^2=\frac{\alpha+\ln(2C')}{c'}$, then we have
    \[
    \probaOf{\opnorm{W}>2(1+s)\sqrt{\dims}}\le 2e^{-\alpha\dims}.
    \]
    Setting $\cop_\alpha = 2(1+s)=2\Paren{1+\sqrt{\frac{\alpha+\ln(2C')}{c'}}}$ proves the theorem. In particular, $\kappa_1\le 10$ when substituting the values of $c'$ and $C'$.
\end{proof}

We end this section with the proof of the isometry claim.
\begin{claim}
 \label{claim:isometry}
    Let $V_1, \ldots, V_{\dims^2}$ be an orthonormal basis of $\C^{\dims\times \dims}$ and $x\in\C^\dims$ be a unit vector. Then $V_x\eqdef[V_1x, \ldots, V_{\dims^2}x]\in\C^{\dims\times \dims^2}$ is an isometry: $V_xV_x^\dagger = \eye_\dims$.
\end{claim}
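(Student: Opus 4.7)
I would prove the stronger statement that for any orthonormal basis $\{V_i\}_{i=1}^{\dims^2}$ of $\C^{\dims\times\dims}$ and any $A\in\C^{\dims\times\dims}$,
\[
\sum_{i=1}^{\dims^2} V_i A V_i^\dagger \;=\; \Tr[A]\,\eye_\dims.
\]
Applying this with $A=xx^\dagger$ (for which $\Tr[A]=\|x\|^2=1$) immediately yields $V_xV_x^\dagger=\sum_i (V_ix)(V_ix)^\dagger=\eye_\dims$, which is the claim.

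The cleanest route is via vectorization. Since the Hilbert-Schmidt inner product on $\C^{\dims\times\dims}$ corresponds under $\VecOp$ to the standard inner product on $\C^{\dims^2}$, the vectors $\{\VecOp(V_i)\}_{i=1}^{\dims^2}$ form an orthonormal basis of $\C^{\dims^2}$. Therefore we have the resolution of identity
\[
\sum_{i=1}^{\dims^2}\VecOp(V_i)\VecOp(V_i)^\dagger = \eye_{\dims^2}.
\]
Next I would verify the bookkeeping identity $V_i x=(x^T\otimes \eye_\dims)\VecOp(V_i)$ directly from the paper's convention $\VecOp(\qoutprod{a}{b})=\qbit{b}\otimes\qbit{a}$, extended by linearity. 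Taking the adjoint gives $(V_ix)^\dagger=\VecOp(V_i)^\dagger(\bar x\otimes \eye_\dims)$. Summing the outer products and using the resolution of identity above,
\[
\sum_{i=1}^{\dims^2}(V_ix)(V_ix)^\dagger=(x^T\otimes\eye_\dims)\Bigl(\sum_{i}\VecOp(V_i)\VecOp(V_i)^\dagger\Bigr)(\bar x\otimes\eye_\dims)=(x^T\bar x)\otimes\eye_\dims=\eye_\dims,
\]
since $x^T\bar x=\|x\|^2=1$ and $(A\otimes B)(C\otimes D)=AC\otimes BD$.

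As a sanity check (and an alternative proof if one prefers to avoid $\VecOp$), one can verify the identity first for the canonical basis $\{\qoutprod{j}{k}\}_{j,k}$ by direct computation ($\sum_{jk}\qoutprod{j}{k}xx^\dagger\qoutprod{k}{j}=\sum_{jk}|x_k|^2\qproj{j}=\eye_\dims$) and then transfer to an arbitrary orthonormal basis via the unitary change of basis $V_i=\sum_{(j,k)}U_{i,(jk)}\qoutprod{j}{k}$ whose unitarity kills the cross terms. I expect no substantive obstacle here; the only subtlety is keeping the vectorization convention (which places the column index first in the tensor product) consistent throughout.
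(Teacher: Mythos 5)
Your proof is correct and rests on the same key fact as the paper's: that $[\VecOp(V_1),\ldots,\VecOp(V_{\dims^2})]$ is a unitary $\dims^2\times\dims^2$ matrix, equivalently the resolution of identity $\sum_{i}\VecOp(V_i)\VecOp(V_i)^\dagger=\eye_{\dims^2}$. The paper extracts the needed block identity $\sum_i (V_i^{(k)})^\dagger V_i^{(l)}=\delta_{kl}\eye_\dims$ by explicit row indexing, whereas you do the same bookkeeping more cleanly via $V_ix=(x^T\otimes\eye_\dims)\VecOp(V_i)$; this is a presentational difference only.
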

\begin{proof}
 Let $V_x\supparen{k}$ be the $k$th row of $V_x$ written as row vector. It suffices to prove that
 \[
 V_x\supparen{k}(V_x\supparen{l})^{\dagger}=\delta_{kl}
 \]
 Let $V_i^{(k)}$ be the $k$th row of $V_i$, written as a row vector. Then the $k$th element of $V_ix$ is
        \begin{equation*}
            \quad v_i\supparen{k}\eqdef V_i^{(k)}x.
        \end{equation*}
     Since $V_1, \ldots, V_{\dims^2}$ are orthonormal, we know that
        \[
        V\eqdef[\VecOp(V_1), \ldots, \VecOp(V_{\dims^2})]
        \]
        is a unitary matrix in $\C^{\dims^2\times \dims^2}$. Let $V^{j}$ be the $j$th row of $V$, then because $V$ is unitary, the vector dot product $\hdotprod{V^{j}}{ V^{i}}=\delta_{ij}$. Let 
        \[
        V^{(k)}=[(V^{k})^{\dagger}, (V^{k+\dims})^{\dagger}, \ldots(V^{k+\dims(j-1)})^{\dagger}, \ldots, (V^{k+\dims(d-1)})^{\dagger}]^\dagger
        \]
        which picks out the $k$th row of all $V_1, \ldots, V_{\dims^2}$. Then, we have
        \[
        V^{(k)}=[(V_1^{(k)})^\top, \ldots, (V_{\dims^2}^{(k)})^\top]. 
        \]
        Thus,
        \[
        \sum_{i=1}^{\dims^2}(V_i^{(k)})^{\dagger}V_i^{(k)}=\overline{V^{(k)}(V^{(k)})^{\dagger}}=\eye_\dims,
        \]
        and for $k\ne l$,
        \[
        \sum_{i=1}^{\dims^2}(V_i^{(k)})^{\dagger}V_i^{(l)}=\overline{V^{(k)}(V^{(l)})^{\dagger}}=0.
        \]
        Therefore, 
        \[
         V_x\supparen{k}(V_x\supparen{l})^{\dagger}=\sum_{i=1}^{\dims^2}v_i\supparen{k}(v_i\supparen{l})^\dagger=\sum_{i=1}^{\dims^2}  x^{\dagger}(V_i^{(l)})^{\dagger}V_i^{(k)}x=x^\dagger\delta_{kl}\eye_\dims x=\delta_{kl},
        \]
        exactly as desired, completing the proof.
\end{proof}

\subsection{Proof of~\cref{thm:chi-square-upper-bound}
}
\label{app:thm:chi-square-upper-bound}
We first recall the theorem. 
\chisqub*

\begin{proof}

    First, we claim that due to the exponentially small probability of the bad event $\Delta_z+\qmm\notin \mathcal{P}_\eps$ as stated in~\cref{prop:perturbation-trace-distance}, we can consider $\Delta_z$ instead of the normalized perturbation $\barDelta_z$. The claim is proved at the end of this section.
    \begin{claim} 
    \label{claim:projected-chi-square-ub}
    Let $\barDelta_z$ and $\Delta_z$ be defined in~\cref{def:perturbation}, then
        $$\expectDistrOf{z, z'}{\exp\left\{\ns\dims \hdotprod{\barDelta_{z'}}{\bar{\mathcal{H}}(\barDelta_z)}\right\}}\le \expectDistrOf{z, z'}{\exp\left\{\ns\dims \hdotprod{\Delta_{z'}}{\bar{\mathcal{H}}(\Delta_z)}\right\}}+\frac{4}{e^\dims}.$$       
    \end{claim}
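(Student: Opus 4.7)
The plan is to decompose the expectation over $(z, z')$ according to the good event
\[
E \eqdef \{\opnorm{\Delta_z} \le 1/\dims\} \cap \{\opnorm{\Delta_{z'}} \le 1/\dims\}.
\]
On $E$, the truncation in~\cref{def:perturbation} has no effect, so $\barDelta_z = \Delta_z$ and $\barDelta_{z'} = \Delta_{z'}$, and the two integrands coincide pointwise. The complementary event is rare: \cref{thm:rand-mat-opnorm-concentration} with $\alpha = 1$ applied to each of $\Delta_z$ and $\Delta_{z'}$, exactly as in the proof of~\cref{prop:perturbation-trace-distance}, together with a union bound, gives $\probaOf{E^c} \le 4 e^{-\dims}$. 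This is exactly what will supply the additive $4/e^\dims$ slack at the end.

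On $E^c$ I will establish the \emph{universal} pointwise bound $e^{X_\barDelta} - e^{X_\Delta} \le 1$, where
\[
X_\barDelta \eqdef \ns \dims \hdotprod{\barDelta_{z'}}{\avgLuders(\barDelta_z)}, \qquad X_\Delta \eqdef \ns \dims \hdotprod{\Delta_{z'}}{\avgLuders(\Delta_z)}.
\]
The key structural observation is that the truncation is simply a nonnegative scalar rescaling: setting $\alpha_z \eqdef \min\{1, 1/(\dims \opnorm{\Delta_z})\} \in (0, 1]$ we have $\barDelta_z = \alpha_z \Delta_z$, and by bilinearity of the Hilbert--Schmidt inner product, $X_\barDelta = \alpha_z \alpha_{z'} X_\Delta$. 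Consequently $X_\barDelta$ has the same sign as $X_\Delta$ and $|X_\barDelta| \le |X_\Delta|$. A short case split finishes: if $X_\Delta \ge 0$, then $X_\barDelta \in [0, X_\Delta]$ and $e^{X_\barDelta} - e^{X_\Delta} \le 0$; if $X_\Delta < 0$, then $X_\barDelta \in [X_\Delta, 0)$ so $e^{X_\barDelta} \le 1$ and $e^{X_\barDelta} - e^{X_\Delta} \le 1 - e^{X_\Delta} \le 1$.

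Putting the pieces together, $(e^{X_\barDelta} - e^{X_\Delta}) \indic{E} = 0$ and $(e^{X_\barDelta} - e^{X_\Delta}) \indic{E^c} \le \indic{E^c}$, so taking expectation yields
\[
\expectDistrOf{z, z'}{e^{X_\barDelta}} - \expectDistrOf{z, z'}{e^{X_\Delta}} \le \probaOf{E^c} \le \frac{4}{e^\dims}.
\]
The main conceptual obstacle is that the obvious bound $|X_\barDelta| \le \ns \dims \hsnorm{\barDelta_z} \hsnorm{\barDelta_{z'}} \le \ns \cd^2 \eps^2$ combined with $\probaOf{E^c} \le 4 e^{-\dims}$ would yield only $4 e^{\ns \cd^2 \eps^2 - \dims}$, which blows up in the regime $\ns \asymp \dims^2/\eps^2$ of interest. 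The argument crucially exploits the sign-matching relation $X_\barDelta = \alpha_z \alpha_{z'} X_\Delta$ with $\alpha_z, \alpha_{z'} \ge 0$ to get a universal bound of $1$ on the pointwise difference, independent of $\ns$ and $\eps$.
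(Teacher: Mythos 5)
Your proof is correct and follows essentially the same route as the paper's: both exploit that the truncation is a scalar rescaling $\barDelta_z = a_z\Delta_z$ with $a_z\in(0,1]$, use the concentration bound from~\cref{prop:perturbation-trace-distance} plus a union bound to get the $4e^{-\dims}$ term, and handle the sign of the inner product to control the integrand on the bad event. The only difference is cosmetic bookkeeping: the paper conditions on the event $\{f<0\}\cap\{a_za_{z'}<1\}$ and bounds $\exp\{a_za_{z'}f\}\le 1$ there, while you condition on $\{a_za_{z'}<1\}$ and bound the pointwise \emph{difference} of exponentials by $1$ via the sign case split — both yield the identical final estimate.
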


    We then apply a standard result on the moment generating function of Radamacher chaos.
    \begin{lemma}[{\cite[Claim IV.17]{AcharyaCT19}}]
    \label{lem:radamacher-mgf}
        Let $\theta, \theta'$ be two independent random vectors distributed uniformly over $\{-1, 1\}^\ell$. Then for any positive semi-definite real matrix $H$, 
        \[
        \log\expectDistrOf{\theta,\theta'}{\exp\{\lambda \theta^\top H\theta'\}}\le\frac{\lambda^2}{2}\frac{\|H\|_{HS}^2}{1-4\lambda^2\opnorm{H}^2}, \quad\text{for }0\le \lambda<\frac{1}{\opnorm{H}}.
        \]
    \end{lemma}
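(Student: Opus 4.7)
The plan is to reduce the bilinear Rademacher form $\theta^\top H \theta'$ to a quadratic form in Gaussian variables, whose MGF has a well-known closed determinantal expression, and then bound the determinant via a Neumann-series-style estimate. The two key tools are the scalar sub-Gaussian inequality $\cosh(x)\le e^{x^2/2}$ and the Gaussian integral identity $\exp\bigl(\tfrac{1}{2}\|v\|^2\bigr)=\EE_{g\sim N(0,I_\ell)}[\exp(g^\top v)]$.

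First I would condition on $\theta$ and use independence of the coordinates of $\theta'$ together with $\cosh(t)\le e^{t^2/2}$ coordinate-wise:
\[
\EE_{\theta'}\bigl[\exp(\lambda\theta^\top H\theta')\,\big|\,\theta\bigr]=\prod_{j}\cosh\bigl(\lambda(H\theta)_j\bigr)\le \exp\!\left(\tfrac{\lambda^2}{2}\|H\theta\|^2\right)=\exp\!\left(\tfrac{\lambda^2}{2}\theta^\top H^2\theta\right),
\]
where symmetry of $H$ (PSD $\Rightarrow$ symmetric) is used in the last equality. To bound the outer expectation $\EE_\theta[\exp(\lambda^2\theta^\top H^2\theta/2)]$ I would then apply the Gaussian identity with $v=\lambda H\theta$ to rewrite the integrand as $\EE_g[\exp(\lambda g^\top H\theta)]$ for $g\sim N(0,I_\ell)$ independent of $\theta$. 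Fubini and a second application of the $\cosh$-bound, this time with $\theta$ playing the role of $\theta'$ and $\lambda H^\top g$ playing the role of the linear coefficient, yield $\EE_\theta[\exp(\lambda^2\theta^\top H^2\theta/2)]\le \EE_g[\exp(\lambda^2 g^\top H^2 g/2)]$.

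The right-hand side is the MGF of a Gaussian quadratic form with PSD matrix $H^2$, so it equals $\det(I-\lambda^2 H^2)^{-1/2}$ exactly when $\lambda^2\opnorm{H}^2<1$, i.e.\ over the stated range $\lambda<1/\opnorm{H}$. Letting $\{\sigma_i\}$ denote the eigenvalues of $H$ and using $-\log(1-x)\le x/(1-x)$ for $x\in[0,1)$,
\[
\log\det(I-\lambda^2 H^2)^{-1/2}=-\tfrac{1}{2}\sum_i\log(1-\lambda^2\sigma_i^2)\le \tfrac{1}{2}\sum_i\frac{\lambda^2\sigma_i^2}{1-\lambda^2\sigma_i^2}\le \frac{\lambda^2\hsnorm{H}^2}{2(1-\lambda^2\opnorm{H}^2)},
\]
bounding each denominator by $1-\lambda^2\opnorm{H}^2$. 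The claimed inequality then follows from $1-4\lambda^2\opnorm{H}^2\le 1-\lambda^2\opnorm{H}^2$; when $4\lambda^2\opnorm{H}^2\ge 1$ the stated denominator is non-positive and the bound is vacuous.

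The only mild obstacle is choosing the right reduction: a direct attack on the Rademacher quadratic form $\EE_\theta[\exp(\mu\theta^\top A\theta)]$ via decoupling or Hanson-Wright introduces off-diagonal chaos terms and suboptimal constants. The Gaussian-linearization step neatly converts the Rademacher quadratic form into a Gaussian one at the cost of a single extra sub-Gaussian comparison, after which the determinantal bound and the Neumann-series estimate finish the argument automatically.
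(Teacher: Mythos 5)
The paper does not prove this lemma itself; it is imported verbatim from \cite[Claim IV.17]{AcharyaCT19}, so there is no in-paper argument to compare against. Your proof is correct and is essentially the standard derivation: the coordinatewise $\cosh(x)\le e^{x^2/2}$ bound, the Gaussian linearization $\exp(\tfrac12\|v\|^2)=\EE_g[e^{g^\top v}]$ to pass from the Rademacher quadratic form to a Gaussian one, the exact determinantal MGF $\det(I-\lambda^2H^2)^{-1/2}$, and the eigenvalue-wise bound $-\log(1-x)\le x/(1-x)$. All steps check out (symmetry of $H$ justifies $(H^\top\theta)_j=(H\theta)_j$ and $\|H\theta\|^2=\theta^\top H^2\theta$; Tonelli justifies the interchange; the condition $\lambda<1/\opnorm{H}$ is exactly what makes the Gaussian MGF finite). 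One point deserves sharpening: your closing remark that the bound is ``vacuous'' when $4\lambda^2\opnorm{H}^2\ge 1$ is not quite right. For $\lambda\in\bigl(1/(2\opnorm{H}),\,1/\opnorm{H}\bigr)$ the stated right-hand side is \emph{negative}, while the left-hand side is nonnegative by Jensen (since $\EE[\theta^\top H\theta']=0$), so the inequality as literally quoted is false on that subrange and must be read with the convention that a nonpositive denominator yields $+\infty$. Your intermediate estimate $\frac{\lambda^2\hsnorm{H}^2}{2(1-\lambda^2\opnorm{H}^2)}$ is in fact the correct (and stronger) bound valid on the entire range $0\le\lambda<1/\opnorm{H}$; the quoted form with the factor $4$ only follows when $4\lambda^2\opnorm{H}^2<1$, which is the regime in which the paper actually invokes the lemma ($\lambda\opnorm{H}<1/3$ in the proof of Theorem~\ref{thm:chi-square-upper-bound}), so nothing downstream is affected.
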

    We now evaluate the inner product. Recall the Choi representation of $\avgLuders$ is $\avgChoi=\frac{1}{\ns}\sum_{i=1}^{\ns}\Choi_i$.  Note that $\Choi_i$ and $\avgChoi$ are p.s.d. Hermitian matrices, and the eigenvalues exactly match those of $\Luders_i$ and $\avgLuders$ due to the homomorphism between $\C^{\dims^2}$ and $\C^{\dims\times\dims}$.
    
    Setting $V=[\VecOp(V_1), \ldots, \VecOp(V_\ell)]\in \C^{\dims^2\times \ell}$, we have $\VecOp(\Delta_z)=\frac{\cd\eps}{\sqrt{\dims\ell}}Vz$. Due to the homomorphism,
    \begin{align*}
         \hdotprod{\Delta_{\ptb}}{\avgLuders(\Delta_{\ptb'})}&=\VecOp(\Delta_{\ptb'})^\dagger \avgChoi\VecOp(\Delta_{\ptb'})\\
         &=\frac{\cd^2\eps^2}{\dims\ell}\ptb^\dagger V^\dagger\avgChoi V\ptb'.
    \end{align*}
    We now show that  $H\eqdef V^\dagger\avgChoi V$ is a real matrix when each $V_i$ is a Hermitian matrix. First note that $\avgChoi V=[\VecOp(\avgLuders(V_1)), \ldots, \VecOp(\avgLuders(V_\ell))]$. Therefore the $i,j$ the element in $H$ is
    \begin{equation}
    \label{equ:H-element}
        H_{ij}=\VecOp(V_i)^\dagger\VecOp(\avgLuders(V_j))=\hdotprod{V_i}{\avgLuders(V_j)}\in \R.
    \end{equation}

    We use the fact that $\avgLuders$ is Hermiticity preserving and thus $\avgLuders(V_j)$ is Hermitian. Since $\Herm{\dims}$ is a real Hilbert space, the inner product is a real number.

    We then set $\lambda=\frac{\cd^2\ns\eps^2}{\dims\ell}$ and $H=V^\dagger \avgChoi V$ in~\cref{lem:radamacher-mgf}. Then $\opnorm{H}\le \opnorm{\avgChoi}=\opnorm{\avgLuders}\le 1$ due to~\cref{lem:supop-eigendecomposition}. Thus for $\ns<\frac{\ell}{3\cd^2\eps^2}$, we have
    
    $$\lambda\opnorm{H}\le \lambda<\frac{1}{3}\implies \frac{\lambda^2}{2(1-4\lambda^2\opnorm{H}^2)}\le\frac{9\lambda^2}{10}<\lambda^2. $$
    Hence, applying Lemma~\ref{lem:radamacher-mgf} 
    \begin{align*}
        \expectDistrOf{z, z'}{\exp\left\{\frac{\ns\eps^2}{\ell}z^\top Hz'\right\}}\le\exp\{\lambda^2\hsnorm{H}^2\}= \exp\left\{\frac{\cd^4\ns^2\eps^4}{\dims^2\ell^2}\hsnorm{H}^2\right\}.
    \end{align*}
    Combining with~\cref{claim:projected-chi-square-ub} proves~\cref{thm:chi-square-upper-bound}.     
\end{proof}

\begin{proof}[Proof of~\cref{claim:projected-chi-square-ub}]    
        Note that $\barDelta_z=a_z\Delta_z$, where
    \[
    a_z\eqdef\min\left\{1, \frac{1}{\dims\opnorm{\Delta_z}}\right\}\in[0, 1].
    \]
    Therefore,
    \[
    \hdotprod{\barDelta_{z'}}{\bar{\mathcal{H}}(\barDelta_z)}=a_za_{z'}\hdotprod{\Delta_{z'}}{\bar{\mathcal{H}}(\Delta_z)}.
    \]
        As a short hand let $f(z, z')=\ns\dims\hdotprod{\Delta_{z'}}{\bar{\mathcal{H}}(\Delta_z)}$. Denote event $E$ as $f(z,z')<0$ and $a_za_{z'}< 1$. When this event occors, $\exp\{a_za_{z'}f(z,z')\}\le 1$. Using~\cref{prop:perturbation-trace-distance}, let $\delta= 2\exp(-\dims)$,
        \[
        \probaOf{a_{z}<1}\le \delta.
        \]
        Thus, by the union bound,
        \[
        \probaOf{E}= \probaOf{a_{z}a_{z'}<1}=\probaOf{a_{z}<1 \text{ or } a_{z'}<1}\le 2\delta.
        \]
        Note that $E^c$ denotes the event that $f(z,z')\ge 0$ or $a_za_z'=1$. When this occurs, $a_za_{z'}f(z,z')\le f(z,z')$. Thus,
        \begin{align*}
            &\quad \expectDistrOf{z, z'}{\exp\left\{a_za_{z'}f(z,z')\right\}}\\
            &=\expectCondDistrOf{z,z'}{\exp\left\{ a_za_{z'}f(z,z')\right\}}{E^c}\probaOf{E^c} +\expectCondDistrOf{z,z'}{\exp\left\{ a_za_{z'}f(z,z')\right\}}{E}\probaOf{E}\\
            &\le \expectCondDistrOf{z,z'}{\exp\left\{f(z,z')\right\}}{E^c}\probaOf{E^c}+2\delta\\
            &\le  \expectDistrOf{z,z'}{\exp\left\{f(z,z')\right\}}+2\delta,
        \end{align*}
        as desired. The second-to-last inequality uses $a_zf_z'f(z, z')\le 0$ when event $E$ happens, and the final inequality uses $\exp\{f(z,z')\}>0$ and therefore
        \begin{align*}
            \expectDistrOf{z,z'}{\exp\left\{f(z,z')\right\}}&=\expectCondDistrOf{z,z'}{\exp\left\{f(z,z')\right\}}{E^c}\probaOf{E^c}+\expectCondDistrOf{z,z'}{\exp\left\{f(z,z')\right\}}{E}\probaOf{E}\\
            &\ge \expectCondDistrOf{z,z'}{\exp\left\{f(z,z')\right\}}{E^c}\probaOf{E^c}.
        \end{align*}
        Plugging in the definition of $a_z$ and $f(z,z')$ completes the proof.
    \end{proof}

\subsection{Proof of~\cref{obs:hs-norm-luders-projection}
}
\label{app:obs:hs-norm-luders-projection}
Recall the statement of the observation.
\hsnnorm*

\begin{proof}
 Since $V_1, \ldots, V_{\dims^2}$ is an orthonormal basis, we have $V^\dagger V=\eye_{\ell}$, and thus $V$ is an isometry and $\opnorm{V}=\opnorm{V^\dagger}= 1$. Using $\hsnorm{AB}\le \opnorm{A}\hsnorm{B}$, we obtain
    \[
    \hsnorm{V^\dagger\avgChoi V}\le \hsnorm{\avgChoi}=\hsnorm{\avgLuders}.
    \]
    When $\hbasis=\hbasis_{\avgLuders}$, we note that $V^\dagger\avgChoi V=D_\ell\eqdef\diag\{\lambda_1, \ldots, \lambda_\ell\}$, and  $\hsnorm{D_\ell}^2=\sum_{i=1}^\ell \lambda_i^2$. Indeed, as derived in~\eqref{equ:H-element},
    \begin{align*}
        H_{ij}=\hdotprod{V_i}{\avgLuders(V_j)}=\lambda_j\hdotprod{V_i}{V_j}=\lambda_j\delta_{ij}.
    \end{align*}
    Therefore, 
    \[
    \hsnorm{H}^2\le \begin{cases}
        \hsnorm{D_\ell}^2=\sum_{i=1}^{\ell}\lambda_i^2, &\hbasis=\hbasis_{\avgLuders},\\
        \hsnorm{\avgLuders}^2,&\text{otherwise}.
    \end{cases}\qedhere
    \]
    
\end{proof}

\end{document}